\keywords{automata minimization, functor automata, minimization of
  subsequential transducers, Brzozowski's minimization algorithm}
\tikzset{negated/.style={
    decoration={markings, mark= at position 0.5 with { \node[transform
        shape,xscale=.8,yscale=.4] (tempnode) {$\slash$}; } },
    postaction={decorate} } } \usepackage{amsmath}
\theoremstyle{plain} %\crefname{satz}{Satz}{S\"atze}
\knowledge {\Auto }{notion}
\knowledge {\defcatIword }{math notion} \knowledge {\defcatOword }{math
  notion} \knowledge {\defcatIMon }{math notion} \knowledge {\defcatOMon
\knowledge {\catIword }{math notion} \knowledge {\catOword }{math
  notion} \knowledge {\catIMon }{math notion} \knowledge {\catOMon
\knowledge {\objectLeft }{math notion} \knowledge {\objectRight }{math
  notion} \knowledge {\symbLeft }{math notion} \knowledge {\symbRight
\knowledge {\objectCenter }{math notion}
\knowledge {\defobjectLeft }{math notion} \knowledge {\defobjectRight }{math
  notion} \knowledge {\defsymbLeft }{math notion} \knowledge {\defsymbRight
\knowledge {\defobjectCenter }{math notion}
\knowledge\catReach{math notion} \knowledge\catObs{math notion}
\knowledge {\Min }{math notion} \knowledge {\Obs }{math notion}
\knowledge {\Reach }{math notion}
\knowledge \EpiAut{math notion} \knowledge \MonoAut{math notion}
\knowledge {\EpiAutL }{math notion} \knowledge {\MonoAutL }{math
  notion}
\knowledge {\semTrans }{math notion}
\knowledge {\monadTrans}{math notion} \knowledge {\KlTrans}{math
  notion} \knowledge {\EMT}{math notion}
\knowledge {\FreeTrans}{math notion}
\knowledge {\UTrans}{math notion}
\knowledge {\UTransEM}{math notion}
\knowledge {\FreeTransEM}{math notion}
\knowledge{\bijFinalTrans}{math notion}
\knowledge {\botfun }{math notion}
\knowledge {\lcp }{math
  notion}
\knowledge {\Irr }{math notion}
\knowledge {\red }{math notion}
\knowledge {\suff }{math notion} \knowledge {\pstar}{math notion}
\knowledge {\FPow }{math notion} \knowledge {\UPow }{math notion}
\knowledge {\UPowop }{math notion} \knowledge {\FPowop }{math notion}
\knowledge {\EpiKlT }{math notion} \knowledge {\MonoKlT }{math notion}
\knowledge {\determinize }{math notion} \knowledge {\transpose }{math
  notion} \knowledge {\codeterminize }{math notion} \knowledge
\knowledge {\wmorph }{math notion} \knowledge {\wfinmorph }{math
  notion} \knowledge {\uvmorph }{math notion} \knowledge {\uvfinmorph
\knowledge {\Syn }{math notion}
\newrobustcmd\defcatIword{\kl[\defcatIword]{\mathcal{I}_{\mathsf{word}}}}
\newrobustcmd\defcatOword{\kl[\defcatOword]{\mathcal{O}_{\mathsf{word}}}}
\newrobustcmd\cat[1]{\mathcal{#1}} \newrobustcmd\catA{\cat A}
\newrobustcmd\catC{\cat C} \newrobustcmd\catD{\cat D}
\newrobustcmd\catE{\cat E} \newrobustcmd\catI{\cat I}
\newrobustcmd\catIMon{\kl[\catIMon]{\mathcal{I}_{\mathsf{Mon}}}}
\newrobustcmd\catOMon{\kl[\catOMon]{\mathcal{O}_{\mathsf{Mon}}}}
\newrobustcmd\catIword{\kl[\catIword]{\mathcal{I}_{\mathsf{word}}}}
\newrobustcmd\catOword{\kl[\catOword]{\mathcal{O}_{\mathsf{word}}}}
\newrobustcmd\catO{\cat O} \newrobustcmd\catS{\cat S}
\newrobustcmd\catK{\cat K}
\newrobustcmd\op{\mathit{op}}
\newrobustcmd\Auto{\kl[\Auto]{\mathsf{Auto}}}
\newrobustcmd\catAutoL{\Auto(\langL)}
\newrobustcmd\catAutoLC{\Auto(\langL_\catC)}
\newrobustcmd\catAutoLD{\Auto(\langL_\catD)}
\newrobustcmd\catAutoLSet{\texorpdfstring{\Auto(\langLSet)}{Auto(LSet)}}
\newrobustcmd\catAutoLSMod{\Auto(\langL_{\SMod})}
\newrobustcmd\catAutoLSetop{\Auto(\langLSetop)}
\newrobustcmd\catAutoLSModop{\Auto(\langL_{\SMod^{\mathit{op}}})}
\newrobustcmd\catAutoLRel{\Auto(\langL_\Rel)}
\newrobustcmd\catAutoLRelop{\Auto(\langL_\Relop)}
\newrobustcmd\catAutoLSetrev{\Auto(\langL_\Set^\inv)}
\newrobustcmd\Set{\mathsf{Set}}
\newrobustcmd\SMod{S\textrm{-}\mathsf{Mod}}
\newrobustcmd\KMod{K\textrm{-}\mathsf{Mod}}
\newrobustcmd\Setop{\mathsf{Set^\op}} \newrobustcmd\Rel{\mathsf{Rel}}
\newrobustcmd\Relop{\mathsf{Rel^\op}}
\newrobustcmd\lang[1]{\mathcal{#1}}
\newrobustcmd\langL{\lang L}
\newrobustcmd\langLSet{\kl[\langLSet]{\lang L_{\Set}}}
\newrobustcmd\langLSetop{\kl[\langLSetop]{\lang L_{\Setop}}}
\newrobustcmd\langLRel{\kl[\langLRel]{\lang L_{\Rel}}}
\newrobustcmd\langLRelop{\kl[\langLRelop]{\lang L_{\Relop}}}
\newrobustcmd\langLRelrev{\kl[\langLRelrev]{\lang L_{\Rel}^{\inv}}}
\newrobustcmd\langLSetrev{\kl[\langLSetrev]{\lang L_{\Set}^{\inv}}}
\newrobustcmd\aut[1]{\mathcal{#1}}
\newrobustcmd\biactAut[1]{\mathcal{#1}}
\newrobustcmd\autA{\aut A}
\newrobustcmd\autB{\aut B}
\newrobustcmd\autC{\aut C}
\newrobustcmd\biactAutA{\biactAut{A}}
\newrobustcmd\biactAutB{\biactAut{B}}
\newrobustcmd\Reach{\kl[\Reach]{\mathtt{Reach}}}
\newrobustcmd\Obs{\kl[\Obs]{\mathtt{Obs}}}
\newrobustcmd\Min{\kl[\Min]{\mathtt{Min}}}
\newrobustcmd\Syn{\kl[\Syn]{\mathtt{Syn}}}
\newrobustcmd{\ReachA}{\Reach{(\autA)}}
\newrobustcmd{\ObsA}{\Obs{(\autA)}} \newrobustcmd\MinL{\Min{(\langL)}}
\newrobustcmd\AutInit{\autA^{\kl[initial automaton]{\mathit{init}}}}
\newrobustcmd\AutFin{\autA^{\kl[final automaton]{\mathit{final}}}}
\newrobustcmd\autAinitL{\AutInit(\langL)}
\newrobustcmd\autAfinalL{\AutFin(\langL)}
\newrobustcmd\defsymbLeft{\kl[\defsymbLeft]\triangleright}
\newrobustcmd\defsymbRight{\kl[\defsymbRight]\triangleleft}
\newrobustcmd\defobjectLeft{\kl[\defobjectLeft]{\mathsf{in}}}
\newrobustcmd\defobjectRight{\kl[\defobjectRight]{\mathsf{out}}}
\newrobustcmd\defobjectCenter{\kl[\defobjectCenter]{\mathsf{states}}}
\newrobustcmd\symbLeft{\kl[\symbLeft]\triangleright}
\newrobustcmd\symbRight{\kl[\symbRight]\triangleleft}
\newrobustcmd\objectLeft{\kl[\objectLeft]{\mathsf{in}}}
\newrobustcmd\objectRight{\kl[\objectRight]{\mathsf{out}}}
\newrobustcmd\objectCenter{\kl[\objectCenter]{\mathsf{states}}}
\newrobustcmd\wmorph[1]{\kl[\wmorph]{\mathtt{#1}}}
\newrobustcmd\wfinmorph[1]{\kl[\wfinmorph]{\mathtt{\overline{{#1}}}}}
\newrobustcmd\msquare{\mathord{\scalerel*{\square}{o}}}
\newrobustcmd\uvmorph[2]{\kl[\uvmorph]{\mathtt{#1}\msquare
    \mathtt{#2}}}
\newrobustcmd\uvfinmorph[2]{\kl[\uvfinmorph]{\overline{\mathtt{#1}\msquare\mathtt{#2}}}}
\newrobustcmd\EpiAutL{\kl[\EpiAutL]{\mathcal{E}_{\mathsf{Auto}(\langL)}}}
\newrobustcmd\MonoAutL{\kl[\MonoAutL]{\mathcal{M}_{\mathsf{Auto}(\langL)}}}
\newrobustcmd\EpiAut{\kl[\EpiAutL]{\mathcal{E}_{\mathsf{Auto}(\langL_{\KlTrans})}}}
\newrobustcmd\MonoAut{\kl[\MonoAutL]{\mathcal{M}_{\mathsf{Auto}(\langL_{\KlTrans})}}}
\newrobustcmd\Epi{\ensuremath{\mathcal{E}}}
\newrobustcmd\Mono{\ensuremath{\mathcal{M}}}
\newrobustcmd\Lan[2]{\mathsf{Lan}_{#2}{#1}}
\newrobustcmd\Ran[2]{\mathsf{Ran}_{#2}{#1}}
\newrobustcmd\St{\mathsf{State}} \newrobustcmd\MB{M(B)}
\newrobustcmd\TM{T_{M}}
\newrobustcmd\catAutoLKlT{\texorpdfstring{\Auto(\langL_{\KlTrans})}{Auto(LKl(T))}}
\newrobustcmd\catAutoLEMT{\Auto(\langLEMTTrans)}
\newrobustcmd\monadTrans{\kl[\monadTrans]{\mathcal{T}}}
\newrobustcmd\KlTrans{\texorpdfstring{\kl[\KlTrans]{\mathsf{Kl}(\mathcal{T})}}{Kl(T)}}
\newrobustcmd\EMT{\kl[\EMT]{\mathsf{EM}(\mathcal{T})}}
\newrobustcmd\langLKlT{\texorpdfstring{\lang L_{\KlTrans}}{LKl(T)}}
\newrobustcmd\EpiKlT{\kl[\EpiKlT]{\mathcal{E}_{\mathsf{Kl}(\mathcal{T})}}}
\newrobustcmd\MonoKlT{\kl[\MonoKlT]{\mathcal{M}_{\mathsf{Kl}(\mathcal{T})}}}
\newrobustcmd\semTrans[1]{\kl[\semTrans]{[\![}#1\kl[\semTrans]{]\!]}}
\newrobustcmd\KlT{\KlTrans}
\newrobustcmd\FreeTrans{\kl[\FreeTrans]{F_{\mathcal{T}}}}
\newrobustcmd\FreeTransEM{\kl[\FreeTransEM]{F^{\mathcal{T}}}}
\newrobustcmd\UTrans{\kl[\UTrans]{U_{\mathcal{T}}}}
\newrobustcmd\UTransEM{\kl[\UTransEM]{U^{\mathcal{T}}}}
\newrobustcmd\langLSetTrans{\kl[\langLSetTrans]{\lang L_{\Set}}}
\newrobustcmd\langLEMTTrans{\kl[\langLEMTTrans]{\lang L_{\EMT}}}
\newrobustcmd\pstar{\mathrel{\kl[\pstar]{\star}}}
\newrobustcmd\bijFinalTrans{\kl[\bijFinalTrans]{\varphi}}
\newrobustcmd\FreeAut{\overline{\FreeTrans}}
\newrobustcmd\FreeAutEM{\overline{\FreeTransEM}}
\newrobustcmd\AutUTrans{\overline{\UTrans}}
\newrobustcmd\AutUTransEM{\overline{\UTransEM}}
\newrobustcmd\dom{\mathrm{dom}}
\newrobustcmd\FPow{\kl[\FPow]{F_{\mathcal{P}}}}
\newrobustcmd\UPow{\kl[\UPow]{U_{\mathcal{P}}}}
\newrobustcmd\Pow{\mathcal{P}} \newrobustcmd\rev{\mathcal{R}}
\newrobustcmd\FPowop{\kl[\FPowop]{F^\op_{\mathcal{P}}}}
\newrobustcmd\UPowop{\kl[\UPowop]{U^\op_{\mathcal{P}}}}
\newrobustcmd\botfun{\kl[\botfun]{\kappa_\bot}}
\newrobustcmd\IrrAB{\kl[\Irr]{\mathsf{Irr}}(A^*,B^*)}
\newrobustcmd\red{\kl[\red]{\mathsf{red}}}
\newrobustcmd\lcp{\kl[\lcp]{\mathsf{lcp}}} \newrobustcmd\redL{\red(L)}
\newrobustcmd\redK{\red(K)} \newrobustcmd\lcpK{\lcp(K)}
\newrobustcmd\lcpL{\lcp(L)}
\newrobustcmd\suff{\kl[\suff]{\mathsf{suff}}}
\newrobustcmd\inv{{\mathsf{rev}}}
\newrobustcmd\swap{\mathsf{swap}}
\renewrobustcmd\Im{\mathrm{Im}}
\newrobustcmd\catObsAutoLSetop{\kl[\catObs]{\mathsf{Obs}(\langL_{\Setop})}}
\newrobustcmd\catObsAutoLSModop{\kl[\catObs]{\mathsf{Obs}(\langL_{\SMod^{\mathit{op}}})}}
\newrobustcmd\catReachAutoLSet{\kl[\catReach]{\mathsf{Reach}(\langL_{\Set})}}
\newrobustcmd\catReachAutoLSMod{\kl[\catReach]{\mathsf{Reach}(\langL_{\SMod})}}
\newrobustcmd\catObsAutoLC{\kl[\catObs]{\mathsf{Obs}(\langL)}}
\newrobustcmd\catReachAutoLC{\kl[\catReach]{\mathsf{Reach}(\langL)}}
\newrobustcmd\catReachK{\kl[\catReach]{\mathsf{Reach}(\catK)}}
\newrobustcmd\catReachKop{\kl[\catReach]{\mathsf{Reach}(\catK^\mathit{op})}}
\newrobustcmd\catObsK{\kl[\catObs]{\mathsf{Obs}(\catK)}}
\newrobustcmd\catObsKop{\kl[\catObs]{\mathsf{Obs}(\catK^\mathit{op})}}
\newrobustcmd\determinize{\kl[\determinize]{\mathtt{determinize}}}
\newrobustcmd\codeterminize{\kl[\codeterminize]{\mathtt{codeterminize}}}
\newrobustcmd\transpose{\kl[\transpose]{\mathtt{transpose}}}
\begin{document}

\title[Automata Minimization: a Functorial Approach]{Automata
  Minimization: a Functorial Approach\rsuper*}
\titlecomment{{\lsuper*}This is the journal version
  of~\cite{ColcombetPetrisan:CALCO2017}}

\author[T.~Colcombet]{Thomas Colcombet} %required
\address{CNRS, IRIF, Université de Paris, France} %required
\email{\{thomas.colcombet,petrisan,\}@irif.fr} %optional
% \thanks{thanks 1, optional.} %optional

\author[D.~Petri{\c s}an]{Daniela Petri{\c s}an} %optional
%\address{CNRS, IRIF, Univ. Paris-Diderot, Paris 7, France}
%\email{petrisan@irif.fr} %optional
\thanks{This work was supported by the European Research Council (ERC)
  under the European Union’s Horizon 2020 research and innovation
  programme (grant agreement No.670624), and by the DeLTA ANR project
  (ANR-16-CE40-0007). The authors also thank the Simons Institute for
  the Theory of Computing where this work has been partly
  developed.} %optional

%% etc.

%% required for running head on odd and even pages, use suitable
%% abbreviations in case of long titles and many authors:

%%%%%%%%%%%%%%%%%%%%%%%%%%%%%%%%%%%%%%%%%%%%%%%%%%%%%%%%%%%%%%%%%%%%%%%%%%%

%% the abstract has to PRECEDE the command \maketitle: be sure not to
%% issue the \maketitle command twice!

\begin{abstract}
  In this paper we regard languages and their acceptors -- such as
  deterministic or weighted automata, transducers, or monoids -- as
  functors from input categories that specify the type of the
  languages and of the machines to categories that specify the type of
  outputs.

  Our results are as follows:
  a) We provide sufficient conditions on the output category so that
  minimization of the corresponding automata is guaranteed.
  b) We show how to lift adjunctions between the categories for output
  values to adjunctions between categories of automata.
  c) We show how this framework can be instantiated to unify several
  phenomena in automata theory, starting with determinization,
  minimization and syntactic algebras.  We provide explanations of
  Choffrut's minimization algorithm for subsequential transducers and
  of Brzozowski's minimization algorithm in this setting.
\end{abstract}

\maketitle

\section{Introduction}
\label{sec:intro}

There is a long tradition of interpreting results of automata theory
through the lens of category theory.  Typical instances of this scheme
interpret automata as algebras (together with a final map) as put
forward in~\cite{ArbibManes75,goguen1972,AdamekTrnkova89}, or as
coalgebras (together with an initial map), see for
example~\cite{Jacobs97atutorial,RUTTEN20003}. This dual narrative
proved very useful~\cite{BonchiBHPRS14} in explaining at an abstract
level "Brzozowski's minimization algorithm" and the duality between
reachability and observability (which goes back all the way to the
work of Arbib and Manes~\cite{ArbibManes75} and Kalman~\cite{Kalman}).

In this paper, we adopt a slightly different approach, and we define
directly the notion of an "automaton" (over finite words) as a functor
from a "category representing input words", to a category representing
the computation and output spaces. For example, deterministic automata
are represented as functors valued in the category of sets and
functions, non-deterministic automata as functors valued in the
category of sets and relations, while weighted automata over a
semiring $S$ as functors valued in the category of $S$-modules. The
notions of a "language" and of "language accepted" by an automaton are
adapted along the same pattern.

We provide several developments around this idea. First, we recall
% <<<<<<< .mine
(see~\cite{DBLP:conf/mfcs/ColcombetP17}) that the existence of a
"minimal automaton" for a "language" is guaranteed by the existence of
an "initial" and a "final automaton" in combination with a
"factorization system". The idea of using factorization systems in the
context of minimization has of course a long history, going back at
least to Goguen~\cite{goguen1972}. However, the functorial presentation that we
adopted allows us to give a unifying perspective of the minimization
of various forms of automata and algebraic structures used for
language recognition. Additionally, we explain how, in the functor
presentation that we have adopted, the existence of "initial" and
"final automata" "for a language" can be phrased in terms of "Kan
extensions". As an immediate corollary, we identify sufficient
conditions on the output category for the existence of the
corresponding minimal automaton: existence of certain limits and
colimits, as well as of a suitable factorization system.

We also show how adjunctions between categories can be lifted to the
level of "automata for languages" in these categories
% ||||||| .r775 (see~\cite{icalp2017-submission}) that the existence
% of a "minimal automaton" for a "language" is guaranteed by the
% existence of an "initial" and a "final automaton" in combination
% with a "factorization system". Additionally, we explain how, in the
% functor presentation that we have adopted, the existence of
% "initial" and "final automata" "for a language" can be phrased in
% terms of "Kan extensions".  We also show how adjunctions between
% categories can be lifted to the level of "automata for languages" in
% these categories ======= (see~\cite{DBLP:conf/mfcs/ColcombetP17})
% that the existence of a "minimal automaton" for a "language" is
% guaranteed by the existence of an "initial" and a "final automaton"
% in combination with a "factorization system". Additionally, we
% explain how, in the functor presentation that we have adopted, the
% existence of "initial" and "final automata" "for a language" can be
% phrased in terms of "Kan extensions".  We also show how adjunctions
% between categories can be lifted to the level of "automata for
% languages" in these categories >>>>>>> .r778
(Lemma~\ref{lem:lifting-adjunctions}). This lifting accounts for
several constructions in automata theory, determinization to start
with.  Indeed, determinization of automata can be understood via a
lifting of the Kleisli adjunction between the categories $\Rel$ (of
sets and relations) and $\Set$ (of sets and functions); and reversing
non-deterministic automata can be understood via a lifting of the
self-duality of $\Rel$.

We then use this framework in order to explain several well-known
constructions in automata theory.

The most involved contribution (Theorem~\ref{theorem:minimization
  transducer}) is to rephrase in this framework the minimization
result for "subsequential transducers" due to
Choffrut~\cite{Choffrut79}. We do this by instantiating the category
of outputs with the "Kleisli category" for the monad
$\monadTrans X=B^*\times X+1$, where $B$ is the output alphabet of the
transducers. In this case, despite the lack of completeness of the
ambient category, one can still prove the existence of an initial and
of a final automaton, as well as, surprisingly, of a factorization
system.

The second concrete application presented in Section~\ref{sec:brzozowski}
is a proof of correctness of Brzozowski's minimization algorithm, for
both deterministic and weighted
automata.  % Indeed, determinization of automata can be
% understood via a lifting of the Kleisli adjunction between the
% categories $\Rel$ (of sets and relations) and $\Set$ (of sets and
% functions); and reversing non-deterministic automata can be
% understood via a lifting of the self-duality of $\Rel$.
Brzozowski's minimization algorithm for deterministic automata can be
understood by lifting the adjunctions between $\Set$ and its opposite
category $\Set^\op$, as an immediate application of
Lemma~\ref{lem:lifting-adjunctions}. Similarly, upon viewing weighted
automata as functors valued in the category $\SMod$ of $S$-modules, a
weighted version of Brzozowski's minimization algorithm described
in~\cite{BonchiBHPRS14} can be explained by lifting the adjunction
between $\SMod$ and its opposite $\SMod^\op$.

Lastly, in Section~\ref{sec:syntactic-monoids} we show how the
syntactic monoid for a language can be obtained in the same spirit as
the minimal automaton. To this end, we replace the category
representing finite words with one suitable for representing biaction
and monoid recognizers of languages.

\subparagraph{\textbf{Related work.}}

Many of the constructions outlined here have already been explained
from a category-theoretic perspective, using various techniques. For
example, the relationship between minimization and duality was subject
to numerous papers, see for
example~\cite{Bezhanishvili2012,BonchiBHPRS14, BonchiBRS12} and the
references therein.  The coalgebraic perspective on minimization was
also emphasized in papers such
as~\cite{AdamekMMS:Well-pointed-coalg,AdamekBHKMS:2012,SilvaEtAl:genPow}. We
briefly mention the relationship with well-pointed coalgebras in
Remark~\ref{rem:comp-well-pointed-coalgebras}. However, we argue that
in some instances the functorial approach may be better suited for
explaining minimization, an example being that of subsequential
transducers, see Section~\ref{sec:choffrut}.

In~\cite{Hansen10} subsequential structures (i.e. subsequential
transducers without initial state and inital prefix) are modeled as
coalgebras for an endofunctor on $\Set$. However, the corresponding
notion of coalgebra morphism does not accurately capture the suitable
notion of subsequential morphisms. In Section~\ref{sec:choffrut} we
model subsequential transducers as functors valued in a "Kleisli
category" $\KlT$. This category does not have powers, hence working
with coalgebras for an endofunctor on $\KlT$ is not possible, see
Remark~\ref{rem:comp-well-pointed-coalgebras}. On the other hand, if
one uses instead coalgebras for a $\Set$-endofunctor, as
in~\cite{Hansen10}, then only certain ``normalized'' subsequential
structures can be fully dealt with coalgebraically.

Understanding determinization and codeterminization by lifting
adjunctions to coalgebras was considered in~\cite{KerstanKW14}, and is
related to our results from Section~\ref{sec:det-codet-adj}.

The paper which is closest in spirit to our work is a seemingly
forgotten paper~\cite{Bainbridge74}. However, in this work, Bainbridge
models the \emph{state space} of the machines as a functor.  Left and
right Kan extensions are featured in connection with the initial and
final automata, but in a slightly different
setting. Lemma~\ref{lem:lifting-adjunctions}, which albeit technically
simple, has surprisingly many applications, builds directly on his
work.

The functorial approach to non-deterministic automata presented in this
paper is reminiscent of the work on automata in quantaloid-enriched
categories developed in~\cite{Rosenthal95}. However, in this paper we
do not consider neither relational presheaves (which are \emph{lax}
functors) nor enriched categories. We also aimed to keep the
requirements on the output category as simple as possible.

\newrobustcmd\alphabetA{A} \newrobustcmd\alphabetB{B}

\section{Languages and Automata as Functors}
\label{sec:lang-auto-functors}

In this section, we introduce the notion of automata via functors, and
this is the common denominator of the different contributions of this
paper. We then discuss automata minimization in this generic setting.

\subsection{Automata as functors}
We introduce automata as functors starting from the special case of
classical "deterministic automata@dfaclassic".  \AP In the standard
definition, a ""deterministic automaton@dfaclassic"" is a tuple:
\begin{align*}
  \langle Q,\alphabetA,q_0,F,\delta\rangle
\end{align*}
where $Q$ is a set of ""states@statesclassic"", $\alphabetA$ is an
alphabet (not necessarily finite), $q_0\in Q$ is the ""initial
state@initialclassic"", $F\subseteq Q$ is the set of ""final
states@finalclassic"", and $\delta_a\colon Q\rightarrow Q$ is the
""transition map@transition classic"" for all
letters~$a\in\alphabetA$.  The semantic of an automaton is to define
what is a ""run@runclassic"" over an input word~$u\in\alphabetA^*$,
and whether it is "accepting@acceptingclassic" or not.  Given a
word~$e=a_1\dots a_n$, the "automaton@dfaclassic"
""accepts@acceptsclassic"" the word
if~$\delta_{a_n}\circ\dots\circ\delta_{a_1}(q_0)\in F$, and otherwise
\reintro[acceptsclassic]{rejects it}.

If we see $q_0$ as a map~$\mathit{init}$ from the one element
set~$1=\{0\}$ to~$Q$, that maps~$0$ to~$q_0$, and $F$ as a map
$\mathit{final}$ from~$Q$ to the set~$2=\{0,1\}$, where~$1$ means
`accept' and $0$ means `reject', then the semantic of the
"automaton@dfaclassic" is to associate to each word~$u=a_1\dots a_n$
the map from~$1$ to~$2$ defined
as~$\mathit{final}\circ\delta_{a_n}\circ\cdots\circ\delta_{a_1}\circ\mathit{init}$. If
this map is (constant equal to) $1$, this means that the word is
accepted, and otherwise it is rejected.

% \noindent\begin{minipage}{10.5cm}
\AP Pushing this idea further, we can see the semantics of the
automaton as a functor from the category $\intro*\defcatIword$ spanned
by the graph of vertices $\{\intro*\defobjectLeft, \intro*\defobjectCenter,\intro*\defobjectRight\}$ in Figure~\ref{fig:detAutAsFunct} to $\Set$, and more
precisely one that sends the object $\defobjectLeft$ to~$1$
and~$\defobjectRight$ to~$2$.
\AP
The arrows of the three-object category $\defcatIword$ are spanned by
$\intro*\defsymbLeft$, $\intro*\defsymbRight$ and $a$ for all
$a\in\alphabetA$, and the composite of $
\begin{tikzcd}
  \defobjectCenter\ar[r,"w"] & \defobjectCenter\ar[r,"w'"] & \defobjectCenter
\end{tikzcd}
$ is given by the concatenation $ww'$. Notice the left to right order
of composition.

In the above category, the arrows from $\defobjectLeft$ to $\defobjectRight$
are of the form $\defsymbLeft w\defsymbRight$ for $w$ an arbitrary word in
$\alphabetA^*$.
% \end{minipage} ~~
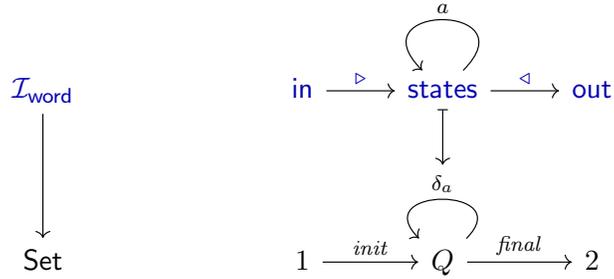
\begin{figure}
  \begin{tikzcd}
    \defcatIword\ar[dd] & \quad & \defobjectLeft\arrow[r,"\defsymbLeft"] &
    \defobjectCenter\arrow[loop,looseness=6,swap,
    "a"]\arrow[r,"\defsymbRight"]\ar[d,mapsto] & \defobjectRight \\
    & & & {} &\\
    \Set & \quad & 1\arrow[r,"\mathit{init}"] &
    Q\arrow[loop,looseness=4, swap,
    "\delta_a"]\arrow[r,"\mathit{final}"] & 2
  \end{tikzcd}
  \caption{Deterministic automata as functors}
  \label{fig:detAutAsFunct}
\end{figure}
\AP Furthermore, since a ""language@languageclassic"" can be seen as a
map from~$\alphabetA^*$ to the set~$1\rightarrow 2$ of functions from
$1$ to $2$, we can model it as a functor from the full subcategory
$\intro*\defcatOword$ on objects $\defobjectLeft$ and $\defobjectRight$ to the
category $\Set$, which maps $\defobjectLeft$ to~$1$ and~$\defobjectRight$
to~$2$.

\AP In this section we fix an arbitrary small category $\catI$ and a
full subcategory $\catO$. We denote by $\iota$ the inclusion functor
\[
  \begin{tikzcd}
    \catO\arrow[r,hook,"\iota"] & \catI\,.
  \end{tikzcd}
\]
We think of $\catI$ as a specification of the inner computations that
an automaton can perform, including black box behavior, not
observable from the outside.
On the other hand, the full subcategory $\catO$ specifies the
observable behavior of the automaton, that is, the language it
accepts.
In this interpretation, a machine/automaton $\autA$ is a functor from
$\catI$ to a category of outputs $\catC$, and the ``behavior'' or
``language'' of $\autA$ is the functor $\langL(\autA)$ obtained by
precomposition with the inclusion
$\begin{tikzcd} \catO\ar[hook]{r}{\iota}&\catI
\end{tikzcd}$. We obtain the following definition:
\noindent

\begin{defi}[$\catC$-languages and  $\catC$-automata]\label{def:aut-accepts-lang}\AP
  A ""$\catC$-language"" is a functor $\langL\colon\catO\to\catC$ and
  a ""$\catC$-automaton"" is a functor $\autA\colon\catI\to\catC$.  A
  "$\catC$-automaton" $\autA$ ""accepts"" a "$\catC$-language"
  $\langL$ when $\autA\circ\iota=\langL$; i.e. the following diagram
  commutes:
  \[
    \begin{tikzcd}
      \catO\ar{r}{\langL}\ar[hook]{d}[swap]{\iota} & \catC \\
      \catI\ar{ru}[swap]{\autA} &
    \end{tikzcd}
  \]
  We write $\intro*\catAutoL$ for the subcategory of the functor
  category $[\catI,\catC]$ where
  \begin{enumerate}
  \item objects are "$\catC$-automata" that "accept" $\langL$, and
  \item arrows are natural transformations $\alpha\colon\autA\to\autB$
    so that the natural transformation obtained by composition with
    the inclusion functor $\iota$ is the identity natural
    transformation on $\langL$, that is,
    $\alpha\circ \iota=\mathit{id}_\langL$.
  \end{enumerate}
\end{defi}

\begin{exa}[word automata and their
  languages]\label{ex:automata-as-functors}
  We can model various forms of word automata and their languages
  using the input categories
  $\begin{tikzcd} \defcatOword\ar[hook]{r}&\defcatIword
  \end{tikzcd}$ and varying the category of outputs:
  \begin{enumerate}
  \item As described in Figure~\ref{fig:detAutAsFunct}, deterministic
    automata can be seen as $\Set$-automata, i.e. as functors
    $\autA\colon\defcatIword\to\Set$ that map $\defobjectLeft$ to $1$ and
    $\defobjectRight$ to $2$.

    The language accepted by $\autA$ is the composite
    $\begin{tikzcd}
      \langL\colon\defcatOword\ar[hook]{r}&\defcatIword\ar[r,"\autA"] & \Set
    \end{tikzcd}$, which essentially specifies for each word
    $w\in A^*$ a function
    $\langL(\defsymbLeft w\defsymbRight)\colon 1\to 2$, establishing whether
    the word $w$ is accepted or not. Indeed, if $w=a_1\ldots a_n$,
    then $\langL(\defsymbLeft w\defsymbRight)$ is exactly the function
    $\mathit{final}\circ\delta_{a_n}\circ\cdots\circ\delta_{a_1}\circ\mathit{init}$
    described in the introduction of this section.
  \item Non-deterministic automata can be modeled as $\Rel$-automata,
    where $\Rel$ is the category whose objects are sets and maps are
    relations between them.

    \AP
    Indeed, a \intro{non-deterministic automaton} is completely
    determined by the relations described in the next diagram, where
    the set of initial states is modeled as a relation from $1$ to the
    set of states $Q$, the set of final states as a relation from $Q$
    to $1$ and the transition relation by any input letter $a$, as a
    relation on $Q$:
    \[
      \begin{tikzcd}
        1\arrow[r,negated,"\mathit{init}"] &
        Q\arrow[loop,looseness=6,negated,swap,
        "\delta_a"]\arrow[r,negated,"\mathit{final}"] & 1
      \end{tikzcd}
    \]
    Explicitly, we consider $\Rel$-automata
    $\autA\colon \defcatIword\to\Rel$ so that $\autA(\defobjectLeft)=1$ and
    $\autA(\defobjectRight)=1$.

    The language accepted by $\autA$ is the composite
    $\begin{tikzcd}
      \langL\colon\defcatOword\ar[hook]{r}&\defcatIword\ar[r,"\autA"] & \Rel
    \end{tikzcd}$. This functor specifies for each word $w\in A^*$ a
    relation
    $\begin{tikzcd}\langL(\defsymbLeft w\defsymbRight)\colon
      1\ar[r,negated]& 1
    \end{tikzcd}
    $. Notice that the set $\Rel(1,1)$ of relations on the set $1$ is
    isomorphic to $2$, and the relation
    $\langL(\defsymbLeft w\defsymbRight)$ simply models whether the word $w$
    is accepted by the automaton or not.

  \item Weighted automata over a semiring $S$ can be modeled as
    functors $\autA\colon\defcatIword\to\SMod$ valued in the category
    $\SMod$ of $S$-modules and $S$-linear morphisms and mapping both
    $\defobjectLeft$ and $\defobjectRight$ to $S$ (seen as a module over
    itself). Indeed, such an automaton is determined by the linear
    maps described in the next diagram, where the state space $Q$ has
    an $S$-module structure.
    \[
      \begin{tikzcd}
        S\arrow[r,"\mathit{init}"] & Q\arrow[loop,looseness=6,swap,
        "\delta_a"]\arrow[r,"\mathit{final}"] & S
      \end{tikzcd}
    \]
    Indeed, to give a linear map $\mathit{init}\colon S\to Q$ amounts
    to giving one element of the module $Q$, i.e. an initial state for
    the automaton.

    The language accepted by $\autA$, i.e. the composite
    $\begin{tikzcd}
      \langL\colon\defcatOword\ar[hook]{r}&\defcatIword\ar[r,"\autA"] &
      \SMod
    \end{tikzcd}$ specifies for each word $w\in A^*$ a linear
    transformation $\langL(\defsymbLeft w\defsymbRight)\colon S\to S$. Up to
    isomorphism, this is the same as specifying one scalar in $S$ for
    each word in $A^*$, hence we obtain the weighted language
    $A^*\to S$ classically accepted by the automaton.
  \end{enumerate}
\end{exa}

\subsection{Minimization of \texorpdfstring{"$\catC$-automata"}{C-automata}}
\label{sec:minim-catAuto}

In this section we show that the notion of a "minimal automaton" is an
instance of a more generic notion of minimal object that can be
defined in an arbitrary category $\catK$ whenever there exist an
""initial object"", a ""final object"", and a "factorization system"
$(\Epi,\Mono)$.

\AP Let $X,Y$ be two objects of $\catK$.  We say that:
\begin{center}
  $X$\quad$(\Epi,\Mono)$""-divides""\quad$Y$\qquad if\qquad $X$ is an
  \Epi""-quotient"" of an \Mono""-subobject"" of~$Y$.
\end{center}

Let us note immediately that in general this notion of
(\Epi,\Mono)"-divisibility" may not be transitive%
\footnote{There are nevertheless many situations for which it is the
  case; in particular when the category is "regular", and $\Epi$
  happens to be the class of "regular epis". This covers in particular
  the case of all "algebraic categories" with \Epi"-quotients" being
  the standard "quotients of algebras", and \Mono"-subobjects" being
  the standard "subalgebras".}.  \AP It is now natural to define an
"object"~$M$ to be (\Epi,\Mono)""-minimal"" in the "category", if it
(\Epi,\Mono)"-divides" all "objects" of the "category". Note that
there is no reason a priori that an (\Epi,\Mono)"-minimal" object in a
"category", if it exists, be unique up to "isomorphism". Nevertheless,
in our case, when the category has both "initial@initial object" and a
"final object", we can state the following minimization lemma:

\begin{lem}\AP\label{lemma:minimal}%
  Let~$\catK$ be a "category" with "initial object"~$I$ and "final
  object"~$F$ and let $(\Epi,\Mono)$ be a "factorization system"
  for~$\catK$. Define for every "object"~$X$:
  \begin{itemize}
  \item $\intro*\Min{}$ to be the "factorization" of the "unique
    arrow@initial object" from~$I$ to~$F$,
  \item $\intro*\Reach(X)$ to be the "factorization" of the "unique
    arrow@initial object" from~$I$ to~$X$, and $\intro*\Obs(X)$ to be
    the "factorization" of the "unique arrow@final object" from~$X$
    to~$F$.
  \end{itemize}
  Then
  \begin{itemize}
  \item $\Min{}$ is (\Epi,\Mono)"-minimal", and
  \item $\Min{}$ is "isomorphic" to both $\Obs(\Reach(X))$ and
    $\Reach(\Obs(X))$ for all "objects"~$X$.
  \end{itemize}
\end{lem}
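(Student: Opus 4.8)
The plan is to work entirely inside the category $\catK$ equipped with the factorization system $(\Epi,\Mono)$, using the diagonal fill-in property repeatedly. First I would set up notation: write the unique arrow $I \to F$ as $I \xrightarrow{e_M} \Min{} \xrightarrow{m_M} F$ with $e_M \in \Epi$, $m_M \in \Mono$; similarly factor $I \to X$ as $I \xrightarrow{e_X} \Reach(X) \xrightarrow{m_X} X$ and $X \to F$ as $X \xrightarrow{e'_X} \Obs(X) \xrightarrow{m'_X} F$. The key preliminary observation is that by uniqueness of arrows out of $I$ and into $F$, the composite $I \to X \to F$ equals the unique arrow $I \to F$; this is what ties all these factorizations together.

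For the second bullet (that $\Min{}$ is isomorphic to $\Obs(\Reach(X))$ and to $\Reach(\Obs(X))$), I would argue as follows for the first of the two. Apply $\Obs$ to the object $\Reach(X)$: we get a factorization $\Reach(X) \xrightarrow{e''} \Obs(\Reach(X)) \xrightarrow{m''} F$. Now $I \xrightarrow{e_X} \Reach(X) \xrightarrow{e''} \Obs(\Reach(X))$ is a composite of two $\Epi$-maps — here I use that $\Epi$ is closed under composition, which holds in any factorization system — and $\Obs(\Reach(X)) \xrightarrow{m''} F$ lies in $\Mono$. So this is a $(\Epi,\Mono)$-factorization of the arrow $I \to F$, which is the same arrow (up to the uniqueness noted above) that $\Min{}$ factors. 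By essential uniqueness of factorizations, $\Obs(\Reach(X)) \cong \Min{}$. The argument for $\Reach(\Obs(X))$ is dual: $\Obs(X) \xrightarrow{m''} F$ composed appropriately and $I \to \Reach(\Obs(X))$, now using that $\Mono$ is closed under composition.

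For the first bullet (that $\Min{}$ is $(\Epi,\Mono)$-minimal, i.e. $(\Epi,\Mono)$-divides every $X$), I need to exhibit $\Min{}$ as an $\Epi$-quotient of an $\Mono$-subobject of $X$, equivalently an $\Mono$-subobject of an $\Epi$-quotient. I would take $\Obs(\Reach(X))$, which by the previous paragraph is isomorphic to $\Min{}$: it comes with the $\Epi$-map $\Reach(X) \twoheadrightarrow \Obs(\Reach(X))$ exhibiting it as an $\Epi$-quotient of $\Reach(X)$, and $\Reach(X)$ comes with the $\Mono$-map $\Reach(X) \rightarrowtail X$ exhibiting it as an $\Mono$-subobject of $X$. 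Hence $\Min{} \cong \Obs(\Reach(X))$ is an $\Epi$-quotient of an $\Mono$-subobject of $X$, which is exactly the definition of $(\Epi,\Mono)$-divides. Since $X$ was arbitrary, $\Min{}$ is $(\Epi,\Mono)$-minimal.

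The only real subtlety — not an obstacle so much as a point requiring care — is the comparison-of-factorizations step: I must make sure the "same arrow $I\to F$" claim is airtight, which is immediate from $I$ being initial and $F$ being final, and that essential uniqueness of factorizations is invoked correctly (it produces a unique isomorphism commuting with both legs). Everything else is a routine application of the closure of $\Epi$ and $\Mono$ under composition. I would present the $\Obs(\Reach(X)) \cong \Min{}$ computation in full and then simply remark that the $\Reach(\Obs(X))$ case is dual and the minimality is a direct reading-off of the definition.
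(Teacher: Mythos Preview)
Your proposal is correct and follows essentially the same approach as the paper's proof: both obtain $\Obs(\Reach(X))$ as an $(\Epi,\Mono)$-factorization of the unique arrow $I\to F$ via closure of $\Epi$ under composition, invoke essential uniqueness of factorizations (the paper phrases this as the diagonal property) to get the isomorphism with $\Min{}$, read off divisibility from the construction, and handle $\Reach(\Obs(X))$ by the dual argument using closure of $\Mono$ under composition.
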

\begin{proof} The proof essentially consists of a diagram:
  \begin{center}
    \begin{tikzcd}[column sep={1.5cm,between origins},row
      sep={1.1cm,between origins}]
      &&& X\ar[rrrd,bend left=10] & &
      \\
      I\ar[rrru,bend left=10]\ar[rrrd,bend right=10,two
      heads]\ar[rr,two heads] &&
      \Reach(X)\ar[ru,rightarrowtail]\ar[rr,two heads] &&
      \Obs(\Reach(X))\ar[rr, rightarrowtail] && F
      \\
      &&& \Min{} \ar[rrru,bend right=10,rightarrowtail]\ar[ru,dashed,
      no head]& &
    \end{tikzcd}
  \end{center}
  Using the definition of~$\Reach{}$ and~$\Obs{}$, and the fact that
  $\Epi$ is closed under "composition", we obtain that
  $\Obs(\Reach(X))$ is an (\Epi,\Mono)"-factorization" of the "unique
  arrow@initial object" from~$I$ to~$F$. Thus, thanks to the "diagonal
  property" of a "factorization system", $\Min{}$
  and~$\Obs(\Reach(X))$ are "isomorphic". Hence, furthermore,
  since~$\Obs(\Reach(X))$ (\Epi,\Mono)"-divides"~$X$ by construction,
  the same holds for~$\Min{}$.  In a symmetric way, we have the next diagram:
  \begin{center}
    \begin{tikzcd}[column sep={1.5cm,between origins},row
      sep={1.1cm,between origins}]
      &&& X\ar[rrrd,bend left=10]\ar[rd,two heads] & &
      \\
      I\ar[rrru,bend left=10]\ar[rrrd,bend right=10,two
      heads]\ar[rr,two heads] &&
      \Reach(\Obs(X))\ar[rr,rightarrowtail] &&
      \Obs(X)\ar[rr, rightarrowtail] && F
      \\
      &&& \Min{} \ar[rrru,bend right=10,rightarrowtail]\ar[lu,dashed,
      no head]& &
    \end{tikzcd}
  \end{center}
  This shows that $\Reach(\Obs(X))$ is also "isomorphic" to~$\Min{}$.
\end{proof}

\AP An object $X$ of $\catK$ is called ""reachable"" when $X$ is
isomorphic to $\Reach(X)$. We denote by $\intro*\catReachK$ the full
subcategory of $\catK$ consisting of "reachable" objects. \AP
Similarly, an object $X$ of $\catK$ is called ""observable"" when $X$
is isomorphic to $\Obs(X)$. We denote by $\intro*\catObsK$ the full
subcategory of $\catK$ consisting of "observable" objects.

\AP We can express reachability $\Reach$ and observability $\Obs$ as
the right, respectively the left adjoint to the inclusion of
$\catReachK$, respectively of $\catObsK$ into $\catK$. It is indeed a
standard fact that factorization systems give rise to reflective
subcategories, see~\cite{cassidy_hebert_kelly_1985}. In our case, this
is the reflective subcategory $\catObsK$ of $\catK$. By a dual
argument, the category $\catReachK$ is coreflective in $\catK$. We can
summarize these facts in the next lemma.

\begin{lem}
  \label{lem:adj-reach-obs}
  Let~$\catK$ be a "category" with "initial object"~$I$ and "final
  object"~$F$ and let $(\Epi,\Mono)$ be a "factorization system"
  for~$\catK$.  We have the adjunctions
\end{lem}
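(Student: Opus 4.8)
The plan is to establish the two adjunctions — $\Obs$ as a left adjoint to the inclusion $\catObsK\hookrightarrow\catK$, and $\Reach$ as a right adjoint to the inclusion $\catReachK\hookrightarrow\catK$ — by showing that $\catObsK$ is a reflective subcategory of $\catK$ and invoking duality for the coreflective statement. Indeed, $\catK^{\op}$ carries the factorization system $(\Mono^{\op},\Epi^{\op})$, has initial object $F$ and final object $I$, and under this passage $\Obs$ computed in $\catK^{\op}$ is exactly $\Reach$ computed in $\catK$, while ``$\catObsK$ is reflective in $\catK^{\op}$'' reads as ``$\catReachK$ is coreflective in $\catK$''. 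So it suffices to treat the reflective case; this is the standard correspondence between factorization systems and reflective subcategories~\cite{cassidy_hebert_kelly_1985}, and I would just spell out the instance at hand.

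For every object $X$ fix the chosen factorization $X\xrightarrow{e_X}\Obs(X)\xrightarrow{m_X}F$ of the unique arrow to the final object, with $e_X\in\Epi$ and $m_X\in\Mono$. First I would check that $\Obs(X)$ is observable: since $(\mathit{id}_{\Obs(X)},m_X)$ is itself an $(\Epi,\Mono)$-factorization of the unique arrow $\Obs(X)\to F$, the essential uniqueness of factorizations gives $\Obs(\Obs(X))\cong\Obs(X)$, and the same argument shows $e_Y$ is an isomorphism whenever $Y$ is observable. Then I would promote $\Obs$ to a functor: for $h\colon X\to X'$, the square with vertical sides $e_X$ and $m_{X'}$, top edge $e_{X'}\circ h$ and bottom edge $m_X$ commutes (both composites being the unique arrow $X\to F$), so the diagonal property yields a unique $\Obs(h)\colon\Obs(X)\to\Obs(X')$ with $\Obs(h)\circ e_X=e_{X'}\circ h$; uniqueness makes this assignment functorial and makes $(e_X)_X$ a natural transformation from $\mathit{id}_{\catK}$ to $J\circ\Obs$, where $J\colon\catObsK\hookrightarrow\catK$ is the inclusion.

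The crux is the universal property of $e_X$. Given an observable object $Y$ and an arrow $f\colon X\to Y$, the square with vertical sides $e_X\in\Epi$ and the unique arrow $Y\to F$ — which lies in $\Mono$ because $Y$ is observable — top edge $f$ and bottom edge $m_X$ commutes, since both composites are the unique arrow $X\to F$; the diagonal property then supplies a unique $g\colon\Obs(X)\to Y$ with $g\circ e_X=f$. Thus $e_X$ is a universal arrow from $X$ to $J$, which exhibits $\Obs$ as left adjoint to $J$, with unit $(e_X)_X$ and counit the family of isomorphisms found above (the triangle identities following from the uniqueness in the diagonal property). The one point requiring care is these uniqueness clauses, which go through because $\Epi$-morphisms are epic and $\Mono$-morphisms monic; everything else is routine diagram chasing. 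Dualizing then yields that $\Reach$ is right adjoint to the inclusion $\catReachK\hookrightarrow\catK$, completing the proof.
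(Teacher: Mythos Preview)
Your proof is correct and follows exactly the approach the paper gestures at: the paper does not give a proof but simply invokes~\cite{cassidy_hebert_kelly_1985} for the standard fact that factorization systems yield reflective subcategories (and dually coreflective ones), and you have spelled out that argument in the present instance. One small correction: your remark that the uniqueness clauses ``go through because $\Epi$-morphisms are epic and $\Mono$-morphisms monic'' is not valid in general (e.g.\ take the factorization system $(\text{all morphisms},\text{isomorphisms})$); the uniqueness of the diagonal filler is instead part of what the paper calls the ``diagonal property'' of an orthogonal factorization system, so your argument stands once that justification is substituted.
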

\begin{equation}
  \label{eq:adj-reach-obs}
  \begin{tikzcd}[column sep={14mm,between origins}]
    \catReachK \ar[rr,bend left,hook] &\bot &\catK\ar[rr,bend left,
    "\Obs"]\ar[ll,bend left, "\Reach"] &\bot~ & \catObsK\,.\ar[ll,bend
    left, hook]
  \end{tikzcd}
\end{equation}

\AP
In what follows we will instantiate $\catK$ with the category
$\catAutoL$ of $\catC$-automata accepting a language
$\langL$. Assuming the existence of an ""initial@initial automaton""
and a ""final automaton"" for $\langL$ -- denoted by $\autAinitL$,
respectively $\autAfinalL$ -- and, of a factorization system, we
obtain the functorial version of the usual notions of "reachable
sub-automaton" $\ReachA$ and "observable quotient automaton" $\ObsA$
of an automaton $\autA$. The "minimal automaton" $\MinL$ for the
language $\langL$ is obtained via the factorization
\begin{equation}
  \label{eq:min-via-fact}
  \begin{tikzcd}
    \autAinitL \arrow[r,two heads] & \MinL\ar[r,tail]& \autAfinalL\,.
  \end{tikzcd}
\end{equation}

Lemma~\ref{lemma:minimal} implies that the "minimal automaton" divides
any other automaton recognizing the language, while a particular
instance of Lemma~\ref{lem:adj-reach-obs} pertaining to deterministic
automata is given in~\cite[Section~9.4]{BonchiBHPRS14}.

\begin{rem}
  The duality between reachability and observability can be stated as
  the duality between $\catReachK$ and $\catObsKop$. Indeed, if we
  consider the factorization system $(\Mono,\Epi)$ on
  $\catK^{\mathit{op}}$, then it immediately follows that $\catObsKop$
  is isomorphic to $\catReachK^\mathit{op}$. Hence the two adjunctions
  from~\eqref{eq:adj-reach-obs} are dual to each other.
\end{rem}

\begin{rem}[Minimization via adjunctions]
  As a consequence of Lemma~\ref{lemma:minimal}, minimization can be
  seen as an endofunctor $\Min\colon\catK\to\catK$, isomorphic to the
  functors obtained by considering any circuit in
  diagram~\eqref{eq:adj-reach-obs}.

  We will come back to this observation of regarding minimization via
  adjunctions, in Section~\ref{sec:brzozowski}, where we will show how
  Brzozowski's algorithms fits in the same conceptual approach,
  using however a longer chain of adjunctions.
\end{rem}

\subsection{Minimization of \texorpdfstring{"$\catC$-automata"}{C-automata}: sufficient conditions
  on \texorpdfstring{$\catC$}{C}}
\label{sec:minimization:suff-cond}
In this section we provide sufficient conditions on $\catC$ so that
the category $\catAutoL$ of "$\catC$-automata" accepting a
"$\catC$-language" $\langL$ satisfies the three conditions of
Lemma~\ref{lemma:minimal}. The sufficient conditions on $\catC$ are as
follows
\begin{enumerate}
\item completeness
\item cocompleteness
\item existence of a factorization system
\end{enumerate}

In Corollary~\ref{cor:exist-Kan-suff-cond} below we show that when
this conditions are satisfied then the initial and final automata for
a language exist and the minimal automaton can be obtained via the
factorization described in diagram~\eqref{eq:min-via-fact}.

\begin{rem}
  Before proceeding to the technical details, a few remarks are in
  order.
  \begin{enumerate}
  \item First, this notion of minimization is parametric in the
    factorization system one chooses on $\catC$. 
  \item Second, we emphasize that these conditions are only
    sufficient. In Section~\ref{sec:choffrut} we consider the example
    of sequential transducers and we instantiate $\catC$ with a
    "Kleisli category". Although this category is not complete,
    the final automaton exists.
  \item Finally, depending on the category $\catI$, we may relax the
    conditions in Corollary~\ref{cor:exist-Kan-suff-cond},
    see~Lemma~\ref{lem:the-minimization-wheel}. The reader may skip
    the rest of this section and consider
    Example~\ref{ex:minimization-dfa-wa}.
  \end{enumerate}
\end{rem}

We consider now the sufficient conditions on $\catC$ and we start with
the factorization system. It is well known that given a "factorization
system" $(\Epi,\Mono)$ on $\catC$, we can extend it to a
"factorization system" $(\Epi_{[\catI,\catC]},\Mono_{[\catI,\catC]})$
on the functor category $[\catI,\catC]$ in a point-wise fashion. That
is, a natural transformation is in~$\Epi_{[\catI,\catC]}$ if all its
components are in~$\Epi$, and analogously, a natural transformation is
in~$\Mono_{[\catI,\catC]}$ if all its components are in~$\Mono$. In
turn, the factorization system on the functor category $[\catI,\catC]$
induces a factorization system on its subcategory $\catAutoL$ for an
arbitrary language $\langL$.
\begin{lem}
  \label{lem:lifting-fact-syst}
  \AP If~$\catC$ has a factorization system~$(\Epi,\Mono)$,
  then~$\catAutoL$ has a factorization
  system~$(\intro*\EpiAutL,\intro*\MonoAutL)$, where~$\EpiAutL$
  consists of all the natural transformations with components
  in~$\Epi$ and~$\MonoAutL$ consists of all natural transformations
  with components in~$\Mono$.
\end{lem}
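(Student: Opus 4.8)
The statement is that a factorization system $(\Epi,\Mono)$ on $\catC$ lifts to the category $\catAutoL$ of $\catC$-automata accepting $\langL$, with the pointwise classes $\EpiAutL$ and $\MonoAutL$. Since the text already recalls that $(\Epi,\Mono)$ lifts pointwise to a factorization system $(\Epi_{[\catI,\catC]},\Mono_{[\catI,\catC]})$ on the functor category $[\catI,\catC]$, the real work is to transport this along the (non-full) inclusion $\catAutoL\hookrightarrow[\catI,\catC]$. I would first check that $\catAutoL$ is closed under the relevant factorizations, and then verify the two axioms of a factorization system --- existence of factorizations and the unique diagonal fill-in --- directly inside $\catAutoL$, reusing the functor-category version.

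\textbf{Step 1: factorizations stay in $\catAutoL$.} Given a morphism $\alpha\colon\autA\to\autB$ in $\catAutoL$, factor it in $[\catI,\catC]$ as $\autA\xrightarrow{e}\autC\xrightarrow{m}\autB$ with $e\in\Epi_{[\catI,\catC]}$, $m\in\Mono_{[\catI,\catC]}$. I must show $\autC$ again accepts $\langL$ and that $e,m$ are morphisms of $\catAutoL$, i.e. restrict to the identity on $\langL$ along $\iota$. For each object $o$ of $\catO$ we have $\autA(\iota o)=\autB(\iota o)=\langL(o)$ and $m_{\iota o}\circ e_{\iota o}=\alpha_{\iota o}=\mathit{id}_{\langL(o)}$. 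So on the full subcategory $\catO$ the composite $m\circ e$ is an isomorphism (the identity); since $(\Epi,\Mono)$ is a factorization system on $\catC$, a factorization of an isomorphism forces $e_{\iota o}\in\Epi$ and $m_{\iota o}\in\Mono$ to be mutually inverse isomorphisms --- indeed, by the diagonal property applied to $\mathit{id}=m_{\iota o}e_{\iota o}$ one gets a section of $m_{\iota o}$ which is forced to coincide with $e_{\iota o}$. Using this, I define a modified factorization: precompose $m$ with these isomorphisms (equivalently, replace $\autC$ by an isomorphic functor) so that $\autC(\iota o)=\langL(o)$ on the nose and $e_{\iota o},m_{\iota o}$ become identities; this is a routine ``transport of structure'' along a natural isomorphism, and it does not disturb the classes $\Epi,\Mono$ (closed under composition with isos). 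Then $\autC\circ\iota=\langL$, so $\autC\in\catAutoL$, and $e\in\EpiAutL$, $m\in\MonoAutL$ by construction.

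\textbf{Step 2: the diagonal fill-in.} Given a commuting square in $\catAutoL$ with left edge $e\in\EpiAutL$ and right edge $m\in\MonoAutL$, view it in $[\catI,\catC]$: since $\EpiAutL\subseteq\Epi_{[\catI,\catC]}$ and $\MonoAutL\subseteq\Mono_{[\catI,\catC]}$, the diagonal property in $[\catI,\catC]$ yields a unique natural transformation $d$ making both triangles commute. It remains to see $d$ is a morphism of $\catAutoL$, i.e. $d\circ\iota=\mathit{id}_\langL$. For $o$ in $\catO$, the component $d_{\iota o}$ satisfies the two triangle identities with $e_{\iota o},m_{\iota o}$, which --- as all four maps involved restrict to identities on $\langL$ along $\iota$ --- force $d_{\iota o}=\mathit{id}_{\langL(o)}$ by the uniqueness of diagonals in $\catC$ (the identity obviously also fills the restricted square). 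Hence $d\in\catAutoL$, and uniqueness of $d$ as a morphism in $\catAutoL$ is inherited from uniqueness in $[\catI,\catC]$. Closure of $\EpiAutL$ and $\MonoAutL$ under composition, and containment of isomorphisms, are immediate from the pointwise definition. This establishes that $(\EpiAutL,\MonoAutL)$ is a factorization system on $\catAutoL$.

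\textbf{Main obstacle.} The only subtle point is Step 1: because the inclusion $\catAutoL\hookrightarrow[\catI,\catC]$ is not full, an arbitrary functor-category factorization of a $\catAutoL$-morphism need not a priori land in $\catAutoL$ with the required identity-on-$\langL$ condition. The fix is exactly the observation that over $\catO$ the composite is the identity, so the factorization there is ``trivial'' up to iso, and one transports along that iso. Everything else is a mechanical transfer of the functor-category statement.
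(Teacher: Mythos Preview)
Your proof is correct and follows the approach the paper gestures at. The paper does not actually give a proof of this lemma: it only states that ``the proof is the same as the classical one that shows that factorization systems can be lifted to functor categories.'' Your argument spells out that classical lifting to $[\catI,\catC]$ and then carefully checks the one point the paper leaves implicit, namely that the pointwise factorization can be chosen so that the intermediate functor again satisfies $\autC\circ\iota=\langL$ and the two halves restrict to identities along $\iota$. Your handling of this via the observation that on objects of $\catO$ the factorization is of an identity, hence both factors are isomorphisms which can be normalised to identities, is exactly the right fix, and your diagonal-fill-in step is straightforward. So your proposal is the worked-out version of what the paper asserts without detail.
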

The proof of Lemma~\ref{lem:lifting-fact-syst} is the same as the
classical one that shows that "factorization systems" can be lifted to
functor categories.

As for the existence of the initial and final automaton accepting a
given language, we first notice that these can be stated in terms of
Kan extensions, see~\cite{maclane}.

\begin{lem}
  If the left Kan extension $\Lan{\langL}{\iota}$ of~$\langL$
  along~$\iota$ exists, then it is an initial object in~$\catAutoL$,
  that is, $\autAinitL$ exists and is isomorphic
  to~$\Lan{\langL}{\iota}$.

  Dually, if the right Kan extension~$\Ran{\langL}{\iota}$ of~$\langL$
  along~$\iota$ exists, then so does the final object~$\autAfinalL$
  of~$\catAutoL$ and $\autAfinalL$ is isomorphic
  to~$\Ran{\langL}{\iota}$.
\end{lem}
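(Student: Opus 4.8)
The plan is to prove the statement by unwinding the definition of $\catAutoL$ and recognizing that the universal properties of left and right Kan extensions are, in this particular situation, exactly the universal properties of an initial and a final object in the comma-style category $\catAutoL$.

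\textbf{Setup and first step.} Recall that an object of $\catAutoL$ is a functor $\autA\colon\catI\to\catC$ with $\autA\circ\iota=\langL$, and a morphism $\alpha\colon\autA\to\autB$ is a natural transformation with $\alpha\circ\iota=\mathit{id}_\langL$. The left Kan extension $\Lan{\langL}{\iota}$ comes equipped with a universal natural transformation $\eta\colon\langL\Rightarrow(\Lan{\langL}{\iota})\circ\iota$; since $\iota$ is a full subcategory inclusion, this is a natural transformation between two functors on $\catO$. The first thing I would check is that $\eta$ is in fact an isomorphism (indeed an identity, after a harmless choice): because $\iota$ is fully faithful, the unit of the Kan-extension adjunction restricted along $\iota$ is invertible (this is the standard fact that a pointwise left Kan extension along a fully faithful functor genuinely extends the original functor). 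Hence we may assume $(\Lan{\langL}{\iota})\circ\iota=\langL$ on the nose, so that $\Lan{\langL}{\iota}$ is genuinely an object of $\catAutoL$.

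\textbf{Second step: the universal property matches.} Given any object $\autB$ of $\catAutoL$, the universal property of the left Kan extension says that natural transformations $\Lan{\langL}{\iota}\Rightarrow\autB$ are in bijection with natural transformations $\langL\Rightarrow\autB\circ\iota=\langL$. Under this bijection, I need to check that the morphism $\Lan{\langL}{\iota}\Rightarrow\autB$ corresponding to $\mathit{id}_\langL$ is precisely the unique one that restricts to $\mathit{id}_\langL$ along $\iota$ — i.e.\ the unique morphism in the hom-set of $\catAutoL$. This is immediate once the bijection is spelled out: restricting a transformation $\beta\colon\Lan{\langL}{\iota}\Rightarrow\autB$ along $\iota$ and then composing with $\eta=\mathit{id}$ is exactly the inverse of the Kan bijection, so $\beta\circ\iota=\mathit{id}_\langL$ holds iff $\beta$ is the transpose of $\mathit{id}_\langL$, of which there is exactly one. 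Therefore $\catAutoL(\Lan{\langL}{\iota},\autB)$ is a singleton for every $\autB$, which is precisely the statement that $\Lan{\langL}{\iota}$ is initial; so $\autAinitL$ exists and is isomorphic to it. The dual argument, using the couniversal property of the right Kan extension $\Ran{\langL}{\iota}$ and the fact that the counit $(\Ran{\langL}{\iota})\circ\iota\Rightarrow\langL$ is invertible for $\iota$ fully faithful, shows $\Ran{\langL}{\iota}$ is final, giving $\autAfinalL$.

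\textbf{Main obstacle.} The only real subtlety is the point that the (co)unit of the Kan extension is invertible along the fully faithful $\iota$, and that we are entitled to strictify it to an equality so that the Kan extension literally lies in $\catAutoL$ rather than merely in the larger category of all functors $\catI\to\catC$ extending $\langL$ up to iso. If one prefers not to strictify, one instead argues that $\catAutoL$ is equivalent to the category of such pseudo-extensions and transports the universal property across that equivalence; either way the argument is routine and I would simply cite~\cite{maclane} for the Kan-extension facts and keep the proof to a few lines, noting that everything dualizes for the right Kan extension by passing to $\catC^{\mathit{op}}$.
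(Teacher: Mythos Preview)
Your proposal is correct and follows essentially the same approach as the paper's proof sketch: both arguments hinge on the fact that the unit $\eta\colon\langL\Rightarrow(\Lan{\langL}{\iota})\circ\iota$ is invertible because $\iota$ is fully faithful, and then invoke the universal property of the Kan extension to obtain the unique morphism into any $\autA\in\catAutoL$. Your treatment is more explicit about the strictification issue (replacing the iso $\eta$ by an identity so that $\Lan{\langL}{\iota}$ literally lies in $\catAutoL$), which the paper glosses over, but otherwise the two proofs coincide.
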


\begin{proof}[Proof Sketch]
  Assume the left Kan extension exists. Then the canonical natural
  transformation $\langL\to \Lan{\langL}{\iota}\circ \iota$ is an
  isomorphism since $\iota$ is full and faithful. Whenever $\autA$
  accepts $\langL$, that is, $\autA\circ\iota=\langL$, we obtain the
  required unique morphism $\Lan{\langL}{\iota}\to\autA$ using the
  universal property of the Kan extension. The argument for the right
  Kan extension follows by duality.
\end{proof}

\begin{cor}
  \label{cor:exist-Kan-suff-cond}
  Assume $\catC$ is complete, cocomplete and has a factorization
  system and let $\langL$ be a "$\catC$-language". Then the "initial
  $\langL$-automaton" and the "final $\langL$-automaton" exist and are
  given by the left, respectively right Kan extensions of $\langL$
  along $\iota$. Furthermore, the minimal $\catC$-automaton $\Min(L)$
  accepting $\langL$ is obtained via the factorization
  \[
    \begin{tikzcd}
      \Lan{\langL}{\iota}\ar[r,two heads] & \Min(L)\ar[r,tail] &
      \Ran{\langL}{\iota}\,.
    \end{tikzcd}
  \]
  % \end{equation}
\end{cor}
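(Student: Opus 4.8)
The plan is to combine the three facts established just before the corollary. First, the two lemmas immediately preceding tell us that if the left and right Kan extensions of $\langL$ along $\iota$ exist, then they furnish the initial object $\autAinitL\cong\Lan{\langL}{\iota}$ and the final object $\autAfinalL\cong\Ran{\langL}{\iota}$ of $\catAutoL$. So the first step is simply to guarantee existence of these Kan extensions. For this I would invoke the standard pointwise formula for Kan extensions (see~\cite{maclane}): since $\catI$ is small and $\iota\colon\catO\hookrightarrow\catI$ is a functor, the left Kan extension $\Lan{\langL}{\iota}$ is computed objectwise as a colimit over a comma category $(\iota\downarrow i)$, and the right Kan extension $\Ran{\langL}{\iota}$ objectwise as a limit over $(i\downarrow\iota)$. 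These comma categories are small because $\catI$ and $\catO$ are small, so cocompleteness of $\catC$ makes every such colimit exist — hence $\Lan{\langL}{\iota}$ exists — and completeness of $\catC$ makes every such limit exist — hence $\Ran{\langL}{\iota}$ exists.

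Having secured the initial and final automata, the second step is to produce a factorization system on $\catAutoL$. This is exactly Lemma~\ref{lem:lifting-fact-syst}: the factorization system $(\Epi,\Mono)$ on $\catC$ lifts pointwise to $(\EpiAutL,\MonoAutL)$ on $\catAutoL$. Now $\catAutoL$ satisfies all three hypotheses of Lemma~\ref{lemma:minimal} — it has an initial object $\Lan{\langL}{\iota}$, a final object $\Ran{\langL}{\iota}$, and a factorization system. The third step is then just to apply Lemma~\ref{lemma:minimal} with $\catK=\catAutoL$: it yields an $(\EpiAutL,\MonoAutL)$-minimal object, obtained as the factorization of the unique arrow from the initial to the final object, i.e.\ the object $\Min(L)$ sitting in
\[
  \begin{tikzcd}
    \Lan{\langL}{\iota}\ar[r,two heads] & \Min(L)\ar[r,tail] & \Ran{\langL}{\iota}\,.
  \end{tikzcd}
\]
This is precisely diagram~\eqref{eq:min-via-fact} with the initial and final automata replaced by their Kan-extension incarnations.

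I do not expect any serious obstacle here — the corollary is essentially a bookkeeping assembly of the preceding three results. The only point that needs a word of care is the appeal to the pointwise Kan extension formula: one must note that smallness of $\catI$ (hence of the relevant comma categories) together with (co)completeness of $\catC$ is what makes the pointwise formula applicable, and that a pointwise Kan extension is in particular a Kan extension. A secondary minor point is that the lemma on Kan extensions as initial/final objects requires $\iota$ to be full and faithful so that the unit/counit is an isomorphism; this holds since $\catO$ was fixed to be a full subcategory of $\catI$ with $\iota$ the inclusion. With these observations in place, the proof is a two-line citation chain, and no genuinely hard step arises.
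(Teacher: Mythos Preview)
Your proposal is correct and matches the paper's intent: the corollary is stated without proof precisely because it is the straightforward assembly of Lemma~\ref{lem:lifting-fact-syst}, the Kan-extension lemma, and Lemma~\ref{lemma:minimal}, with (co)completeness of $\catC$ and smallness of $\catI$ guaranteeing the pointwise Kan extensions exist. Your remarks on the smallness hypothesis and on $\iota$ being full and faithful are the right sanity checks; nothing further is needed.
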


% \begin{rem}
%   Depending on the category $\catI$, we may relax the conditions in
%   Corollary~\ref{cor:exist-Kan-suff-cond},
%   see~Lemma~\ref{lem:the-minimization-wheel}.  Furthermore, we
%   emphasise that these conditions are only sufficient. In
%   Section~\ref{sec:choffrut} we consider the example of sequential
%   transducers and we instantiate $\catC$ with a "Kleisli
%   category". Although this category does not have powers, the final
%   automaton exists.
% \end{rem}

\section{Word Automata}% and Transducers}
\label{sec:word-input-category}

In Sections~\ref{sec:word-input-category} to~\ref{sec:brzozowski} we
restrict our attention to the case of word automata, for which we
recall the input category $\intro*\catIword$ from
Figure~\ref{fig:detAutAsFunct} (i.e., the three-object category with
arrows spanned by $\intro*\symbLeft$, $\intro*\symbRight$ and $a$ for
all $a\in\alphabetA$) and its full subcategory $\intro*\catOword$ on objects
$\objectLeft$ and $\objectRight$.
% This is the three-object category with arrows spanned by
% $\intro*\symbLeft$, $\intro*\symbRight$ and $a$ for all
% $a\in\alphabetA$, as in the diagram below and where the composite of
% $
% \begin{tikzcd}
%   \objectCenter\ar[r,"w"] & \objectCenter\ar[r,"w'"] & \objectCenter
% \end{tikzcd}
% $ is given by the concatenation $ww'$.

\AP\noindent
\phantomintro\objectLeft\phantomintro\objectRight\phantomintro\objectCenter
\[
  \begin{tikzcd}
    \reintro*\objectLeft\arrow[r,"\reintro*\symbLeft"] &
    \reintro*\objectCenter\arrow[loop,looseness=6,swap,
    "a"]\arrow[r,"\reintro*\symbRight"] & \reintro*\objectRight
  \end{tikzcd}
\]

\AP % Let $\catO$ to be the full subcategory of $\catI$ on objects
% $\objectLeft$ and $\objectRight$.
We consider "$\catC$-languages", which are now functors
$\langL\colon\catOword\to\catC$.  If $\langL(\objectLeft)=X$ and
$\langL(\objectRight)=Y$ we call $\langL$ a
""$(\catC, X,Y)$-language"".  Similarly, we consider
"$\catC$-automata" that are functors $\autA\colon\catIword\to\catC$.
If $\autA(\objectLeft)=X$ and $\autA(\objectRight)=Y$ we call $\autA$
a ""$(\catC, X,Y)$-automaton"".

\subsection{Minimization of word automata} We first provide a couple
of instances of the generic minimization results given in the previous
section and then we show how they can be refined considering the
particular structure of the category $\catIword$.

\begin{exa}
  \label{ex:minimization-dfa-wa}
  \begin{enumerate}
  \item \textit{Deterministic automata, i.e., $(\Set,1,2)$-automata.}
    Since $\Set$ is complete and cocomplete, the initial and final
    automaton accepting a language $\langL$ can be computed as in
    Corollary~\ref{cor:exist-Kan-suff-cond}.
    \begin{itemize}
    \item The initial automaton $\autAinitL$ is described in the next
      diagram.
      \[
        \begin{tikzcd}
          1\arrow[r,"\reintro*\varepsilon"] &
          A^*\arrow[loop,looseness=6,swap,
          "\delta_a"]\arrow[r,"\langL?"] & 2
        \end{tikzcd}
      \]
      Its state space is the set $A^*$ of all words, the initial state
      is the empty word $\varepsilon$, the set of final states consist of
      the words belonging to the accepted language, and, for each
      input letter $a$, the transition map $\delta_a$ is defined by
      $w\mapsto wa$.

    \item The final automaton $\autAfinalL$ is described in the next
      diagram.
      \[
        \begin{tikzcd}
          1\arrow[r,"\reintro*\langL"] &
          2^{A^*}\arrow[loop,looseness=6,swap,
          "\delta_a"]\arrow[r,"\varepsilon ?"] & 2
        \end{tikzcd}
      \]

      Its state space is the set $2^{A^*}$ of all languages, the
      initial state is the language accepted by the automaton, the
      final state consists of all the languages that contain the empty
      word, and, for each input letter $a\in A$, the transition map
      $\delta_a$ is taking left quotients of a language by the letter
      $a$, that is,
      $A^*\ni K\mapsto a^{-1}K=\{u\in A^*\mid au\in K\}$.
    \item The factorization system of $\catAutoL$ is inherited from
      $\Set$, consisting of surjective, respectively injective
      functions.
    \end{itemize}

    Indeed, $\autAinitL$ and $\autAfinalL$ are the initial,
    respectively the final objects in the category $\catAutoL$. To see
    this, notice that for any other automaton $\autA$ in $\catAutoL$
    with state space $\autA(\objectCenter)=Q$, we have a unique
    morphism from $\autAinitL$ to $\autA$, as described in the next
    diagram on the left. This is a natural transformation, determined
    by its component on the object $\objectCenter$, i.e., by the
    function $\mathsf{reachedState}\colon A^*\to Q$, which maps a word
    $w\in A^*$ to the state of $Q$ reached by reading $w$.  Similarly,
    there is a unique automata morphism from $\autA$ to $\autAfinalL$,
    determined by the function
    $\mathsf{acceptedLanguage}\colon Q\to 2^{A^*}$ which maps a state
    $q\in Q$ to the language accepted by $q$.

    In particular, as shown in the next diagram (on the right), the
    unique morphism from $\autAinitL$ to $\autAfinalL$ is determined
    by the function $A^*\to 2^{A^*}$ which maps a word $w\in A^*$ to
    the quotient language $w^{-1}\langL$. If we factorize this map, we
    obtain the quotient of $A^*$ by the syntactic equivalence $\sim$
    defined by $w\sim w'$ if and only if $w^{-1}\langL=w'^{-1}\langL$,
    that is, we obtain the state space of the minimal automaton
    accepting $\langL$.

    \begin{minipage}{.5\textwidth}
      \[
        \begin{tikzcd}[column sep={2.6cm,between origins},row
          sep={1.3cm,between origins}]
          & A^*
          \arrow[rdd,bend left=34,
          "\langL?"]\arrow[dd,"\mathsf{reachedState}" description] &
          \\
          \\
          1
          \arrow[bend right=34,swap,"\langL"]{rdd} \arrow[ruu,,bend
          left=34, "\varepsilon"] \arrow[r,"i"] & Q \arrow[r,"f"]
          \arrow[dd,"\mathsf{acceptedLanguage}" description]
          % \arrow[dd,visible on=<2>,tail]
          & 2
          \\
          \\
          &
          2^{A^*}\arrow[bend right=34]{ruu}[swap]{\varepsilon ?} &  \\
        \end{tikzcd}
      \]
    \end{minipage}
    \begin{minipage}{.45\textwidth}
      \[
        \begin{tikzcd}[column sep={2.6cm,between origins},row
          sep={1.3cm,between origins}]
          & A^*
          \arrow[rdd,bend left=34, "\langL?"]\arrow[dd,two heads] &
          \\
          \\
          1
          \arrow[bend right=34,swap,"\langL"]{rdd} \arrow[ruu,,bend
          left=34, "\varepsilon"] \arrow[r,"i"] & \MinL \arrow[r,"f"]
          \arrow[dd,tail]
          % \arrow[dd,visible on=<2>,tail]
          & 2
          \\
          \\
          &
          2^{A^*}\arrow[bend right=34]{ruu}[swap]{\varepsilon ?} &  \\
        \end{tikzcd}
      \]
    \end{minipage}
  \item \textit{Non-deterministic automata, i.e.,
      $(\Rel,1,1)$-automata.}  The category $\Rel$ has countable
    products and coproducts, so the initial and final automata do
    exist and both have as state space the set $A^*$. However, the
    missing ingredient for obtaining a meaningful notion of minimal
    automaton in this case is a suitable factorization system.
  \item \textit{Weighted automata over a field $K$, i.e.,
      $(\KMod,K,K)$-automata.} We have a similar situation depicted in
    the two diagrams below. The initial automaton has as state space
    the vector space of finitely supported functions from $A^*\to K$,
    while the final automaton has as state space the vector space of
    all functions $A^*\to K$. These are precisely the coproduct,
    respectively the product of $A^*$ many copies of $K$ in the
    category $\KMod$. In the diagram below, the linear transformations
    $\varepsilon$ and $\varepsilon?$ are precisely the injection into
    the coproduct, respectively the projection from the product which
    correspond to the $\varepsilon$-component. The linear
    transformation, denoted (by an abuse) by $\langL$, maps the unit
    of the field to the weighted language $\langL\colon A^*\to K$
    accepted by the automaton. While the linear map $\langL?$ can be
    defined on a basis of $\bigoplus\limits_{u\in A^*}K$ as follows:
    for a given $u\in A^*$, it maps the unit of $K$ of the
    $u$-component of the coproduct to $\langL(u)$. If we factorize the
    unique linear transformation from $ \bigoplus\limits_{u\in A^*}K$
    to $\prod\limits_{u\in A^*}K$ we obtain precisely the vector space of
    the minimal automaton accepting $\langL$.
  
    \begin{minipage}{.5\textwidth}
      \[
        \begin{tikzcd}[column sep={2.6cm,between origins},row
          sep={1.3cm,between origins}]
          & \bigoplus\limits_{u\in A^*}K
          \arrow[rdd,bend left=34,
          "\langL?"]\arrow[dd,"\mathsf{reachedState}" description] &
          \\
          \\
          K
          \arrow[bend right=34,swap,"\langL"]{rdd} \arrow[ruu,,bend
          left=34, "\varepsilon"] \arrow[r,"i"] & Q \arrow[r,"f"]
          \arrow[dd,"\mathsf{acceptedLanguage}" description]
          % \arrow[dd,visible on=<2>,tail]
          & K
          \\
          \\
          &
          \prod\limits_{u\in A^*}K\arrow[bend right=34]{ruu}[swap]{\varepsilon ?} &  \\
        \end{tikzcd}
      \]
    \end{minipage}
    \begin{minipage}{.45\textwidth}
      \[
        \begin{tikzcd}[column sep={2.6cm,between origins},row
          sep={1.3cm,between origins}]
          & \bigoplus\limits_{u\in A^*}K
          \arrow[rdd,bend left=34, "\langL?"]\arrow[dd,two heads] &
          \\
          \\
          K
          \arrow[bend right=34,swap,"\langL"]{rdd} \arrow[ruu,,bend
          left=34, "\varepsilon"] \arrow[r,"i"] & \MinL \arrow[r,"f"]
          \arrow[dd,tail]
          % \arrow[dd,visible on=<2>,tail]
          & K
          \\
          \\
          &
          \prod\limits_{u\in A^*}K\arrow[bend right=34]{ruu}[swap]{\varepsilon ?} &  \\
        \end{tikzcd}
      \]
    \end{minipage}
  \end{enumerate}
\end{exa}

The examples above are instances of the next generic lemma, which
refines the statement of Corollary~\ref{cor:exist-Kan-suff-cond}
taking into account the particular structure of the input category
$\catIword$.

\begin{lem}[from~\cite{DBLP:conf/mfcs/ColcombetP17}]
  \label{lem:the-minimization-wheel}
  If $\catC$ has countable products and countable coproducts, and a
  factorization system, then the minimal "$\catC$-automaton"
  "accepting" $\langL$ is obtained via the "factorization" in the next
  diagram.
  \[
    \begin{tikzcd}[column sep={2cm,between origins},row
      sep={1.7cm,between origins}]
      & \coprod\limits_{u\in A^*}\langL(\objectLeft)
      \arrow[rd,bend left, "\langL ?"]\arrow[d,two heads] &
      \\
      \langL(\objectLeft)
      \arrow[bend right,swap,"\langL"]{rd} \arrow[ru,,bend left,
      "\varepsilon"] \arrow[r,"i"] & \MinL \arrow[r,"f"] \arrow[d,
      tail] & \langL(\objectRight)
      \\
      & \prod\limits_{u\in A^*}\langL(\objectRight) \arrow[bend
      right]{ru}[swap]{\varepsilon ?} &
    \end{tikzcd}
  \]
\end{lem}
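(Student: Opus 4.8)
The plan is to derive the statement from Lemma~\ref{lemma:minimal} applied to $\catK=\catAutoL$, for which I need an initial object, a final object, and a factorization system on $\catAutoL$. The factorization system is supplied by Lemma~\ref{lem:lifting-fact-syst}: it is the componentwise lift $(\EpiAutL,\MonoAutL)$ of the given $(\Epi,\Mono)$ on $\catC$. For the initial and final objects I will use the lemma relating initial and final automata to Kan extensions; so it suffices to prove that the left Kan extension $\Lan{\langL}{\iota}$ and the right Kan extension $\Ran{\langL}{\iota}$ exist -- even though $\catC$ is neither complete nor cocomplete -- and to recognise them as the two automata drawn in the diagram.

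The key point is the very rigid shape of $\catIword$. By the pointwise formula, $\Lan{\langL}{\iota}(c)$ is the colimit of $\langL$ over the comma category $\iota\downarrow c$. For $c\in\{\objectLeft,\objectRight\}$ this colimit is $\langL(c)$ up to isomorphism, since $\iota$ is full and faithful. For $c=\objectCenter$, the objects of $\iota\downarrow\objectCenter$ are the $\catIword$-arrows into $\objectCenter$ issued from $\objectLeft$ or $\objectRight$; there is no arrow $\objectRight\to\objectCenter$, and the arrows $\objectLeft\to\objectCenter$ are exactly the $\symbLeft u$ for $u\in A^*$, so $\iota\downarrow\objectCenter$ is a \emph{discrete} category with object set $A^*$. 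Hence the colimit is the coproduct $\coprod_{u\in A^*}\langL(\objectLeft)$, which exists because $\catC$ has countable coproducts. Tracking which arrow of $\catIword$ induces which map on the colimit then shows that $\Lan{\langL}{\iota}$ is precisely the automaton with state object $\coprod_{u\in A^*}\langL(\objectLeft)$, with $\varepsilon$ the coproduct injection of the $\varepsilon$-component, with $\delta_a$ the reindexing of injections along $u\mapsto ua$, and with $\langL?$ the copairing of the family $\bigl(\langL(\symbLeft u\symbRight)\bigr)_{u\in A^*}$. A dual computation -- $\objectCenter\downarrow\iota$ is again discrete and indexed by $A^*$ -- shows that $\Ran{\langL}{\iota}$ is the automaton with state object $\prod_{u\in A^*}\langL(\objectRight)$, with $\varepsilon?$ the projection of the $\varepsilon$-component, with $\delta_a$ the reindexing of projections along $u\mapsto au$, and with $\langL$ the pairing of $\bigl(\langL(\symbLeft u\symbRight)\bigr)_{u\in A^*}$. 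These are $\autAinitL$ and $\autAfinalL$.

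Now I assemble. By Lemma~\ref{lemma:minimal}, $\MinL$ is the middle object of the $(\EpiAutL,\MonoAutL)$-factorization of the unique morphism $\gamma\colon\autAinitL\to\autAfinalL$ in $\catAutoL$. Since this factorization system is computed componentwise, the $\objectCenter$-component of $\MinL$ is the $(\Epi,\Mono)$-factorization in $\catC$ of the $\objectCenter$-component $g$ of $\gamma$, while its $\objectLeft$- and $\objectRight$-components are $\langL(\objectLeft)$ and $\langL(\objectRight)$. It remains to check that $g$ is the map $\coprod_{u\in A^*}\langL(\objectLeft)\to\prod_{u\in A^*}\langL(\objectRight)$ being factorized in the statement: since $\gamma$ is a morphism in $\catAutoL$ it restricts to identities on $\objectLeft$ and $\objectRight$, so naturality of $\gamma$ at $\symbLeft$, at each letter $a$, and at $\symbRight$ forces $g\circ\varepsilon=\langL$, compatibility of $g$ with the reindexing transitions, and $\varepsilon?\circ g=\langL?$; a chase through the coproduct injections and product projections then shows the $(u,v)$-entry of $g$ to be $\langL(\symbLeft uv\symbRight)$, so $g$ is exactly the map determined by the data $\varepsilon$, $\varepsilon?$, $\langL$, $\langL?$ in the statement. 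Hence the factorization $\autAinitL\twoheadrightarrow\MinL\rightarrowtail\autAfinalL$ restricts on $\objectCenter$ to the factorization displayed, with $i$ and $f$ the induced initial and output maps of the minimal automaton; commutativity of the remaining triangles is then a routine chase.

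The main obstacle is the second paragraph: showing that these particular Kan extensions exist under just countable products and coproducts -- rather than the full completeness and cocompleteness assumed in Corollary~\ref{cor:exist-Kan-suff-cond} -- and reading off the induced structure correctly, matching $\delta_a$ on the initial automaton with right concatenation $u\mapsto ua$ and on the final automaton with the left quotient $u\mapsto au$. Everything afterwards is bookkeeping with the componentwise factorization system.
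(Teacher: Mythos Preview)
Your proposal is correct and follows essentially the same approach as the paper: compute the pointwise Kan extensions by exploiting that $\iota\downarrow\objectCenter$ (resp.\ $\objectCenter\downarrow\iota$) is discrete and $A^*$-indexed because there are no arrows $\objectRight\to\objectCenter$ (resp.\ $\objectCenter\to\objectLeft$), lift the factorization system via Lemma~\ref{lem:lifting-fact-syst}, and invoke Lemma~\ref{lemma:minimal}. The paper's discussion after the lemma gives exactly this argument in sketch form (referring to~\cite{DBLP:conf/mfcs/ColcombetP17} for a direct verification of initiality), so you have simply fleshed out the same route with slightly more care about the shape of $\gamma$.
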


The "initial automaton" $\autAinitL$ has as state space the copower
$\coprod\limits_{u\in A^*}\langL(\objectLeft)$. The map
\[
  \varepsilon=\autAinitL(\symbLeft)\colon\langL(\objectLeft)\to\coprod\limits_{u\in
    A^*}\langL(\objectLeft)
\] is the coproduct injection corresponding to $\varepsilon\in
A^*$. The map
\[
  \langL?=\autAinitL(\symbRight)\colon\coprod\limits_{u\in
    A^*}\langL(\objectLeft)\to\langL(\objectRight)
\] is given on the component of the coproduct corresponding to
$u\in A^*$ by $\langL(\symbLeft u\symbRight)$. Lastly, for each
$a\in A$ the map $\autAinitL(a)$ is given on the component of the
coproduct that corresponds to $u\in A^*$ as the coproduct injection
corresponding to the word $ua$.

The "final automaton" $\autAfinalL$ can be obtained by a duality
argument. It has as state space the power
$\prod\limits_{u\in A^*}\langL(\objectRight)$. The map
\[
  \langL=\autAfinalL(\symbLeft)\colon\langL(\objectLeft)\to\prod\limits_{u\in
    A^*}\langL(\objectRight)
\] is obtained using the universal property of the product by
considering for each $u\in A^*$ the map
$\langL(\symbLeft
u\symbRight)\colon\langL(\objectLeft)\to\langL(\objectRight)$.
The map
\[
  \varepsilon?=\autAfinalL(\symbRight)\colon\prod\limits_{u\in
    A^*}\langL(\objectRight)\to\langL(\objectRight)
\] is the projection corresponding to the $\varepsilon$ component of the
product.  Lastly, for each $a\in A$ the map $\autAfinalL(a)$ is
obtained by taking the product by $u\in A^*$ of the projections
$\prod\limits_{u\in A^*}\langL(\objectRight)\to\langL(\objectRight)$ on
the $au$ component.
% on the component of the coproduct that corresponds to $u\in A^*$ as
% the coproduct injection corresponding to the word $ua$.

In~\cite{DBLP:conf/mfcs/ColcombetP17} we gave a direct proof for the
initiality of $\autAinitL$. Here we can also notice that this is
exactly the result of colimit computation of the left Kan extension of
$\langL$ along $\iota$ mentioned in
Corollary~\ref{cor:exist-Kan-suff-cond}. Indeed, we can use the fact
that there are no morphisms from $\objectRight$ to $\objectCenter$ in
$\catI$ and the only morphism on which you take the colimit are of the
form $\symbLeft w\colon \objectLeft\to\objectCenter$ for all
$w\in\alphabetA^*$.

For the final automaton, the proof follows by duality.

\begin{rem}
  \label{rem:comp-well-pointed-coalgebras}
  When the category $\catC$ has copowers, then a
  "$(\catC, X,Y)$-automaton" gives rise to a pair $(\alpha,f)$
  consisting of an algebra $\alpha\colon FQ\to Q$ for the functor
  $F\colon\catC\to\catC$, $FZ=X+ A\cdot Z$ and of a morphism
  $f\colon Q\to Y$.
  
  Dually, when the category $\catC$ has powers, then a
  "$(\catC, X,Y)$-automaton" gives rise to a pair $(\xi,i)$ consisting
  of a coalgebra $\xi\colon Q\to HQ$ for the functor
  $H\colon\catC\to\catC$, $HZ=Y\times Z^A$ and of a morphism
  $i\colon X\to Q$. This is a mild generalization of the notion of
  pointed colagebras, see e.g.~\cite{AdamekMMS:Well-pointed-coalg},
  where $\catC$ is assumed concrete over $\Set$ and the map $i$
  corresponds to selecting an element in
  $Q$. In~\cite{AdamekMMS:Well-pointed-coalg} minimal automata are
  seen as \emph{well-pointed coalgebras}. However, in
  Section~\ref{sec:choffrut} we will see an example of automata in a
  "Kleisli category" that does not have powers, and where the
  equivalence between the functorial approach and the coalgebraic one
  breaks.
\end{rem}

\subsection{Lifting Adjunctions to Categories of Automata}
\label{sec:lifting-adjucntions-to-automata}

In Example~\ref{ex:automata-as-functors} we have seen how languages
$L\subseteq A^*$ can be modeled as functors from $\catOword$ to
either $\Set$ or $\Rel$, using the fact that
$\Set(1,2)\simeq\Rel(1,1)\simeq 2$. We will see how the relationship
between deterministic and non-deterministic automata, i.e.,
"$(\Set,1,2)$-automata" and "$(\Rel,1,1)$-automata", can be derived from
the relationship between the categories $\Set$ and $\Rel$, and, is a
special instance of a more general phenomenon. This is the subject of
the present section, in which we will juggle with "languages" and
"automata" interpreted over different output categories connected via
"adjunctions".

Assume we have an adjunction between two categories $\catC$ and
$\catD$
\knowledgeconfigure{quotation=false}%
\[
  \begin{tikzcd}
    \catC \ar[rr,bend left=14, "F"] & \bot &\catD\ar[ll,bend left=14,
    "G"]\,,
  \end{tikzcd}
\]
\knowledgeconfigure{quotation=true}%
with $F\dashv G\colon \catD\to\catC$.  Let $(-)^*$ and $(-)_*$ denote
the induced natural isomorphisms between the homsets.  In particular,
given objects $I$ in $\catC$ and $O$ in $\catD$, we have bijections
\knowledgeconfigure{quotation=false}%
\begin{equation}
  \label{eq:bij-lang}
  \begin{tikzcd}[column sep=large]
    \catC(I,GO)\ar[rr,shift left=2, "(-)^*"] &
    &\catD(FI,O)\ar[ll,shift left=2, "(-)_*"]
  \end{tikzcd}
\end{equation}
\knowledgeconfigure{quotation=true}%
These bijections induce a one-to-one correspondence between
"$(\catC,I,GO)$-languages" and "$(\catD,FI,O)$-languages", which by an
abuse of notation we denote by the same symbols:
\[
  \begin{tikzcd}[column sep=large]
    \text{\kl{$(\catC,I,GO)$-languages}}\ar[rr, shift left=2, "(-)^*"]
    & &\text{\kl{$(\catD,FI,O)$-languages}}\ar[ll,shift left=2,
    "(-)_*"]
  \end{tikzcd}
\]

Indeed, given a "$(\catC,I,GO)$-language" $\langL\colon \catO\to\catC$
we obtain a "$(\catD,FI,O)$-language" $\langL^*\colon \catO\to\catD$
by setting
$ \langL^*(\symbLeft w\symbRight) = (\langL(\symbLeft w\symbRight))^*
\in \catD(FI,O)$. Conversely, given a "$(\catD,FI,O)$-language"
$\langL'$ we obtain a "$(\catC,I,GO)$-language" $(\langL')_*$ by
setting
$(\langL')_*(\symbLeft w\symbRight)=(\langL'(\symbLeft
w\symbRight))_*$.

The next lemma shows how we can lift an adjunction between the output
categories $\catC$ and $\catD$ to an adjunction between categories of
automata that accept essentially the same language, admitting two
equivalent representations in $\catC$ and $\catD$.

\begin{lem}
  \label{lem:lifting-adjunctions}
  Assume $\langL_\catC$ and $\langL_\catD$ are "$(\catC,I,GO)$-@language",
  respectively "$(\catD,FI,O)$-languages" so that
  $\langL_\catD=(\langL_\catC)^*$. Then the adjunction $F\dashv G$
  lifts to an adjunction
  $\overline{F}\dashv\overline{G}\colon\catAutoLD\to\catAutoLC$. The
  lifted functors $\overline{F}$ and $\overline{G}$ are defined as
  $F$, resp. $G$ on the state object, that is, the following diagram
  commutes
  \knowledgeconfigure{quotation=false}%
  \[
    \begin{tikzcd}[column sep=large,row sep=5mm]
      \catAutoLC\ar[dd,"\St"] \ar[rr,bend left=12, "\overline{F}"] & \bot &\catAutoLD\ar[ll,bend left=12, "\overline{G}"]\ar[dd,"\St"]\\
      & & \\
      \catC \ar[rr,bend left=14, "F"] & \bot &\catD\ar[ll,bend
      left=14, "G"]
    \end{tikzcd}
  \]
  \knowledgeconfigure{quotation=true}%
  where the functor $\St\colon\catAutoLC\to\catC$ is the evaluation at
  $\objectCenter$, that is, it sends an automaton
  $\autA\colon\catIword\to\catC$ to $\autA(\objectCenter)$.
\end{lem}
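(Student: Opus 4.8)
The plan is to build $\overline{F}$ and $\overline{G}$ by hand on objects and morphisms, check they land in the right automata categories, and then derive the adjunction from the one between $\catC$ and $\catD$. First I would define $\overline{G}$: given a $\catD$-automaton $\autB\colon\catIword\to\catD$ accepting $\langL_\catD$, set $\overline{G}(\autB)(\objectCenter)=G(\autB(\objectCenter))$, $\overline{G}(\autB)(\objectLeft)=I$, $\overline{G}(\autB)(\objectRight)=GO$, and define the action on the generating arrows by $\overline{G}(\autB)(\symbLeft)=(\autB(\symbLeft))_*\colon I\to G(\autB(\objectCenter))$ (using the transpose under $F\dashv G$ applied to the map $FI\to\autB(\objectCenter)$), $\overline{G}(\autB)(a)=G(\autB(a))$ for each letter $a$, and $\overline{G}(\autB)(\symbRight)=G(\autB(\symbRight))\colon G(\autB(\objectCenter))\to GO$. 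One must then check functoriality — that these assignments respect the composition $\symbLeft w\symbRight\mapsto \langL_\catC(\symbLeft w\symbRight)$ — which reduces to the naturality of the transpose $(-)_*$ in the domain variable, together with the hypothesis $\langL_\catD=(\langL_\catC)^*$. Dually, I would define $\overline{F}(\autA)(\objectCenter)=F(\autA(\objectCenter))$ with $\overline{F}(\autA)(\symbLeft)=(\autA(\symbLeft))^*\colon FI\to F(\autA(\objectCenter))$, $\overline{F}(\autA)(a)=F(\autA(a))$, and $\overline{F}(\autA)(\symbRight)=(\autA(\symbRight))^*$, i.e. the composite $F(\autA(\objectCenter))\xrightarrow{F(\autA(\symbRight))}FGO\xrightarrow{\varepsilon_O}O$ using the counit; again functoriality follows from naturality of $(-)^*$. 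On morphisms, both $\overline{F}$ and $\overline{G}$ act as $F$, resp.\ $G$, on the $\objectCenter$-component and as the identity on the other two components; naturality of the resulting transformation is immediate from naturality of the original one and the fact that $\alpha\circ\iota=\mathit{id}$.

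Next I would establish the adjunction $\overline{F}\dashv\overline{G}$. The cleanest route is to exhibit the natural bijection $\catAutoLD(\overline{F}\autA,\autB)\cong\catAutoLC(\autA,\overline{G}\autB)$ directly: a morphism $\overline{F}\autA\to\autB$ in $\catAutoLD$ is a natural transformation whose only non-trivial component is a map $F(\autA(\objectCenter))\to\autB(\objectCenter)$ in $\catD$ commuting with the automaton structure; transpose it under $F\dashv G$ to a map $\autA(\objectCenter)\to G(\autB(\objectCenter))$ in $\catC$. I then check that the naturality squares for $\symbLeft$, $a$, and $\symbRight$ on one side translate exactly into those on the other side — the square for $a$ uses naturality of the adjunction bijection with respect to post-composition with $\autB(a)$ and pre-composition with $\autA(a)$, the square for $\symbLeft$ uses the triangle identities relating $(\autA(\symbLeft))^*$ and $(\autB(\symbLeft))_*$, and the square for $\symbRight$ similarly. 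This bijection is natural in $\autA$ and $\autB$ because the underlying adjunction bijection is, so it assembles into the claimed adjunction; and the commuting square with $\St$ holds by construction since $\St\circ\overline{F}=F\circ\St$ and $\St\circ\overline{G}=G\circ\St$ on the nose.

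The main obstacle I expect is the bookkeeping around the arrows $\symbLeft$ and $\symbRight$: one has to be careful that the transposes used in defining $\overline{F}(\autA)(\symbRight)$ and $\overline{G}(\autB)(\symbLeft)$ are the ``correct'' ones so that the composite $\symbLeft w\symbRight$ is sent to $\langL(\symbLeft w\symbRight)^*$ and not to some twisted variant, and that the two triangle identities of $F\dashv G$ are invoked in the right places when checking that the hom-set bijection respects these boundary arrows. Everything else — functoriality on the letters $a$, naturality on morphisms, and naturality of the final bijection — is a routine transcription of the corresponding facts for $F\dashv G$, using repeatedly that $\catOword$ is a full subcategory of $\catIword$ so that the constraint $\alpha\circ\iota=\mathit{id}_\langL$ pins down the $\objectLeft$- and $\objectRight$-components to be identities.
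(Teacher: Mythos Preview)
Your approach is essentially identical to the paper's proof sketch: define $\overline{F}$ and $\overline{G}$ explicitly on the generating arrows of $\catIword$ and then exhibit the hom-set bijection $\catAutoLD(\overline{F}\autA,\autB)\cong\catAutoLC(\autA,\overline{G}\autB)$ by transposing at the $\objectCenter$-component. One small correction: $\overline{F}(\autA)(\symbLeft)$ should simply be $F(\autA(\symbLeft))$, not a transpose---the map $\autA(\symbLeft)\colon I\to\autA(\objectCenter)$ has no adjoint transpose since its codomain is not of the form $G(-)$, and indeed the type $FI\to F(\autA(\objectCenter))$ you wrote is already that of $F(\autA(\symbLeft))$, which is exactly the paper's definition.
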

\begin{proof}[Proof sketch]
  The functor $\overline{F}$ maps an "$\catC$-automaton"
  $\autA\colon\catIword\to\catC$ from $\catAutoLC$ to the
  $\catD$-automaton $\overline{F}\autA\colon \catIword\to\catD$ mapping
  $\symbLeft\colon\objectLeft\to\objectCenter$ to
  $F(\autA(\symbLeft))$, $a\colon\objectCenter\to\objectCenter$ to
  $F(\autA(a))$ and $\symbRight\colon\objectCenter\to\objectRight$ to
  the adjoint transpose
  $(\autA(\symbRight))^*\colon F\autA(\objectCenter)\to O$ of
  $\autA(\symbRight)\colon\autA(\objectCenter)\to GO$. In a diagram
  \[
    \begin{tikzcd}
      I\arrow[r,"\autA(\symbLeft)"] &
      \autA(\objectCenter)\arrow[loop,looseness=6,swap,
      "\autA(a)"]\arrow[r,"\autA(\symbRight)"] & GO
      &{}\arrow[r,mapsto,"\overline{F}"] & {} & FI
      \arrow[r,"F(\autA(\symbLeft))"] &
      F\autA(\objectCenter)\arrow[loop,looseness=6,swap,
      "F(\autA(a))"]\arrow[r,"(\autA(\symbRight))^*"] & O
    \end{tikzcd}
  \]
  The functor $\overline{G}$ is defined similarly on an
  "$\catD$-automaton" $\autB$.
  \[
    \begin{tikzcd}
      FI\arrow[r,"\autB(\symbLeft)"] &
      \autB(\objectCenter)\arrow[loop,looseness=6,swap,
      "\autB(a)"]\arrow[r,"\autB(\symbRight)"] & O
      &{}\arrow[r,mapsto,"\overline{G}"] & {} & I
      \arrow[r,"(\autB(\symbLeft))_*"] &
      G\autB(\objectCenter)\arrow[loop,looseness=6,swap,
      "G(\autB(a))"]\arrow[r,"G(\autB(\symbRight))"] & GO
    \end{tikzcd}
  \]
  We show next that we have an isomorphism
  \[
    \catAutoLD(\overline{F}\autA,\autB)\cong\catAutoLC(\autA,\overline{G}\autB)
  \]
  Indeed, consider a morphism $\alpha\colon\overline{F}\autA\to\autB$
  in $\catAutoLD$.  We define a natural transformation
  $\alpha_*\colon \autA\to\overline{G}\autB$ by setting its component
  at $\objectCenter$ as the adjoint transpose
  $(\alpha_{\objectCenter})_*$ of
  \[\alpha_{\objectCenter}\colon
    F\autA(\objectCenter)\to\autB(\objectCenter)\,.\] It is now easy
  to verify that $\alpha_*$ is indeed an automata morphism in
  $\catAutoLC$ and that the mapping $\alpha\mapsto\alpha_*$ gives rise
  to the desired isomorphism.
\end{proof}

\subsection{Application: non-deterministic automata and
  (co)determinization}
\label{sec:det-codet-adj}

As a first application of Lemma~\ref{lem:lifting-adjunctions}, we see
how determinization of non-deterministic automata can be seen as a
right adjoint to the inclusion of deterministic automata into
non-deterministic ones. Similarly, codeterminization is a left adjoint
to the inclusion of codeterministic automata into non-deterministic
ones.

% \TODO{redundancy with the newly added example in Sec 2}
% A
% \intro{non-deterministic automaton} is completely determined by the
% relations described in the following diagram, where we see the initial
% states as a relation from $1$ to the set of states $Q$, the final
% states as a relation from $Q$ to $1$ and the transition relation by
% any input letter $a$, as a relation on $Q$
% \[
%   \begin{tikzcd}
%     1\arrow[r,negated,"i"] & Q\arrow[loop,looseness=6,negated,swap,
%     "\delta_a"]\arrow[r,negated,"f"] & 1
%   \end{tikzcd}
% \]

% We can model non-deterministic automata as functors by taking as output
% category $\Rel$ -- the category whose objects are sets and maps are
% relations between them. We consider $\Rel$-automata
% $\autA\colon \catI\to\Rel$ such that $\autA(\objectLeft)=1$ and
% $\autA(\objectRight)=1$. In this section we show how to determinize a
% $\Rel$-automaton, that is, how to turn it into a $\Set$-automaton and
% how to codeterminize it, that is, how to obtain a $\Setop$-automaton,
% all recognizing the same language.

\AP
Given a language $L\subseteq A^*$ we can model it in several
equivalent ways: as a "$(\Set,1,2)$-language" $\intro*\langLSet$, or as a
"$(\Setop,2,1)$-language" $\intro*\langLSetop$, or, lastly as a
"$(\Rel,1,1)$-language" $\intro*\langLRel$. This is because we can model the
fact $w\in L$ using a morphisms in either of the three isomorphic
homsets
\begin{equation}
  \label{eq:iso-homsets}
  \Set(1,2)\cong \Setop(2,1)\cong \Rel(1,1)\cong 2\,.
\end{equation}

These isomorphisms, can be seen in turn as arising from the next two
adjunctions:
\[
  \begin{tikzcd}%[column sep={0.9mm},row sep={3.5mm}]
    \Set \ar[rr,bend left=20, "\FPow"] & \bot &\Rel\ar[ll,bend
    left=20, "\UPow"] \ar[rr,bend left=20, "\UPowop"]& \bot & \Setop
    \ar[ll,bend left=20, "\FPowop "]
  \end{tikzcd}
\]
\AP The adjunction between $\Set$ and $\Rel$ is the Kleisli adjunction
for the powerset monad: $\intro*\FPow$ is identity on objects as maps
a function $f\colon X\to Y$ to itself $f\colon X\tokl Y$, but seen as
a relation. The functor $\intro*\UPow$ maps $X$ to its powerset
$\Pow(X)$, and a relation $R\colon X\to Y$ to the function
$\UPow(R)\colon\Pow(X)\to\Pow(Y)$ mapping $A\subseteq X$ to
$\{y\in Y\mid\exists x\in X. (x,y)\in R\}$.
\phantomintro\UPowop\phantomintro\FPowop The adjunction between
$\Setop$ and $\Rel$ is the dual of the previous one, composed with the
self-duality of $\Rel$.

The isomorphisms of homsets
from~\eqref{eq:iso-homsets} can be rephrased as
\[
  \Set(1,\UPow 1)\cong\Rel(\FPow 1,1)\textrm{\quad and \quad} \Setop(\UPowop 1,1)\cong\Rel(1,\FPowop 1)\,.
  \]
  In particular, we can regard $\langLSet$ above as a
  $(\Set,1,\UPow 1)$-language and $\langLRel$ as a
  $(\Rel,\FPow 1,1)$-language. Using the notations from
  Lemma~\ref{lem:lifting-adjunctions}, we have that
  $\langLRel=(\langLSet)^*$. Similarly, we obtain that
  $\langLRel=(\langLSetop)_*$.

  Determinization and codeterminization (without restriction to
  reachable states as in the operations $\determinize$ and
  $\codeterminize$ introduced in Section~\ref{sec:brzozowski}) of a
  $\Rel$-automaton can be seen as applications of
  Lemma~\ref{lem:lifting-adjunctions} and are obtained by lifting the
  adjunctions between $\Set$, $\Rel$ and $\Setop$ as in the next
  diagram. The left adjoint $\overline\FPow$ transforms a
  deterministic automaton into a non-deterministic one, while the
  right adjoint $\overline{\UPow}$ is the ""determinization
  functor"". On the other hand, the left adjoint functor
  $\overline{\UPowop}$ is the ""codeterminization functor"".

\begin{equation}
  \label{eq:2}
  \begin{tikzcd}%[column sep={0.9mm},row sep={3.5mm}]
    \catAutoLSet\ar[ddd,""] \ar[rr,bend left=15, "\overline{\FPow}"] &
    \bot &\catAutoLRel\ar[ll,bend left=15,
    "\overline{\UPow}"]\ar[ddd,""]
    \ar[rr,bend left=15, "\overline{\UPowop}"] & \bot & \catAutoLSetop \ar[ll,bend left=15, "\overline{\FPowop}"]\ar[ddd," "]\\
    & & \\
    & & \\
    \Set \ar[rr,bend left=20, "\FPow"] & \bot &\Rel\ar[ll,bend
    left=20, "\UPow"] \ar[rr,bend left=20, "\UPowop"]& \bot & \Setop
    \ar[ll,bend left=20, "\FPowop "]
  \end{tikzcd}
\end{equation}
% \AP The adjunction between $\Set$ and $\Rel$ is the Kleisli adjunction
% for the powerset monad: $\intro*\FPow$ is identity on objects as maps
% a function $f\colon X\to Y$ to itself $f\colon X\tokl Y$, but seen as
% a relation. The functor $\intro*\UPow$ maps $X$ to its powerset
% $\Pow(X)$, and a relation $R\colon X\to Y$ to the function
% $\UPow(R)\colon\Pow(X)\to\Pow(Y)$ mapping $A\subseteq X$ to
% $\{y\in Y\mid\exists x\in X. (x,y)\in R\}$.
%x
% \phantomintro\UPowop\phantomintro\FPowop The adjunction between
% $\Setop$ and $\Rel$ is the dual of the previous one, composed with the
% self-duality of $\Rel$.

\section{Choffrut's minimization of subsequential transducers}
\label{sec:choffrut}

In~\cite{Choffrut79,Choffrut2003} Choffrut establishes a minimality
result for "subsequential transducers", which are deterministic
automata that output a word while processing their input.  In this
section, we show that this result can be established in the functorial
framework of this paper.

We first present the model of "subsequential transducers" in
Section~\ref{subsection:subsequential}, show how these can be
identified with "automata in the Kleisli category of a suitably chosen
monad", and state the minimization result,
Theorem~\ref{theorem:minimization transducer}. The subsequent sections
provide the necessary material for proving the theorem.

\subsection{Subsequential transducers and automata in a Kleisli
  category}
\label{subsection:subsequential}

"Subsequential transducers" are (finite state) machines that compute
partial functions from input words in some alphabet~$\alphabetA$ to
output words in some other alphabet~$\alphabetB$.  In this section, we
recall the classical definition of these objects, and show how it can
be phrased categorically.

\begin{defi}
  A ""subsequential transducer"" is a tuple
  \begin{align*}
    T=(Q, \alphabetA, \alphabetB, q_0, t, u_0,(-\cdot
    a)_{a\in\alphabetA},(-*a)_{a\in\alphabetA})\ ,
  \end{align*}
  where
  \begin{itemize}
  \item $\alphabetA$ is the ""input alphabet@input alphabetSS"" and
    $\alphabetB$ the ""output one@output alphabetSS"",
  \item $Q$ is a (finite) set of ""states@statesSS"".
  \item $q_0$ is either undefined or belongs to~$Q$ and is called the
    ""initial state of the transducer"".
  \item $t\colon Q\rightharpoonup \alphabetB^*$ is a ""partial
    termination function"".
  \item $u_0\in \alphabetB^*$ is either undefined and is defined if
    and only if~$q_0$ is, and is the ""initialization value"".
  \item $-\cdot a\colon Q\rightharpoonup Q$ is the ""partial
    transition function for the letter~$a$"", for all $a\in A$.
  \item $-* a\colon Q\rightharpoonup \alphabetB^*$ is the ""partial
    production function for the letter~$a$"" for all
    $a\in \alphabetA$; it is required that $q* a$ be defined if and
    only if $(q\cdot a)$ is.
  \end{itemize}
  The "subsequential transducer" computes a partial function
  $\intro*\semTrans T\colon \alphabetA^*\rightharpoonup \alphabetB^*$
  defined as:
  \begin{align*}
    \semTrans T(a_1\dots a_n)&=u_0(q_0*a_1)(q_1*a_2)\dots
                               (q_{n-1}*a_n)t(q_n)&\text{for all~$a_1\dots a_n\in \alphabetA^*$,}
  \end{align*}
  where each~$q_i$ is either undefined or belongs to~$Q$, with $q_0$
  inherited from the definition of~$T$, and $q_i=q_{i-1}\cdot a_i$ for
  all~$i=1\dots n$.
\end{defi}

\AP These "subsequential transducers" are modeled in our framework as
"automata in the category" of free algebras for the
monad~$\monadTrans$, that we describe now.
\begin{defi}
  The monad $\intro*\monadTrans\colon\Set\to\Set$ is defined by
  \[
    \monadTrans X=B^*\times X + 1
  \]
  with unit $\eta_X$ and multiplication $\mu_X$ defined for all
  $x\in X$ and $w,u\in B^*$ as:
  \begin{align*}
    &&
       \mu_X\colon\qquad\monadTrans^2 X&\to \monadTrans X\\
    \eta_X\colon\quad X &\to B^*\times X +1&
                                             (w,(u,x))&\mapsto (wu, x)\\
    x&\mapsto (\varepsilon, x)&
                                (w,\bot) &\mapsto \bot\\
    && \bot &\mapsto \bot
  \end{align*}
  where we denote by $\bot$ the unique element of $1$ (used to model
  the partiality of functions).
\end{defi}

\AP Recall that the category of free $\monadTrans$-algebras, i.e., the
""Kleisli category"" $\intro*\KlTrans$ for $\monadTrans$, has as
objects sets $X,Y, \ldots$ and as ""morphisms@Kleisli morphism""
$f\colon X\tokl Y$ functions $f\colon X\to B^*\times X+1$ in $\Set$
(that is partial functions from $X$ to $B^*\times Y$).
% The composition of morphisms is in this category is nothing but the
% usual composition of partial maps.

\AP Let $T$ be a "subsequential transducer". The "initial state of the
transducer" $q_0$ and the "initialization value" $u_0$ together form a
morphism $i\colon 1\tokl Q$ in the category $\KlTrans$. Similarly, the
"partial transition function" and the "partial production function"
for a letter $a$ of the input alphabet $A$ are naturally identified to
"Kleisli morphisms" $\delta_a\colon Q\tokl Q$ in $\KlTrans$. Finally,
the "partial termination function" together with the "partial
production function" are nothing but a "Kleisli morphism" of the form
$t\colon Q\tokl 1$. To summarize, we obtained that a "subsequential
transducer" $T$ in the sense of~\cite{Choffrut2003} is specified by
the following "morphisms@Kleisli morphism" in $\KlTrans$:
\[
  \begin{tikzcd}
    1\arrow[r,negated,"i"] & Q\arrow[loop,looseness=6,negated,swap,
    "\delta_a"]\arrow[r,negated,"t"] & 1
  \end{tikzcd}
\]
that is, by a functor $\autA_T\colon \catIword\to\KlTrans$ or
equivalently, a "$(\KlTrans,1,1)$-automaton". The subsequential
function realized by the transducer $T$ is a partial function
$A^*\rightharpoonup B^*$ and is fully captured by the
"$(\KlTrans,1,1)$-language" $\langL_T\colon \catOword\to\KlTrans$ accepted
by $\autA_T$, which is obtained as $\autA_T\circ\iota$. Indeed, this
$\KlTrans$-language gives for each word $w\in A^*$ a "Kleisli
morphism" $\langL_T(\symbLeft w\symbRight)\colon 1\tokl 1$, or
equivalently, outputs for each word in $A^*$ either a word in $B^*$ or
the undefined element $\bot$.
 
Putting all this together, we can state the following lemma, which
validates the categorical encoding of "subsequential transducers":
\begin{lem}
  "Subsequential transducers" are in one to one correspondence with
  "$(\KlTrans,1,1)$-automata", and partial maps from~$A^*$ to~$B^*$
  are in one to one correspondence with "$(\KlTrans,1,1)$-languages".
  Furthermore, the acceptance of languages is preserved under these
  bijections.
\end{lem}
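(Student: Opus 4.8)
The plan is to prove the statement by unwinding the relevant definitions; the argument is essentially bookkeeping, organised in three steps.

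\textbf{Step 1: automata as generating data.} First I would record the structural fact that $\catIword$ is freely generated as a category by the graph with arrows $\symbLeft\colon\objectLeft\to\objectCenter$, $a\colon\objectCenter\to\objectCenter$ for $a\in\alphabetA$, and $\symbRight\colon\objectCenter\to\objectRight$, the only composites being concatenations of loops at $\objectCenter$; in particular there is no relation beyond the free-monoid structure on $\catIword(\objectCenter,\objectCenter)\cong\alphabetA^*$. It then follows that a functor $\autA\colon\catIword\to\KlTrans$ with $\autA(\objectLeft)=\autA(\objectRight)=1$ amounts exactly to a set $Q=\autA(\objectCenter)$, a Kleisli morphism $i=\autA(\symbLeft)\colon 1\tokl Q$, a family $\delta_a=\autA(a)\colon Q\tokl Q$ indexed by $a\in\alphabetA$ (which freely extends to a monoid homomorphism $\alphabetA^*\to\KlTrans(Q,Q)$, imposing no further constraint), and a Kleisli morphism $t=\autA(\symbRight)\colon Q\tokl 1$ --- that is, precisely a $(\KlTrans,1,1)$-automaton.

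\textbf{Step 2: the dictionary.} Next I would translate each Kleisli morphism via the description of $\KlTrans$ recalled above, namely that a morphism $X\tokl Y$ is a partial function $X\rightharpoonup B^*\times Y$. Then $i\colon 1\tokl Q$ is either undefined or a pair $(u_0,q_0)\in B^*\times Q$, recovering the initial state $q_0$ and the initialization value $u_0$ with the constraint ``$u_0$ defined iff $q_0$ defined'' automatic; $\delta_a\colon Q\tokl Q$ assigns to each $q$ either nothing or a pair $(q*a,\,q\cdot a)$, i.e.\ a partial transition function $-\cdot a$ together with a partial production function $-*a$ of the same domain of definition, which is exactly the required ``$q*a$ defined iff $q\cdot a$ defined'' condition; and $t\colon Q\tokl 1$ is a partial function $Q\rightharpoonup B^*\times 1\cong B^*$, i.e.\ the partial termination function. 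Reading this dictionary in both directions yields the first bijection. For the second, I would note that a $(\KlTrans,1,1)$-language is a functor $\langL\colon\catOword\to\KlTrans$ with $\langL(\objectLeft)=\langL(\objectRight)=1$, and that the only non-identity arrows of $\catOword$ are the pairwise distinct $\symbLeft w\symbRight\colon\objectLeft\to\objectRight$ for $w\in\alphabetA^*$; hence such a functor is precisely a family of Kleisli morphisms $\langL(\symbLeft w\symbRight)\colon 1\tokl 1$ indexed by $\alphabetA^*$, which is the same datum as a partial function $\alphabetA^*\rightharpoonup B^*$.

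\textbf{Step 3: preservation of acceptance.} Finally, for a transducer $T$ with associated automaton $\autA_T$, the accepted language $\langL_T=\autA_T\circ\iota$ sends $\symbLeft a_1\cdots a_n\symbRight$ to the composite $t\circ\delta_{a_n}\circ\cdots\circ\delta_{a_1}\circ i$ taken in $\KlTrans$. I would then unfold Kleisli composition for $\monadTrans$ --- through $\mu$ it composes the underlying partial functions, concatenates the accumulated output words on the left, and propagates $\bot$ --- and prove by induction on $n$ that this composite is defined exactly when $q_0$ and all $q_i=q_{i-1}\cdot a_i$ are defined, with value $u_0(q_0*a_1)(q_1*a_2)\cdots(q_{n-1}*a_n)t(q_n)$, that is $\semTrans{T}(a_1\cdots a_n)$. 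Thus $\langL_T$ corresponds under Step 2 exactly to $\semTrans{T}$, which is the asserted compatibility. I do not expect a genuine obstacle: the result is a definitional unfolding. The two points that need care are the free-generation claim for $\catIword$ in Step 1 --- which guarantees that passing from a functor to its generating arrows loses no relation --- and the explicit computation of Kleisli composition in Step 3, where one must track the order in which output words are concatenated and the propagation of the undefined element $\bot$, so that the formula obtained matches Choffrut's definition of $\semTrans{T}$ verbatim.
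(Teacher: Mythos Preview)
Your proposal is correct and follows the same approach as the paper: the lemma is stated there without a separate proof, the argument being the definitional unfolding sketched in the paragraphs immediately preceding it (identifying $i$, $\delta_a$, $t$ with the transducer data and reading off $\langL_T(\symbLeft w\symbRight)$ as a Kleisli composite). Your version is simply a more careful and explicit rendering of that same bookkeeping, including the free-generation observation for $\catIword$ and the inductive verification that Kleisli composition reproduces $\semTrans{T}$.
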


\begin{rem}
  The morphisms of "$(\KlTrans,1,1)$-automata" as in
  Definition~\ref{def:aut-accepts-lang} are very similar to the
  morphisms of subsequential structures provided
  in~\cite[Definition~3.4]{Hansen10}, with the sole exception that now
  we also have to take into account the "initial states of the
  transducers". On the other hand, Choffrut~\cite{Choffrut2003}
  considers morphisms of transducers that may also output formal
  inverses of words in $\alphabetB^*$. Nevertheless, as discussed
  in~\cite[Remark~3.12]{Hansen10} and as follows from the development
  of the present paper, this is not necessary for minimization.
\end{rem}

In the rest of this section we will see how to obtain Choffrut's
minimization result as an application of Lemma~\ref{lemma:minimal}.
I.e., we have to provide in the category of
"$(\KlTrans,1,1)$-automata",
\begin{enumerate}
\item an "initial object@initial transducer",
\item a "final object@final transducer", and,
\item a "factorization system@factorization transducer".
\end{enumerate}

The existence of the "initial transducer" is addressed in
Section~\ref{subsection:initial transducer}, the one of the "final
transducer" is the subject of Section~\ref{subsection:final
  transducer}.  In Section~\ref{subsection:factorization transducers}
we show how to construct a "factorization system@factorization
transducer".  Putting together all these results, we obtain:

\begin{thm}[Categorical version of 
  \cite{Choffrut79,Choffrut2003}]\label{theorem:minimization
    transducer}
  For every~"$(\KlTrans,1,1)$-language", there exists a minimal
  "$(\KlTrans,1,1)$-automaton" for it.
\end{thm}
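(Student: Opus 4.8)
The plan is to apply the abstract minimization lemma (Lemma~\ref{lemma:minimal}) to the category $\catAutoLKlT$ of $(\KlTrans,1,1)$-automata accepting a fixed $(\KlTrans,1,1)$-language $\langL$. By that lemma, it suffices to exhibit three ingredients: an initial object, a final object, and a factorization system on $\catAutoLKlT$. Accordingly, the proof is really a bookkeeping step that collects the three constructions carried out in the following sections: the initial transducer from Section~\ref{subsection:initial transducer}, the final transducer from Section~\ref{subsection:final transducer}, and the factorization system from Section~\ref{subsection:factorization transducers}. Once these are in place, Lemma~\ref{lemma:minimal} immediately yields an $(\Epi,\Mono)$-minimal automaton $\Min{}$, obtained as the $(\Epi,\Mono)$-factorization of the unique arrow from the initial transducer to the final one, and this is by definition the desired minimal $(\KlTrans,1,1)$-automaton for $\langL$.

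First I would invoke the one-to-one correspondence lemma established just above the theorem statement, so that it is enough to argue about $(\KlTrans,1,1)$-automata rather than about subsequential transducers directly; this also transports acceptance of languages along the bijection. Then I would note that, because the input category here is $\catIword$, one cannot simply appeal to Corollary~\ref{cor:exist-Kan-suff-cond} or even to Lemma~\ref{lem:the-minimization-wheel}: the Kleisli category $\KlTrans$ for the monad $\monadTrans X = B^*\times X+1$ is \emph{not} complete — it lacks the countable powers that Lemma~\ref{lem:the-minimization-wheel} would want for the final automaton. So the final object has to be built by hand, which is exactly what Section~\ref{subsection:final transducer} does; its state object is (a suitable subset of) partial functions $A^*\rightharpoonup B^*$, cut down so that it actually carries a $\KlTrans$-morphism structure. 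The initial transducer, by contrast, is the expected free construction on the one-element set, with state object given by a copower indexed by $A^*$, and should come for free from the left Kan extension computation since $\KlTrans$ does have the relevant countable coproducts.

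The main obstacle — and the genuinely surprising part, as the introduction flags — is the factorization system on $\catAutoLKlT$. In $\Set$ or $\SMod$ the (surjective, injective) factorization is inherited pointwise from the output category; but $\KlTrans$ does not obviously come with a well-behaved factorization system, since epis and monos of partial-function-like morphisms are delicate (the analogue of "surjective" must account for the $B^*$-labels being produced, leading to the notions of irreducible morphisms, reductions, and longest common prefixes that the preamble's macros \Irr, \red, \lcp, \suff anticipate). So the crux is to show that despite the lack of completeness one can still define classes $(\EpiKlT,\MonoKlT)$ on $\KlTrans$ — or directly $(\EpiAut,\MonoAut)$ on $\catAutoLKlT$ — satisfying the diagonal fill-in and unique factorization axioms; this is deferred to Section~\ref{subsection:factorization transducers} and is where Choffrut's prefix-normalization phenomena enter. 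Granting that section's output, the proof of the theorem itself is then just: apply Lemma~\ref{lemma:minimal} to $\catK=\catAutoLKlT$ with this factorization system, and read off $\MinL$ from diagram~\eqref{eq:min-via-fact}.
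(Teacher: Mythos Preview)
Your proposal is correct and matches the paper's approach exactly: the theorem is stated as the culmination of Sections~\ref{subsection:initial transducer}--\ref{subsection:factorization transducers}, which supply respectively the initial object, the final object, and the factorization system on $\catAutoLKlT$, after which Lemma~\ref{lemma:minimal} applies verbatim. One small mislocalization worth correcting: the longest-common-prefix machinery ($\lcp$, $\red$, $\IrrAB$) is used in Section~\ref{subsection:final transducer} to build the \emph{final} transducer (by finding a set $Q$ with $\monadTrans Q\cong (B^*+1)^{A^*}$), not in the factorization system; the classes $(\EpiKlT,\MonoKlT)$ of Section~\ref{subsection:factorization transducers} are defined more simply via surjectivity/injectivity of the state-projection and triviality of the word-projection, with no prefix normalization involved.
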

Let us note that only the existence of the automaton is mentioned in
this statement, and the way to compute it effectively is not addressed
as opposed to Choffrut's work.  Nevertheless,
Lemma~\ref{lemma:minimal} describes what are the basic functions that
have to be implemented, namely $\Reach$ and $\Obs$.

\AP The rest of this section is devoted to establish the three above
mentioned points.  Unfortunately, as it is usually the case with
"Kleisli categories", $\KlTrans$ is neither complete, nor
cocomplete. It does not even have binary products, let alone countable
powers.  Also, the existence of a non-trivial "factorization system" does not
generally hold in "Kleisli categories". Hence, providing the above
three pieces of information requires a bit of work.

In the next section we present an adjunction between
"$(\KlTrans,1,1)$-automata" and "$(\Set,1,\alphabetB^*+1)$-automata"
which is then used in the subsequent ones for proving the existence of
"initial@initial transducer" and "final automata@final transducer". We
finish the proof with a presentation of the "factorization
system@factorization transducer".

\subsection{Back and forth to automata in set}
\label{subsection:transducer adjunction}

In order to understand what are the properties of the category of
"$(\KlTrans,1,1)$-automata", an important tool will be the ability to
see alternatively a "subsequential transducer" as an "automaton in
$\KlTrans$" as we have seen above, or as an "automaton in $\Set$",
since $\Set$ is much better behaved than $\KlTrans$.  These two points
of view are related through an adjunction, making use of the results
of Section~\ref{sec:lifting-adjucntions-to-automata} and
Lemma~\ref{lem:lifting-fact-syst}.

\AP Indeed, we start from the well known adjunction between $\Set$ and
$\KlTrans$:
\begin{equation}
  \label{eq:adj-kleisli}
  \begin{tikzcd}
    \Set \ar[rr,bend left, "\FreeTrans"] & \bot &\KlTrans\ar[ll,bend
    left, "\UTrans"]\,.
  \end{tikzcd}
\end{equation}
We recall that the free functor $\intro*\FreeTrans$ is defined as the
identity on objects, while for any function $f\colon X\to Y$ the
"morphism@Kleisli morphism" $\FreeTrans f\colon X\tokl Y$ is defined
as $\eta_Y\circ f\colon X\to \monadTrans Y$. For the other direction,
the functor $\intro*\UTrans$ maps an object $X$ in $\KlTrans$ to
$\monadTrans X$ and a morphism $f\colon X\tokl Y$ (which is seen here
as a function $f\colon X\to \monadTrans Y$) to
$\mu_Y\circ \monadTrans f\colon \monadTrans X\to \monadTrans Y$.

\AP A simple, yet important observation is that the language of
interest, which is a partial function $L\colon A^*\rightharpoonup B^*$
can be modeled either as a "$(\KlT, 1,1)$-language" $\langLKlT$, or,
as a "$(\Set, 1,B^*+1)$-language" $\intro*\langLSetTrans$. This is because for
each $w\in A^*$ we can identify $L(w)$ either with an element of
$\KlTrans(1,1)$ or, equivalently, as an element of $\Set(1,B^*+1)$.
\begin{align*}
  \langLKlT\colon\qquad\catOword&\to\KlTrans &     \langLSetTrans\colon\qquad\catOword&\to\Set \\
  \objectLeft&\mapsto 1 &     \objectLeft&\mapsto 1 \\
  \objectRight&\mapsto 1 &     \objectRight&\mapsto B^*+1 \\
  \symbLeft w \symbRight &\mapsto L(w)\colon 1\tokl 1 & \symbLeft w
                                                        \symbRight &\mapsto L(w)\colon 1\to B^*+1
\end{align*}
To see how this fits in the scope of
Section~\ref{sec:lifting-adjucntions-to-automata}, notice that
$\langLKlT$ is an "$(\KlT, \FreeTrans 1,1)$-language", while
$\langLSetTrans$ is an "$(\Set, 1,\UTrans 1)$-language" and they correspond
to each other via the
bijections described in~\eqref{eq:bij-lang}.\\
\AP Applying Lemma~\ref{lem:lifting-adjunctions} for the Kleisli
adjunction~\eqref{eq:adj-kleisli} we obtain an adjunction
$\FreeAut\dashv \AutUTrans$ between the categories of
"$\KlTrans$-automata for $\langLKlT$" and of "$\Set$-automata
accepting $\langLSetTrans$", as depicted in the diagram below.  We will
make heavy use of this correspondence in what follows.
\[
  \begin{tikzcd}%[row sep={0.6cm,between origins},column sep=small]
    \catAutoLSet\ar[ddd,"\St"] \ar[rr,bend left=15, "\FreeAut"] & \bot &\catAutoLKlT\ar[ll,bend left=15, "\AutUTrans"]\ar[ddd,"\St"]\\
    & & \\
    & & \\
    \Set \ar[rr,bend left=17, "\FreeTrans"] & \bot &\KlT\ar[ll,bend
    left=17, "\UTrans"]\,.
  \end{tikzcd}
\]

\subsection{The initial \texorpdfstring{$\KlTrans$-automaton}{Kl(T)-automaton} for the language
  \texorpdfstring{$\langLKlT$}{LKl(T)}}
\label{subsection:initial transducer}

\phantomintro{initialSS}\phantomintro{initial transducer}%
The functor $\FreeAut$ is a left adjoint and consequently preserves
colimits and in particular the "initial object". We thus obtain that
the initial $\langL_{\KlT}$-automaton is
$\FreeAut(\AutInit(\langLSetTrans))$, where $\AutInit(\langLSetTrans)$ is the
"initial object" of $\catAutoLSet$.  This automaton can be obtained by
Lemma~\ref{lem:the-minimization-wheel} as the functor
$\AutInit(\langLSetTrans)\colon\catIword\to\Set$ specified by
$\AutInit(\langLSetTrans)(\objectCenter)=A^*$ and for all~$a\in\alphabetA$
\begin{align*}
  \AutInit(\langLSetTrans)(\symbLeft)\colon 1&\to A^* &
                                                   \AutInit(\langLSetTrans)(\symbRight)\colon A^*&\to B^*+1 &
                                                                                                         \AutInit(\langLSetTrans)(a)\colon  A^*&\to A^* \\
  0&\mapsto \varepsilon& w&\mapsto L(w)& w&\mapsto wa
\end{align*}
\AP Hence, by computing the image of $\AutInit(\langLSetTrans)$ under
$\FreeAut$, we obtain the following description of the "initial
$\KlTrans$-automaton@initial automaton" $\AutInit(\langLKlT)$
"accepting" $\langLKlT$: $\AutInit(\langLKlT)(\objectCenter)=A^*$ and
for all $a\in\alphabetA$
\begin{align*}
  \AutInit(\langLKlT)(\symbLeft)\colon 1&\tokl A^* &
                                                     \AutInit(\langLKlT)(\symbRight)\colon A^*&\tokl 1&
                                                                                                        \AutInit(\langLKlT)(a)\colon  A^*&\tokl A^*\\
  0&\mapsto (\varepsilon,\varepsilon)& w&\mapsto L(w)& w&\mapsto
                                                          (\varepsilon,wa)
\end{align*}

\subsection{The final \texorpdfstring{$\KlTrans$-automaton}{Kl(T)-automaton} for the language
  \texorpdfstring{$\langLKlT$}{LKl(T)}}
\label{subsection:final transducer}%

\phantomintro{final transducer}%
The case of the final $\KlTrans$-automaton is more complicated, since
it is not constructed as easily. However, assuming the final automaton
exists, it has to be sent by $\AutUTrans$ to a final
$\Set$-automaton, since $\AutUTrans$ preserves limits. We shall see in
Lemma~\ref{lem:UAut-reflects-final-obj} that $\AutUTrans\colon \catAutoLKlT\to\catAutoLSet$ reflects
  final objects, and hence in order to prove that a
given $\KlTrans$-automaton $\autA$ is a final object of $\catAutoLKlT$
it suffices to show that $\AutUTrans(\autA)$ is the final object in
$\catAutoLSet$. The proof of the following lemma generalizes the fact
that $\UTrans$ reflects final objects and can be proved in the same
spirit.
\begin{lem}
  \label{lem:UAut-reflects-final-obj}
  The functor $\AutUTrans\colon \catAutoLKlT\to\catAutoLSet$ reflects
  final objects.
\end{lem}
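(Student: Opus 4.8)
The statement to prove is that $\AutUTrans\colon\catAutoLKlT\to\catAutoLSet$ reflects final objects, i.e.\ if $\autA$ is a $\KlTrans$-automaton for $\langLKlT$ such that $\AutUTrans(\autA)$ is final in $\catAutoLSet$, then $\autA$ is final in $\catAutoLKlT$. The plan is to mimic the classical argument showing that a right adjoint $\UTrans$ arising from a monad reflects the terminal object, lifted to the categories of automata. The key point is that $\FreeAut\dashv\AutUTrans$ is an adjunction (Lemma~\ref{lem:lifting-adjunctions} applied to the Kleisli adjunction~\eqref{eq:adj-kleisli}), and that the unit of this lifted adjunction is, componentwise, the unit $\eta$ of $\monadTrans$, which is a split monomorphism in $\KlTrans$ (every Kleisli unit is split mono, with retraction the identity Kleisli morphism viewed appropriately).

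\textbf{Step 1: set up the candidate universal morphism.} Let $\autA\in\catAutoLKlT$ be such that $\AutUTrans(\autA)$ is final in $\catAutoLSet$, and let $\autB$ be an arbitrary object of $\catAutoLKlT$. Apply $\AutUTrans$ to get $\AutUTrans(\autB)\in\catAutoLSet$; by finality there is a unique morphism $g\colon\AutUTrans(\autB)\to\AutUTrans(\autA)$ in $\catAutoLSet$. By the adjunction $\FreeAut\dashv\AutUTrans$, this corresponds to a unique morphism $\widehat g\colon\FreeAut\AutUTrans(\autB)\to\autA$ in $\catAutoLKlT$. Precompose with the counit (or rather, use the unit) $\varepsilon_{\autB}\colon\FreeAut\AutUTrans(\autB)\to\autB$ — wait, that goes the wrong way; instead I would use the unit $\eta_{\autB}\colon\autB\to\AutUTrans\FreeAut(\autB)$ of the \emph{original} monad structure on $\catAutoLKlT$... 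Let me restate: the cleaner route is to observe that $\autB\to\AutUTrans\FreeAut\autB$ is the relevant comparison. Concretely, define the candidate morphism $\autB\to\autA$ as the transpose of $g$ along the adjunction, noting that $\AutUTrans$ being a right adjoint and $\autB$ mapping into its image.

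\textbf{Step 2: existence.} For existence of a morphism $\autB\to\autA$, use that the Kleisli unit $\eta_{\autB(\objectCenter)}\colon\autB(\objectCenter)\to\monadTrans\autB(\objectCenter)=\AutUTrans(\autB)(\objectCenter)$ assembles into an automata morphism $\eta_{\autB}\colon\autB\to\FreeAut\AutUTrans(\autB)$ in $\catAutoLKlT$ (this is where one checks naturality squares against $\symbLeft,\symbRight,a$ — a routine but necessary verification, exactly the content of "can be proved in the same spirit" as $\UTrans$ reflecting the terminal object). Then compose $\autB\xrightarrow{\eta_{\autB}}\FreeAut\AutUTrans(\autB)\xrightarrow{\widehat g}\autA$ to obtain a morphism in $\catAutoLKlT$.

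\textbf{Step 3: uniqueness, and the main obstacle.} Given any two morphisms $h_1,h_2\colon\autB\to\autA$ in $\catAutoLKlT$, apply $\AutUTrans$ to get $\AutUTrans h_1,\AutUTrans h_2\colon\AutUTrans\autB\to\AutUTrans\autA$, which must coincide since $\AutUTrans\autA$ is final in $\catAutoLSet$. To conclude $h_1=h_2$, I need $\AutUTrans$ to be faithful on $\catAutoLKlT$ — equivalently, that $\UTrans\colon\KlTrans\to\Set$ is faithful, which holds because a Kleisli morphism $f\colon X\tokl Y$ \emph{is} a function $X\to\monadTrans Y$ and $\UTrans f=\mu_Y\circ\monadTrans f$ recovers $f$ after precomposition with $\eta_X$ (so $\UTrans f_1=\UTrans f_2$ forces $f_1=f_2$). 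This faithfulness is the crux; it is the reason the proof "generalizes the fact that $\UTrans$ reflects final objects." I expect the main obstacle to be purely bookkeeping: verifying that $\eta$ is a natural transformation of automata (compatibility with the three generating arrows of $\catIword$, in particular that the $\symbRight$-component square commutes, which uses $\langLKlT=(\langLSetTrans)$ under the homset bijection) and that transposition along $\FreeAut\dashv\AutUTrans$ is literally componentwise transposition along $\FreeTrans\dashv\UTrans$, as guaranteed by Lemma~\ref{lem:lifting-adjunctions}. Once those identifications are in place, existence and uniqueness both follow formally from faithfulness of $\UTrans$ together with finality of $\AutUTrans(\autA)$.
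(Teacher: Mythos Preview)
Your uniqueness argument (Step~3) is fine: $\UTrans$ is faithful, hence so is $\AutUTrans$, and faithfulness together with finality of $\AutUTrans(\autA)$ forces any two morphisms $\autB\to\autA$ to coincide. The problem is Step~2.

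You assert that the unit $\eta_{Q}$ (with $Q=\autB(\objectCenter)$) ``assembles into an automata morphism $\eta_{\autB}\colon\autB\to\FreeAut\AutUTrans(\autB)$ in $\catAutoLKlT$,'' calling the naturality check ``routine.'' It is not routine---it fails. Concretely, a morphism $\autB\to\FreeAut\AutUTrans(\autB)$ must have as $\objectCenter$-component a Kleisli arrow $Q\tokl\monadTrans Q$, i.e.\ a function $s\colon Q\to\monadTrans\monadTrans Q$. Your candidate is $s=\FreeTrans(\eta_Q)=\eta_{\monadTrans Q}\circ\eta_Q$. For this to commute with the transition map $\delta_a:=\autB(a)\colon Q\to\monadTrans Q$ one needs the equality
\[
\monadTrans\eta_Q\circ\delta_a \;=\; \eta_{\monadTrans Q}\circ\delta_a
\]
as functions $Q\to\monadTrans\monadTrans Q$. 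But for $\monadTrans X=B^*\times X+1$ one computes $\monadTrans\eta_Q(u,q)=(u,(\varepsilon,q))$ while $\eta_{\monadTrans Q}(u,q)=(\varepsilon,(u,q))$, and $\monadTrans\eta_Q(\bot)=\bot$ while $\eta_{\monadTrans Q}(\bot)=(\varepsilon,\bot)$. So the naturality square commutes only when $\delta_a$ factors through $\eta_Q$, i.e.\ when the transducer never outputs a non-empty word and is never undefined---precisely the trivial case. In general there is no natural section of the counit $\varepsilon\colon\FreeAut\AutUTrans\Rightarrow\mathrm{Id}$, so your existence argument breaks down. The underlying conceptual issue is that while every \emph{object} of $\KlTrans$ lies in the image of $\FreeTrans$, an arbitrary $\KlTrans$-automaton $\autB$ need not be of the form $\FreeAut(\autC)$, so the hom-set bijection from the adjunction does not directly describe $\catAutoLKlT(\autB,\autA)$.

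The paper closes this gap by passing through the Eilenberg--Moore category: one factors $\AutUTrans=\AutUTransEM\circ\overline{K}$ via the lifted comparison functor $\overline{K}\colon\catAutoLKlT\to\catAutoLEMT$. The functor $\overline{K}$ is full and faithful (inherited from $K$), hence reflects final objects. Since $\EMT$ is complete, the final object of $\catAutoLEMT$ exists and $\AutUTransEM$ (being a right adjoint that, like $\UTransEM$, reflects isomorphisms) reflects it as well. Composing the two reflections gives the result. What this buys over your approach is precisely the missing existence step: fullness of $K$ guarantees that the unique $\Set$-map $g_{\objectCenter}\colon\monadTrans Q\to\monadTrans P$ is in fact a $\monadTrans$-algebra morphism, hence comes from a Kleisli arrow.
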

\begin{proof}
  \AP Recall that we have the following two adjunctions for the
  categories $\KlT$ of Kleisli algebras, respectively $\intro*\EMT$ of
  Eilenberg-Moore algebras, and the comparison functor
  $K\colon\KlT\to\EMT$ between them.
  \begin{equation}
    \label{eq:4}
    \begin{tikzcd}[column sep={25mm,between origins},row sep={25mm,between origins}]
      \KlT\ar[rr,"K"]\ar[rd, bend left,"\UTrans"] & & \EMT\ar[ld,bend left,"\UTransEM"] \\
      & \Set\ar[lu,bend left,"\FreeTrans"]\ar[ru,bend
      left,swap,"\FreeTransEM"] &
    \end{tikzcd}
  \end{equation}
  \AP The partial function $L\colon A^*\rightharpoonup B^*$ from
  Section~\ref{subsection:transducer adjunction} can also be modeled
  as an $(\EMT,T1,T1)$-language $\intro*\langLEMTTrans\colon\catO\to\EMT$.
  Applying Lemma~\ref{lem:lifting-adjunctions} for the adjunction
  $\intro{\FreeTransEM}\dashv\intro{\UTransEM}$ we obtain an
  adjunction $\FreeAutEM\dashv \AutUTransEM$ between the categories of
  $\EMT$-automata for $\langLEMTTrans$ and of $\Set$-automata for
  $\langLSetTrans$. We also have a lifting
  $ \overline{K}\colon\catAutoLKlT\to\catAutoLEMT $ of the comparison
  functor $K$, which maps a $\KlT$-automaton $\autA$ to the
  $\EMT$-automaton $K\circ \autA$. We obtain the following situation,
  which is just a lifting of diagram~\eqref{eq:4} to the categories of
  automata.
  \[
    \begin{tikzcd}[column sep={25mm,between origins},row
      sep={25mm,between origins}]
      \catAutoLKlT\ar[rr,"\overline{K}"]\ar[rd, bend left,"\AutUTrans"] & & \catAutoLEMT\ar[ld,bend left,"\AutUTransEM"] \\
      & \catAutoLSet\ar[lu,bend left,"\FreeAut"]\ar[ru,bend
      left,swap,"\FreeAutEM"] &
    \end{tikzcd}
  \]
  One can readily check that the functor $\AutUTrans$ is the composite
  $\AutUTransEM\circ \overline{K}$. The functor $\overline{K}$ is full
  and faithful (a property inherited from $K$) and thus reflects final
  objects. On the other hand, the final object in $\catAutoLEMT$ can
  be computed using Lemma~\ref{lem:the-minimization-wheel}, since the
  underlying category $\EMT$ has all limits. Moreover, this final
  automaton is the reflection of the final $\Set$-automaton
  $\AutFin(\langLSetTrans)$.
\end{proof}

We are now ready to describe the final $\KlT$-automaton.  The final
object in $\catAutoLSet$ is the automaton $\AutFin(\langLSetTrans)$ as
described using Lemma~\ref{lem:the-minimization-wheel}.  The functor
$\AutFin(\langLSetTrans)\colon\catI\to\Set$ specified by
\begin{align*}
  &\AutFin(\langLSetTrans)(\objectCenter)=(B^*+1)^{A^*}&
  &\begin{array}{rl}
     \AutFin(\langLSetTrans)(\symbRight)\colon  (B^*+1)^{A^*}&\to B^*+1\\
     K&\mapsto K(\varepsilon)
   \end{array}\\
  &\begin{array}[c]{rl}
     \AutFin(\langLSetTrans)(\symbLeft)\colon 1&\to (B^*+1)^{A^*}\\
     0&\mapsto L
   \end{array}&
  &\begin{array}{rl}
     \AutFin(\langLSetTrans)(a)\colon (B^*+1)^{A^*}&\to (B^*+1)^{A^*}\\
     K&\mapsto \lambda w. K(aw)
   \end{array}
\end{align*}

To describe the set of states of the final automaton in $\catAutoLKlT$
we need to introduce a few notations. Essentially we are looking for a
set of states $Q$ so that $B^*\times Q+1$ is isomorphic to
$(B^*+1)^{A^*}$. The intuitive idea is to decompose each function in
$K\in (B^*+1)^{A^*}$ (\AP except for the one which is nowhere defined,
that is the function $\intro*\botfun=\lambda w.\bot$) into a word in
$B^*$, the common prefix of all the $B^*$-words in the image of $K$,
and an "irreducible function", i.e., a function such that the common
prefix of all the words in the codomain is empty.

\AP For~$v\in B^*$ and a function $K\neq \botfun$ in $(B^*+1)^{A^*}$,
denote by $v\intro*\pstar K$ the function defined for
all~$u\in\alphabetA^*$ by $ (v\pstar K)(u)=v\,K(u)$ if $K(u)\in B^*$
and $(v\pstar K)(u)=\bot$ otherwise.

\AP Define also the ""longest common prefix"" of $K$,
$\intro*\lcpK\in B^*$, as the longest word that is prefix of
all~$K(u)\neq\bot$ for~$u$ in~$A^*$ (this is well defined
since~$K\neq\botfun$).  The ""reduction of~$K$"", $\intro*\red(K)$, is
defined as:
\begin{align*}
  \red(K)(u)&= \begin{cases}
    v&\text{if}~K(u)=\lcp(K)\,v,\\
    \bot&\text{otherwise}.
  \end{cases}
\end{align*}
\AP Finally, $K$ is called ""irreducible"" if $\lcp(K)=\varepsilon$
(or equivalently if~$K=\red(K)$).  We denote by $\intro*\IrrAB$ the
"irreducible" functions in~$(B^*+1)^{A^*}$.

\AP What we have constructed is a bijection $\intro*\bijFinalTrans$ between
\begin{align*}
  \monadTrans(\IrrAB)=B^*\times\IrrAB +1&&\text{and}&&(B^*+1)^{A^*}\ ,
\end{align*}
that is defined as
\begin{align}
  \label{eq:1}
  \begin{array}{rl}
    \bijFinalTrans\colon B^*\times\IrrAB +1 & \to (B^*+1)^{A^*} \\ 
    (u,K)&\mapsto u\pstar K\\
    \bot&\mapsto \botfun\ ,
  \end{array}
\end{align}
and the converse of which maps every $K\neq\botfun$ to
$(\lcpK,\redK)$, and $\botfun$ to~$\bot$.

Given $a\in A$ and $K\in (B^*+1)^{A^*}$ we denote by $a^{-1}K$ the
function in $(B^*+1)^{A^*}$ that maps $w\in A^*$ to $K(aw)$.

We can now define the "automaton"
$\AutFin(\langLKlT)\colon\catI\to \KlTrans$ by setting
\begin{itemize}
\item $\AutFin(\langLKlT)(\objectLeft)=1$,
\item $\AutFin(\langLKlT)(\objectCenter)=\IrrAB$, and,
\item $\AutFin(\langLKlT)(\objectRight)=1$, 
\end{itemize}
% \begin{align*}
%   &\AutFin(\langLKlT)(\objectLeft)=1&
%                                       \AutFin(\langLKlT)(\objectCenter)=\IrrAB & &
%                                                                                    \AutFin(\langLKlT)(\objectRight)=1
% \end{align*}
and defining $\AutFin(\langLKlT)$ on arrows as follows
\begin{align*}
  % & \AutFin(\langLKlT)(\objectCenter)=\IrrAB & & & &\\
    & \AutFin(\langLKlT)(\symbLeft)\colon 1\tokl \IrrAB  & & 0\mapsto (\lcpL, \redL)\\
    & \AutFin(\langLKlT)(\symbRight)\colon  \IrrAB \tokl 1 & & K\mapsto K(\varepsilon)\\
    & \AutFin(\langLKlT)(a)\colon \IrrAB \tokl \IrrAB & & K\mapsto
                                                          \begin{cases}
                                                            \bot &  \text{ if } a^{-1}K=\botfun\,, \\
                                                            (\lcp(a^{-1}K),\red(a^{-1}K))
                                                            & \text{
                                                              otherwise}
                                                          \end{cases}
                                                          % (\lcp(a^{-1}K),\red(a^{-1}K))
                                                          % \text{ if
                                                          % }
                                                          %   a^{-1}K\neq\botfun
                                                          %   & & &
                                                          %   K\mapsto
                                                          %   \bot
                                                          %   \text{
                                                          %   if }
                                                          %   a^{-1}K=\botfun
                                                          %   &
                                                          %   \AutFin(\langLKlT)(a)\colon
                                                          %   \IrrAB
                                                          %   \tokl
                                                          %   \IrrAB &
                                                          %   &
                                                          %   K\mapsto
                                                          %   (\lcp(a^{-1}K),\red(a^{-1}K))
                                                          %   \text{
                                                          %   if }
                                                          %   a^{-1}K\neq\botfun\\
                                                          %   & & &
                                                          %   K\mapsto
                                                          %   \bot
                                                          %   \text{
                                                          %   if }
                                                          %   a^{-1}K=\botfun
\end{align*}

\begin{lem}
  The $\KlTrans$-automaton $\AutFin(\langLKlT)$ is a final object in
  $\catAutoLKlT$.
\end{lem}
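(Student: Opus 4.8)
The plan is to reduce the assertion to the analogous, already established fact about $\Set$-automata, using that $\AutUTrans$ reflects final objects (Lemma~\ref{lem:UAut-reflects-final-obj}). Since finality is invariant under isomorphism, it suffices to exhibit an isomorphism in $\catAutoLSet$ between $\AutUTrans(\AutFin(\langLKlT))$ and the final $\Set$-automaton $\AutFin(\langLSetTrans)$ described just above. Then $\AutUTrans(\AutFin(\langLKlT))$ is final in $\catAutoLSet$, and by reflection $\AutFin(\langLKlT)$ is final in $\catAutoLKlT$.

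First I would make $\AutUTrans(\AutFin(\langLKlT))$ explicit, using the description of $\AutUTrans$ from the proof of Lemma~\ref{lem:lifting-adjunctions} (the lifting of $\UTrans$) together with the definition of the Kleisli adjunction $\FreeTrans\dashv\UTrans$. Its state object is $\UTrans(\IrrAB)=\monadTrans(\IrrAB)=B^*\times\IrrAB+1$; its $\symbLeft$-component is the adjoint transpose of the Kleisli morphism $0\mapsto(\lcpL,\redL)$, which under the Kleisli adjunction is literally the function $1\to\monadTrans(\IrrAB)$ with the same description; its $a$-transition is $\mu_{\IrrAB}\circ\monadTrans\big(\AutFin(\langLKlT)(a)\big)$; and its $\symbRight$-component is $\mu_1\circ\monadTrans\big(\AutFin(\langLKlT)(\symbRight)\big)$, valued in $\UTrans 1=B^*\times 1+1\cong B^*+1$.

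Next I would check that the bijection $\bijFinalTrans\colon\monadTrans(\IrrAB)\to(B^*+1)^{A^*}$ from~\eqref{eq:1} is the $\objectCenter$-component of an isomorphism $\AutUTrans(\AutFin(\langLKlT))\xrightarrow{\ \sim\ }\AutFin(\langLSetTrans)$ in $\catAutoLSet$, the $\objectLeft$- and $\objectRight$-components being identities (the latter under the standing identification $\UTrans 1\cong B^*+1$); so what remains is compatibility of $\bijFinalTrans$ with the $\symbLeft$-, $\symbRight$- and $a$-maps. The $\symbLeft$-square reduces to the decomposition identity $L=\lcpL\pstar\redL$; the $\symbRight$-square reduces to evaluating $v\pstar K$ at $\varepsilon$, i.e.\ to the definition of $\pstar$. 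The $a$-transition square is the substantial one: unfolding $\mu$ and $\monadTrans$ on both sides, it amounts to the identity $a^{-1}(v\pstar K)=v\pstar(a^{-1}K)$ when $a^{-1}K\neq\botfun$ (and $a^{-1}(v\pstar K)=\botfun$ otherwise), together with the general decomposition $H=\lcp(H)\pstar\red(H)$ and the associativity law $v\pstar(w\pstar H)=(vw)\pstar H$ of prefixing. These in turn follow from the definitions of $\lcp$, $\red$ and $\pstar$, each time with a short case analysis on whether the relevant function equals $\botfun$.

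I expect the $a$-transition square, and more generally the systematic bookkeeping of the everywhere-undefined function $\botfun$ — including the degenerate case $L=\botfun$, where the $\symbLeft$-component lands in the $+1$ summand rather than in $B^*\times\IrrAB$ — to be the only real obstacle; everything else is formal, resting on Lemma~\ref{lem:UAut-reflects-final-obj} and the explicit description of $\AutFin(\langLSetTrans)$.
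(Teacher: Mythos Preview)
Your proposal is correct and follows essentially the same approach as the paper: show that $\AutUTrans(\AutFin(\langLKlT))\cong\AutFin(\langLSetTrans)$ via the bijection $\bijFinalTrans$, and conclude by Lemma~\ref{lem:UAut-reflects-final-obj}. The paper's proof is in fact terser than yours---it simply asserts that the arrow-level compatibility is ``easy to check''---so your outline of the three naturality squares and the edge case $L=\botfun$ is a welcome elaboration rather than a deviation.
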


\begin{proof}
  We show that $\AutUTrans(\AutFin(\langLKlT))$ is isomorphic to the
  final automaton $\AutFin(\langLSetTrans)$. Indeed, at the level of
  objects the bijection between the sets
  $\AutUTrans(\AutFin(\langLKlT))(\objectCenter)$ and
  $\AutFin(\langLSetTrans)(\objectCenter)$ is given by the function
  $\bijFinalTrans$ defined in~\eqref{eq:1}. It is easy to check that also on
  arrows $\AutUTrans(\AutFin(\langLKlT))$ is the same as
  $\AutFin(\langLSetTrans)$ up to the correspondence given by the function
  $\bijFinalTrans$.
\end{proof}

\subsection{A factorization system on \texorpdfstring{$\catAutoLKlT$}{Auto(LKl(T))}}
\label{subsection:factorization transducers}

\intro[factorization transducer]{}%
The factorization system on $\catAutoLKlT$ is obtained using
Lemma~\ref{lem:lifting-fact-syst} from a factorization system on
$\KlTrans$. There are several non-trivial factorization systems on
$\KlTrans$, one of which is obtained from the regular epi-mono
factorization system on $\Set$, or equivalently, from the regular
epi-mono factorization system on the category of Eilenberg-Moore
algebras for~$\monadTrans$. Notice that this is a specific result for
the monad $\monadTrans$ since in general, there is no reason that the
Eilenberg-Moore algebra obtained by factorizing a morphism between
free algebras be free itself. Nevertheless, in order to capture
precisely the syntactic transducer defined by
Choffrut~\cite{Choffrut79,Choffrut2003}, we will provide yet another
factorization system $(\EpiKlT,\MonoKlT)$, which we define concretely
as follows.  Given a morphism $f\colon X\tokl Y$ in $\KlTrans$ we
write $\pi_1(f)\colon X\to B^*+\{\bot\}$ and
$\pi_2(f)\colon X\to Y+\{\bot\}$ for the `projections' of $f$, defined
by
\begin{align*}
  \pi_1(f)(x)&=
               \begin{cases}
                 u & \text{if }f(x)=(u,y)\,,\\
                 \bot & \text{otherwise,}
               \end{cases}
                   &
                     \text{and}\qquad 
                     \pi_2(f)(x)&=
                                  \begin{cases}
                                    y & \text{if }f(x)=(u,y)\,,\\
                                    \bot & \text{otherwise.}
                                  \end{cases}
\end{align*}
We say that a partial function $g\colon X\to Y+\{\bot\}$ is surjective
when for every $y\in Y$ there exists $x\in X$ so that $g(x)=y$.

\AP The class $\intro{\EpiKlT}$ consists of all the morphisms of the
form $e\colon X\tokl Y$ such that $\pi_2(e)$ is surjective and the
class $\intro{\MonoKlT}$ consists of all the morphisms of the form
$m\colon X\tokl Y$ such that $\pi_2(m)$ is injective and $\pi_1(m)$ is
the constant function mapping every $x\in X$ to $\varepsilon$.

\begin{lem}
  $(\EpiKlT, \MonoKlT)$ is a factorization system on $\KlTrans$.
\end{lem}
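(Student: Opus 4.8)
The goal is to verify that $(\EpiKlT,\MonoKlT)$ is a factorization system on $\KlTrans$. I would proceed by checking the three defining conditions: (i) both classes contain all isomorphisms and are closed under composition with isomorphisms; (ii) every Kleisli morphism $f\colon X\tokl Y$ factors as $m\circ e$ with $e\in\EpiKlT$ and $m\in\MonoKlT$; (iii) the unique diagonal fill-in property holds for every commuting square with an $\EpiKlT$-morphism on the left and a $\MonoKlT$-morphism on the right. It is convenient to reason throughout in terms of the `projections' $\pi_1(f)\colon X\to B^*+\{\bot\}$ and $\pi_2(f)\colon X\to Y+\{\bot\}$: note that $f$ is entirely determined by the pair $(\pi_1(f),\pi_2(f))$ subject to the constraint that $\pi_1(f)(x)=\bot$ iff $\pi_2(f)(x)=\bot$, and that for a composite $g\circ f$ one has $\pi_2(g\circ f)(x)=\pi_2(g)(\pi_2(f)(x))$ (reading $\pi_2(g)(\bot)=\bot$) and $\pi_1(g\circ f)(x)=\pi_1(f)(x)\,\pi_1(g)(\pi_2(f)(x))$ (concatenation of output words, $\bot$-absorbing).

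For the factorization (ii), given $f\colon X\tokl Y$, let $Y'=\{\,y\in Y\mid \exists x,\ \pi_2(f)(x)=y\,\}$ be the set of values actually hit. Define $e\colon X\tokl Y'$ to be $f$ itself with codomain restricted to $Y'$ — so $\pi_2(e)$ is surjective onto $Y'$, giving $e\in\EpiKlT$ — and define $m\colon Y'\tokl Y$ by $\pi_1(m)=\varepsilon$ (constantly) and $\pi_2(m)$ the inclusion $Y'\hookrightarrow Y$; then $\pi_2(m)$ is injective and $\pi_1(m)$ is constantly $\varepsilon$, so $m\in\MonoKlT$. A direct computation of the projections of $m\circ e$ shows $\pi_1(m\circ e)=\pi_1(e)=\pi_1(f)$ and $\pi_2(m\circ e)=\pi_2(e)=\pi_2(f)$, hence $m\circ e=f$. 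Condition (i) is routine: an isomorphism in $\KlTrans$ must be induced by a bijection with all output words empty, so it lies in both classes, and closure under pre/post-composition with isos is immediate from the composition formulas for the projections.

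The main obstacle is the diagonal fill-in (iii). Consider a commuting square
\[
  \begin{tikzcd}
    X\ar[r,"g"]\ar[d,swap,"e"] & U\ar[d,"m"]\\
    Z\ar[r,swap,"h"] & V
  \end{tikzcd}
\]
with $e\in\EpiKlT$ and $m\in\MonoKlT$, so $m\circ g=h\circ e$. The candidate diagonal $d\colon Z\tokl U$ is forced: for $z\in Z$ pick $x$ with $\pi_2(e)(x)=z$ (possible since $\pi_2(e)$ is surjective) and set $\pi_2(d)(z)=\pi_2(g)(x)$, $\pi_1(d)(z)=\pi_1(h)(z)$. The work is in showing this is well defined (independent of the choice of $x$): if $\pi_2(e)(x)=\pi_2(e)(x')=z$, then since $\pi_1(e)(x)=\varepsilon$ is not forced — wait, $e$ need not be mono — one uses instead that $\pi_2(m)$ is injective together with $\pi_2(m)(\pi_2(g)(x))=\pi_2(h\circ e)(x)=\pi_2(h)(z)=\pi_2(m)(\pi_2(g)(x'))$ to conclude $\pi_2(g)(x)=\pi_2(g)(x')$; and similarly the $\pi_1$ component of $g$ must agree with $\pi_1(h)(z)$ on the nose because $\pi_1(m)$ is constantly $\varepsilon$, so $\pi_1(h)(z)=\pi_1(h\circ e)(x)=\pi_1(m\circ g)(x)=\pi_1(g)(x)\cdot\varepsilon=\pi_1(g)(x)$, which forces the definition and simultaneously shows $\pi_1(d)(z)=\pi_1(g)(x)$. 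One then checks $d\circ e=g$ (using $\pi_2(e)$ surjective for the $\pi_2$ part and the identity just derived for the $\pi_1$ part) and $m\circ d=h$ (using $\pi_1(m)=\varepsilon$ and $\pi_2(m)$ injective). Uniqueness of $d$ follows because any diagonal $d'$ satisfies $\pi_2(d')\circ\pi_2(e)=\pi_2(g)$ with $\pi_2(e)$ surjective, pinning $\pi_2(d')$, and $\pi_1(d')=\pi_1(m\circ d')=\pi_1(h)$, pinning $\pi_1(d')$; hence $d'=d$. The only subtlety to watch is the bookkeeping of $\bot$-values — making sure the constraint "$\pi_1=\bot$ iff $\pi_2=\bot$" is preserved by $d$ — which follows since it holds for $g$ and for $h$ and the two components of $d$ are copied from $g$ and $h$ respectively on the same $z$'s.
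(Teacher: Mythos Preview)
Your approach matches the paper's: the factorization through the image $Y'=\{y\in Y\mid\exists x,\ \pi_2(f)(x)=y\}$ with $m(y)=(\varepsilon,y)$ is identical, and your diagonal $d$ with $\pi_1(d)(z)=\pi_1(h)(z)$ and $\pi_2(d)(z)=\pi_2(g)(x)$ is exactly the paper's construction (in different notation). You also add a clean uniqueness argument that the paper leaves implicit.

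There is, however, a computational slip in your verification. You write
\[
\pi_1(h)(z)=\pi_1(h\circ e)(x)=\pi_1(m\circ g)(x)=\pi_1(g)(x)\cdot\varepsilon=\pi_1(g)(x),
\]
but the first equality is wrong: the composition formula gives $\pi_1(h\circ e)(x)=\pi_1(e)(x)\cdot\pi_1(h)(z)$, and $\pi_1(e)(x)$ need not be $\varepsilon$ for $e\in\EpiKlT$. So in general $\pi_1(h)(z)\neq\pi_1(g)(x)$. Fortunately this does not affect the definition of $d$, and the check of $d\circ e=g$ still goes through with the correct identity:
\[
\pi_1(d\circ e)(x)=\pi_1(e)(x)\cdot\pi_1(d)(z)=\pi_1(e)(x)\cdot\pi_1(h)(z)=\pi_1(h\circ e)(x)=\pi_1(m\circ g)(x)=\pi_1(g)(x).
\]
With this correction the argument is complete. (The $\bot$-bookkeeping is also fine: since $m$ is total, $h(z)=\bot$ iff $g(x)=\bot$ for any $x$ with $\pi_2(e)(x)=z$, so $\pi_1(d)(z)=\bot$ iff $\pi_2(d)(z)=\bot$.)
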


\begin{proof}
  Notice that $f$ is an isomorphism in $\KlTrans$ if and only if
  $f\in\EpiKlT\cap\MonoKlT$.

  If $f\colon X\tokl Y$ is a morphism in $\KlTrans$ then we can define
  \[Z=\{y\in Y\mid \exists x\in X\ldotp\exists u\in B^*\ldotp
    f(x)=(u,y)\}\,.\] We define $e\colon X\tokl Z$ by $e(x)=f(x)$ and
  $m\colon Z\tokl Y$ by $m(y)=(\varepsilon,y)$. One can easily check that
  $f=m\circ e$ in $\KlTrans$.

  Lastly, we can show that the "diagonal property" holds. Assume we
  have a commuting square in $\KlTrans$.
  \[
    \begin{tikzcd}
      X\ar[r, two heads, "e"]\ar[d,swap, "f"] & Y\ar[d, "g"]\ar[dl, dashed,"d"] \\
      Z\ar[r, tail,swap, "m"] & W
    \end{tikzcd}
  \]
  We will prove the existence of $d\colon Y\tokl Z$ so that
  $d\circ e=f$ and $m\circ d=g$. Assume $y\in Y$. If $g(y)=\bot$ we
  set $d(y)=\bot$. Otherwise assume $g(y)=(v,t)$, for some $v\in B^*$
  and $t\in W$. Since $e\in \EpiKlT$, there exists $u\in B^*$ and
  $x\in X$ so that $e(x)=(u,y)$. Assume $f(x)=(w,z)$ for some
  $w\in B^*$ and $z\in Z$. We set $d(y)=(v, z)$. First, we have to
  prove that this definition does not depend on the choice of $x$.

  Assume that we have another $x'\in X$ so that $e(x')=(u',y)$ and
  assume $f(x')=(w',z')$. Using the fact that $m\in \MonoKlT$, we will
  show that $z=z'$, and thus $d(y)$ is well defined. Indeed, notice
  that

\begin{tabular}{lll}
  $\begin{cases}
    g\circ e(x)=(uv,t)\\
    g\circ e(x')=(u'v,t)\,,\\
  \end{cases}
  $
  &
    or equivalently, 
  &
    $
    \begin{cases}
      m\circ f(x)=(uv,t)\\
      m\circ f(x')=(u'v,t)\,,\\
    \end{cases}
  $
\end{tabular}

\noindent Assume that $m(z)=(\varepsilon, t_1)$ and
$m(z')=(\varepsilon, t_2)$. This entails
\[
  \begin{cases}
    m\circ f(x)=(uv,t)=(w,t_1)\\
    m\circ f(x')=(u'v,t)=(w',t_2)\,.\\
  \end{cases}
\]
We obtain that $t_1=t_2=t$. Since $m\in\MonoKlT$ (and thus $\pi_2(m)$
is injective) we get that $z=z'$, which is what we wanted to prove.
It is easy to verify that $d\circ e=f$ and $m\circ d=g$.
\end{proof}

This completes the proof of Theorem~\ref{theorem:minimization
  transducer}.

\section{Brzozowski's minimization algorithm}
\label{sec:brzozowski}

\subsection{Presentation}

""Brzozowski's algorithm"" is a minimization algorithm for
automata. It takes as input a "non-deterministic
automaton"~$\mathcal{A}$, and computes the deterministic automaton:
\begin{align*}
  \determinize(\transpose(\determinize(\transpose(\mathcal A)))),
\end{align*}
in which
\begin{itemize}
\item\AP $\intro*\determinize$ is the operation from classical
  automata theory that takes as input a deterministic automaton,
  applies a powerset construction and at the same time restricts to
  the reachable states, yielding a deterministic automaton, and
\item\AP $\intro*\transpose$ is the operation that takes as input an
  non-deterministic automaton reverses all its edges, and swaps the
  role of initial and final states (it accepts the mirrored language).
\end{itemize}
\AP In this section, we will establish the correctness of Brzozowski's
algorithm: this sequence of operations yields the minimal automaton
for the language. For easing the presentation we shall present the
algorithm in the form:
\begin{align*}
  \determinize(\codeterminize(\mathcal A)),
\end{align*}
in which $\intro*\codeterminize$ is the operation that takes a
non-deterministic automaton, and constructs a backward deterministic
one (it is equivalent to the sequence
$\transpose\circ\determinize\circ\transpose$).

In the next section, we show how $\determinize$ and~$\codeterminize$
can be seen as adjunctions, and we use it immediately after in a
correctness proof of "Brzozowski's algorithm".

We will use the representation of "non-deterministic automata" as
$(\Rel,1,1)$-automata (see Example~\ref{ex:automata-as-functors}) and
the fact that determinization, respectively codeterminization, can be
seen as right, respectively left, adjoints, as discussed in
Section~\ref{sec:det-codet-adj}.

% We can model
% non-deterministic automata as functors by taking as output category
% $\Rel$ -- the category whose objects are sets and maps are relations
% between them. We consider $\Rel$-automata $\autA\colon \catI\to\Rel$
% such that $\autA(\objectLeft)=1$ and $\autA(\objectRight)=1$. In this
% section we show how to determinize a $\Rel$-automaton, that is, how to
% turn it into a $\Set$-automaton and how to codeterminize it, that is,
% how to obtain a $\Setop$-automaton, all recognizing the same language.

\subsection{Brzozowski's minimization algorithm}
\label{sec:duality}

The correctness of "Brzozowski's algorithm" can be seen in the
following chain of adjunctions from Lemma~\ref{lem:adj-reach-obs} and
diagram~\eqref{eq:2} (that all correspond to equivalences at the level of
languages):
\[
  \begin{tikzcd}[column sep={14mm,between origins}]
    \catReachAutoLSet \ar[rr,bend left, hook,"E"] &\bot
    &\catAutoLSet\ar[rr,bend left, "\overline{\FPow}"]\ar[ll,bend
    left, "\Reach"] & \bot &\catAutoLRel\ar[ll,bend left,
    "\overline{\UPow}"] \ar[rr,bend left, "\overline{\UPowop}"] & \bot
    & \catAutoLSetop \ar[ll,bend left,
    "\overline{\FPowop}"]\ar[rr,bend left, "\Obs"] &\bot~ &
    \catObsAutoLSetop\ar[ll,bend left, hook,"E"]
  \end{tikzcd}
\]

A path in this diagram corresponds to a sequence of transformations of
automata. It happens that when~$\Obs$ is taken, the resulting
automaton is "observable", i.e., there is an injection from it to the
"final object". This property is preserved under the sequence of right
adjoints $\Reach\circ\overline{\UPow}\circ %\overline{\rev^\op}\circ
\overline{\FPowop}\circ E$. Furthermore, after application
of~$\Reach$, the automaton is also "reachable". This means that
applying the sequence
$\Reach\circ\overline{\UPow}\circ\overline{\FPowop}\circ
E\circ\Obs\circ\overline{\UPowop}$ %\circ\overline{\rev}$
to a non-deterministic automaton produces a deterministic and minimal
one for the same language.  We check for concluding that the sequence
$\Obs\circ\overline{\UPowop}$ is what is implemented by
$\codeterminize$, that the composite $\overline{\FPowop}\circ
E$ essentially transforms a backward deterministic "observable"
automaton into a non-deterministic one, and that finally
$\Reach\circ\overline{\UPow}$ is what is implemented by
$\determinize$. Hence, this indeed is "Brzozowski's algorithm".

\begin{rem}
  The composite of the two adjunctions in~\eqref{eq:2} is almost the
  adjunction of~\cite[Corollary~9.2]{BonchiBHPRS14} upon noticing that
  the category $\catAutoLSetop$ of
  $\Set^\op$-automata accepting a language
  $\langLSetop$ is isomorphic to the opposite of the category
  $\catAutoLSetrev$ of
  $\Set$-automata that accept the reversed language seen as functor
  $\langLSetop$. This observation in turn can be proved using the
  symmetry of the input category $\catI$.
\end{rem}

\subsection{Weighted Brzozowski's minimization algorithm}
\label{sec:w-Brz}

The weighted version of Brzozowski's minimization algorithm presented
in~\cite{BonchiBHPRS14} can also be explained in our framework using
the chain of adjunctions described in the next diagram. Given a
semiring $S$, a weighted language $L\colon A^*\to S$, can be modeled
either as a functor $\langL_{\SMod}\colon\catOword\to\SMod$ or,
equivalently, as a functor
$\langL_{\SMod^{\mathit{op}}}\colon\catOword\to\SMod^{\mathit{op}}$,
using the observation that
$\SMod(S,S)\cong\SMod^\mathit{op}(S,S)\cong S$. Notice that the
category $\catAutoLSModop$ is the opposite of the category of automata
accepting the reversed language $L^\mathit{rev}$, defined by
$L^\mathit{rev}(w)=L(w^\mathit{rev})$. The adjunction between $\SMod$
and its opposite obtained by taking the dual spaces, lifts by virtue
of Lemma~\ref{lem:lifting-adjunctions} to an adjunction between the
corresponding categories of automata, where the lifting
$\overline{S^{-}}$ is essentially reversing the automaton. Thus the
operations of the weighted Brzozowski's algorithm correspond to a path
$E\circ\Reach\circ\overline{S^{-}}\circ E\circ \Obs\circ
\overline{S^{-}}$ in the next diagram.
\[
  \begin{tikzcd}[column sep={20mm,between origins},row sep={25mm,between origins}]
    \catReachAutoLSMod \ar[rr,bend left, hook,"E"]
    &
    \bot
    &
    \catAutoLSMod\ar[rr,bend left, "\overline{S^{-}}"]\ar[ll,bend
    left, "\Reach"]\ar[d]
    &
    \bot
    % &\catAutoLRel\ar[ll,bend left,
    % "\overline{\UPow}"] \ar[rr,bend left, "\overline{\UPowop}"]
    % &
    % \bot
    &
    \catAutoLSModop \ar[ll,bend left,
    "\overline{S^{-}}"]\ar[rr,bend left, "\Obs"]\ar[d]
    &\bot~
    &
    \catObsAutoLSModop\ar[ll,bend left, hook,"E"]\\
    &
    &
    \SMod\ar[rr,bend left, "S^{-}"]
    &
    \bot
    &
    \SMod^{\mathit{op}}\ar[ll,bend left,
    "S^{-}"]
      &
      &
  \end{tikzcd}
\]

\section{Monoids for language recognition}
\label{sec:syntactic-monoids}

In this section we show that the notion of a syntactic monoid for a
given language $L\subseteq A^*$ fits in the functorial framework
introduced in this paper. We argue that the syntactic monoid can be
obtained using the generic principles outlined in
Section~\ref{sec:lang-auto-functors} by changing accordingly the input
category.  However, it is not the monoids recognizing a language that
will be modeled as $\Set$-valued functors, but rather "biaction
recognizers". We will prove that we have initial and final biactions
recognizing a language and that the minimal biaction recognizing a
language (obtained via an epi-mono factorization) can be in fact
equipped with a monoid structure and yields precisely the syntactic
monoid for that language.

We start with the definitions of monoid and biaction recognizers.

\begin{defi}
  \AP We call ""monoid recognizer"" a tuple
  $(\phi\colon A^*\to M, P\subseteq M)$ consisting of a monoid
  homomorphism $\phi$ and a subset $P$ of $M$. It recognizes the
  language $\{w\in A^*\mid \phi(w)\in P\}$. We say that a "monoid
  recognizer" $(\phi\colon A^*\to M, P\subseteq M)$ is a ""surjective
  monoid recognizer"" when the morphism $\phi$ is surjective. A
  morphism between "monoid recognizers"
  $(\phi\colon A^*\to M, P\subseteq M)$ and
  $(\phi'\colon A^*\to M', P'\subseteq M')$ is a monoid morphism
  $h\colon M\to M'$ such that $h\circ \phi=\phi'$ and $h^{-1}(P')=P$.
\end{defi}

\begin{defi}
  \AP An ""$A^*$-biaction"" is a set $X$ equipped with left and right
  $A^*$-actions\footnote{both denoted by $\cdot$ by abuse of notation}
  $\cdot\colon A^*\times X\to X$ and $\cdot\colon X\times A^*\to X$
  which commute, that is, $(u\cdot x)\cdot v=u\cdot(x\cdot v)$ for all
  $u,v\in A^*$ and $x\in X$. Morphisms of $A^*$-biactions are
  functions that are morphisms of both the left and right
  $A^*$-actions.
\end{defi}

\begin{defi}
  \AP An ""$A^*$-biaction recognizer"" is a tuple
  $(\phi\colon A^*\to X, P\subseteq X)$ where $\phi$ is a morphism of
  "$A^*$-biactions" and $P$ is a subset of $X$. We call the elements
  of $P$, the ""accepting@acceptingelements"" elements of $X$. The
  language recognized by $(\phi\colon A^*\to X, P\subseteq X)$ is the
  set $\{w\in A^*\mid \phi(w)\in P\}$. The "$A^*$-biaction recognizer"
  is called ""surjective@surjective $A^*$-biaction recognizer"" when
  $\phi$ is so.  A morphism between "$A^*$-biaction recognizers"
  $(\phi\colon A^*\to X, P\subseteq M)$ and
  $(\phi'\colon A^*\to X', P'\subseteq M')$ is an $A^*$-biaction
  morphism $h\colon X\to X'$ such that $h\circ \phi=\phi'$ and
  $h^{-1}(P')=P$.
\end{defi}

The proof of the next lemma is straightforward. The second part was
used in~\cite{GehrkePR16}.

\begin{lem}\label{lem:biactions-2-monoids}
  Any "monoid recognizer" is an "$A^*$-biaction
  recognizer". Conversely, any "surjective $A^*$-biaction recognizer"
  is a "surjective monoid recognizer".
\end{lem}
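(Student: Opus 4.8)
The plan is to unwind the definitions in both directions; the first direction is bookkeeping, and the second requires exactly one genuine verification, namely the well-definedness of a multiplication.

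\emph{(Monoid recognizers are biaction recognizers.)} Given a monoid recognizer $(\phi\colon A^*\to M,P\subseteq M)$, I would turn $M$ into an $A^*$-biaction via $u\cdot m=\phi(u)m$ and $m\cdot v=m\phi(v)$, using the multiplication of $M$. Associativity in $M$ yields $(u\cdot m)\cdot v=\phi(u)\,m\,\phi(v)=u\cdot(m\cdot v)$, so the two actions commute; and, since $A^*$ is an $A^*$-biaction under concatenation on both sides, the homomorphism identity $\phi(uwv)=\phi(u)\phi(w)\phi(v)$ says precisely that $\phi$ is a morphism of $A^*$-biactions. Since the underlying set $M$ and the subset $P$ are left untouched, the recognized language does not change, which settles the first claim.

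\emph{(Surjective biaction recognizers are surjective monoid recognizers.)} Now let $(\phi\colon A^*\to X,P\subseteq X)$ be a surjective $A^*$-biaction recognizer. I would define a multiplication on $X$ by $\phi(u)\cdot_X\phi(v):=\phi(uv)$; this uses surjectivity of $\phi$ to reach every pair of elements, and the only thing to check is independence of the chosen preimages. This is the heart of the argument: assuming $\phi(u)=\phi(u')$ and $\phi(v)=\phi(v')$, I would first read $uv$ as the left action of $u$ on $v\in A^*$ and use that $\phi$ is a morphism of left actions to get $\phi(uv)=u\cdot\phi(v)=u\cdot\phi(v')=\phi(uv')$; then read $uv'$ as the right action of $v'$ on $u\in A^*$ and use that $\phi$ is a morphism of right actions to get $\phi(uv')=\phi(u)\cdot v'=\phi(u')\cdot v'=\phi(u'v')$. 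Hence $\phi(uv)=\phi(u'v')$ and the product is well defined.

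With the multiplication in hand, associativity and the unit law for $\phi(\varepsilon)$ are transported from $A^*$ along the surjection $\phi$, so $X$ becomes a monoid and $\phi$ a monoid homomorphism by construction; retaining the same $P$ gives a monoid recognizer, surjective by hypothesis, recognizing the same language. As a consistency check one verifies that the left and right $A^*$-actions induced on $X$ by this monoid structure through $\phi$ agree with the original biaction actions, by writing an arbitrary element of $X$ as $\phi(v)$ and computing $\phi(u)\cdot_X\phi(v)=\phi(uv)=u\cdot\phi(v)$ (and symmetrically on the right), so the two constructions are mutually inverse on surjective recognizers. I expect the well-definedness step above to be the only real obstacle; the rest is a routine transport of the monoid axioms along a surjection.
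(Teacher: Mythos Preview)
Your proof is correct and supplies precisely the details the paper omits: the paper does not give a proof at all, stating only that it is straightforward. Your argument---equipping $M$ with the $A^*$-biaction via $\phi$ in one direction, and in the other defining $\phi(u)\cdot_X\phi(v):=\phi(uv)$ and checking well-definedness by alternating the left- and right-action compatibility of $\phi$---is exactly the intended elementary verification.
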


\AP In order to describe $A^*$-biactions as functors, we will consider
the category $\intro{\catIMon}$ described in the diagram below. Just
as the input category for word automata, $\catIMon$ has three objects:
$\objectLeft$, $\objectCenter$ and $\objectRight$.  The homsets in
this category can be intuitively described as follows:
\begin{itemize}
\item $\catIMon(\objectLeft,\objectCenter)$ and
  $\catIMon(\objectLeft,\objectRight)$ are both isomorphic to the set
  of finite words over $A$;
\item the sets $\catIMon(\objectCenter,\objectCenter)$ and
  $\catIMon(\objectCenter,\objectRight)$ consist of the finite words
  over $A$ with one ``hole''.
\end{itemize}
\AP Concretely, for each $w\in A^*$ we have morphisms
$\intro*\wmorph{w}\colon\objectLeft\to\objectCenter$ and
$\intro*\wfinmorph{w}\colon\objectLeft\to\objectRight$.  For every two
words $u,v\in A^*$ we have morphisms
$\intro*\uvmorph{u}{v}\colon\objectCenter\to\objectCenter$ and
$\intro*\uvfinmorph{u}{v}\colon\objectCenter\to\objectRight$.
\begin{equation}\label{eq:input-cat-mon}
  \begin{tikzcd}
    \objectLeft\arrow[r,"\wmorph{w}"]\arrow[rr,bend
    right,"\wfinmorph{w'}"] &
    \objectCenter\arrow[loop,looseness=6,swap,
    "\uvmorph{u}{v}"]\arrow[r,"\uvfinmorph{u'}{v'}"] & \objectRight
  \end{tikzcd}
\end{equation}
The composition defined as a substitution of the $\msquare$ symbol. We
define it formally as follows:
\begin{align*}
  \uvmorph{u}{v}\circ\wmorph{w}&=\wmorph{uwv}\colon\objectLeft\to\objectCenter\ ,&  
                                                                                   \uvmorph{u'}{v'}\circ\uvmorph{u}{v}&=\uvmorph{u'u}{vv'}\colon\objectCenter\to\objectCenter\ ,\\
  \uvfinmorph{u}{v}\circ\wmorph{w}&=\wfinmorph{uwv}\colon\objectLeft\to\objectRight\ ,&
                                                                                        \text{and}\qquad
                                                                                        \uvfinmorph{u'}{v'}\circ\uvmorph{u}{v}&=\uvfinmorph{u'u}{vv'}\colon\objectCenter\to\objectRight
\end{align*}

\begin{defi}
  \AP An ""$(\catIMon,\Set,1,2)$-automaton"" is a functor
  $\biactAutA\colon\catIMon\to\Set$ such that
  $\biactAutA(\objectLeft)=1$ and $\biactAutA(\objectRight)=2$. A
  morphism of "$(\catIMon,\Set,1,2)$-automata" is a natural
  transformation $\alpha\colon\biactAutA\to\biactAutB$ so that both
  $\alpha_{\objectLeft}$ and $\alpha_{\objectRight}$ are the identity
  morphisms on $1$, respectively~$2$.
\end{defi}

\AP Let $\intro{\catOMon}$ be the full subcategory of $\catIMon$ on
objects $\objectLeft$ and $\objectRight$. We denote by $\iota$ the
inclusion
\[
  \begin{tikzcd}
    \catOMon\arrow[r,hook,"\iota"] & \catIMon\,.
  \end{tikzcd}
\]
The language accepted by an "$(\catIMon,\Set,1,2)$-automaton"
$\biactAutA\colon\catIMon\to\Set$ is the composite functor
$\langL=\biactAutA\circ\iota\colon\catOMon\to\Set$. Just as for word
automata, the functor $\langL$ encodes a language $L\subseteq A^*$
consisting of the words $w\in A^*$ such that the map
$\langL(\wfinmorph{w})\colon 1\to 2$ is constant to $1$, that is,
$\langL(\wfinmorph{w})(0)=1$.

\begin{lem}\label{lem:biact-functors}
  The category of "$(\catIMon,\Set,1,2)$-automata" is equivalent to
  that of "$A^*$-biaction recognizers".
\end{lem}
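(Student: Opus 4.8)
The plan is to exhibit mutually inverse functors between the two categories; in fact this gives an isomorphism of categories, which in particular yields the claimed equivalence.

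First I would build a functor $\Phi$ from $A^*$-biaction recognizers to $(\catIMon,\Set,1,2)$-automata. Given a recognizer $(\phi\colon A^*\to X,\ P\subseteq X)$, let $\Phi(\phi,P)=\biactAutA$ be the functor with $\biactAutA(\objectLeft)=1$, $\biactAutA(\objectCenter)=X$, $\biactAutA(\objectRight)=2$, acting on arrows by $\biactAutA(\wmorph{w})(0)=\phi(w)$, $\biactAutA(\uvmorph{u}{v})(x)=u\cdot x\cdot v$, and by letting $\biactAutA(\uvfinmorph{u}{v})(x)=1$ exactly when $u\cdot x\cdot v\in P$ and $\biactAutA(\wfinmorph{w})(0)=1$ exactly when $\phi(w)\in P$. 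Checking that $\biactAutA$ is a functor amounts to matching the substitution composition of $\catIMon$ — e.g. $\uvmorph{u'}{v'}\circ\uvmorph{u}{v}=\uvmorph{u'u}{vv'}$ and $\uvmorph{u}{v}\circ\wmorph{w}=\wmorph{uwv}$ — with the biaction axioms and with the fact that $\phi$ is a biaction morphism; this is a routine verification. On morphisms, a recognizer morphism $h\colon X\to X'$ is sent to the natural transformation that is the identity at $\objectLeft$ and $\objectRight$ and is $h$ at $\objectCenter$; the naturality squares for the four families of arrows reduce precisely to $h\circ\phi=\phi'$, to $h$ being a biaction morphism, and to $h^{-1}(P')=P$.

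In the other direction I would build $\Psi$ by reading off, from an automaton $\biactAutA$, the carrier $X=\biactAutA(\objectCenter)$, the biaction $u\cdot x\cdot v:=\biactAutA(\uvmorph{u}{v})(x)$, the map $\phi(w):=\biactAutA(\wmorph{w})(0)$, and the accepting set $P:=\{x\in X\mid\biactAutA(\uvfinmorph{\varepsilon}{\varepsilon})(x)=1\}$. Functoriality makes this a genuine $A^*$-biaction, and $\phi$ is a biaction morphism because $\wmorph{uwv}=\uvmorph{u}{v}\circ\wmorph{w}$ in $\catIMon$. A natural transformation $\alpha$ between automata is sent to its component $\alpha_{\objectCenter}$, which the naturality squares force to be a morphism of recognizers.

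It remains to check that $\Phi$ and $\Psi$ are mutually inverse. That $\Psi\Phi$ is the identity is immediate from the definitions. For $\Phi\Psi$, the key observation is that a functor $\catIMon\to\Set$ taking the prescribed values at $\objectLeft$ and $\objectRight$ is completely determined by the data $(X,\ \mathrm{biaction},\ \phi,\ P)$: the only non-identity arrows of $\catIMon$ are the $\wmorph{w}$, $\wfinmorph{w}$, $\uvmorph{u}{v}$, $\uvfinmorph{u}{v}$, and the arrows landing in $\objectRight$ are recovered from $P$ and the biaction via $\uvfinmorph{u}{v}=\uvfinmorph{\varepsilon}{\varepsilon}\circ\uvmorph{u}{v}$ and $\wfinmorph{w}=\uvfinmorph{\varepsilon}{\varepsilon}\circ\wmorph{w}$. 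The one mildly delicate point worth flagging is that $\phi(\varepsilon)$ is forced to be central, $w\cdot\phi(\varepsilon)=\phi(\varepsilon)\cdot w$: on the automaton side this is exactly the equality of the two factorizations $\wmorph{w}=\uvmorph{w}{\varepsilon}\circ\wmorph{\varepsilon}=\uvmorph{\varepsilon}{w}\circ\wmorph{\varepsilon}$, and on the recognizer side it is the observation that a biaction morphism out of $A^*$ is determined by the image of the unit, which is automatically central. No step here is a real obstacle; the work is entirely in carefully tracking these factorizations.
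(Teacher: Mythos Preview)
Your proof is correct and follows essentially the same approach as the paper: both directions of the correspondence are set up via the same formulas (state space $X=\biactAutA(\objectCenter)$, biaction via $\uvmorph{u}{v}$, map $\phi$ via $\wmorph{w}$, accepting set via $\uvfinmorph{\varepsilon}{\varepsilon}$). Your treatment is in fact more careful than the paper's sketch---you handle morphisms explicitly, verify that the two constructions are mutually inverse (yielding an isomorphism rather than merely an equivalence), and flag the centrality of $\phi(\varepsilon)$, none of which the paper spells out.
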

\begin{proof}
  Consider an "$(\catIMon,\Set,1,2)$-automaton"
  $\biactAutA\colon\catIMon\to\Set$. Then, the set
  $Q=\biactAutA(\objectCenter)$ can be equipped with commuting left
  and right $A^*$-actions. Indeed, we define the left action
  $\cdot\colon A^*\times Q\to Q$ by
  $u\cdot q=\biactAutA(\uvmorph{u}{\varepsilon})(q)$ and the right
  action $\cdot\colon Q\times A^*\to Q$ by
  $q\cdot v=\biactAutA(\uvmorph{\varepsilon}{v})(q)$.  We define
  $\phi\colon A^*\to Q$ by $\phi(w)=\biactAutA(\wmorph{w})(0)$.

  One can check that $\biactAutA$ being a functor entails that these
  are well defined and commuting left and right actions and that
  $\phi$ is a morphism of biactions.  Furthermore, we define the
  subset of "accepting@acceptingelements" elements of $Q$ as the
  subset whose characteristic function is the morphism
  $\biactAutA(\uvfinmorph{\varepsilon}{\varepsilon})\colon Q\to 2$.

  Conversely, given an "$A^*$-biaction recognizer"
  $(\phi\colon A^*\to X, P\subseteq X)$ we define
  $\biactAutA\colon\catIMon\to\Set$ as follows. We define
  $\biactAutA(\objectCenter)=X$ and we put
  \begin{itemize}
  \item $\biactAutA(\wmorph{w})(0)=\phi(w)$;
  \item $\biactAutA(\uvmorph{u}{v})(x)=\phi((u\cdot x)\cdot v)$;
  \item $\biactAutA(\uvfinmorph{u}{v})(x)=\chi_P((u\cdot x)\cdot v)$;
  \end{itemize}
  where $\chi_P$ is the characteristic function of $P$.
\end{proof}

\begin{rem}
  The functors constructed in the proof of
  Lemma~\ref{lem:biact-functors} preserve the accepted language, that
  is, for each language $L\subseteq A^*$ we obtain an equivalence
  between the categories of "$(\catIMon,\Set,1,2)$-automata" accepting
  $L$, respectively of $A^*$-biaction recognizers for $L$.
\end{rem}

The following lemma immediately follows from
Corollary~\ref{cor:exist-Kan-suff-cond}.

\begin{lem}
  Given a language $\langL\colon\catOMon\to\Set$, the initial and
  final "$(\catIMon,\Set,1,2)$-automata" accepting $\langL$ exist and
  can be computed as left, respectively right Kan extensions of
  $\langL$ along $\iota\colon\catOMon\to\catIMon$. 
  
  Furthermore, the minimal "$(\catIMon,\Set,1,2)$-automaton" accepting $\langL$ is
  obtained via the factorization \[
    \begin{tikzcd}
      \Lan{\langL}{\iota}\ar[r,two heads] & \Min(\langL)\ar[r,tail] &
      \Ran{\langL}{\iota}\,.
    \end{tikzcd}
  \]
\end{lem}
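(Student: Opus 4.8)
The plan is to obtain the result as the special case $\catC=\Set$ of Corollary~\ref{cor:exist-Kan-suff-cond}. First I would record that the present situation is an instance of the framework of Section~\ref{sec:lang-auto-functors}: the category $\catIMon$ of~\eqref{eq:input-cat-mon} is small (three objects, with homsets that are sets of words, respectively of words-with-a-hole, over $A$), $\catOMon$ is by definition a full subcategory of it, and unwinding Definition~\ref{def:aut-accepts-lang} with $\catI=\catIMon$, $\catO=\catOMon$ and $\catC=\Set$ shows that $\catAutoL$ is precisely the category of $(\catIMon,\Set,1,2)$-automata accepting $\langL$: an object is a functor $\catIMon\to\Set$ that restricts to $\langL$ along $\iota$ (hence sends $\objectLeft$ to $1$ and $\objectRight$ to $2$), and a morphism is a natural transformation that becomes the identity after restriction along $\iota$, that is, whose components at $\objectLeft$ and $\objectRight$ are identities.

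It then remains to verify the three hypotheses that Corollary~\ref{cor:exist-Kan-suff-cond} imposes on $\catC=\Set$: completeness, cocompleteness, and the existence of a factorization system, for which one takes $(\Epi,\Mono)=(\text{surjections},\text{injections})$. All three are standard facts about $\Set$. The corollary then yields at once that the left Kan extension $\Lan{\langL}{\iota}$ exists and is the initial $(\catIMon,\Set,1,2)$-automaton accepting $\langL$, that the right Kan extension $\Ran{\langL}{\iota}$ exists and is the final one, and that the minimal automaton is obtained by factoring the unique arrow $\Lan{\langL}{\iota}\to\Ran{\langL}{\iota}$ as $\Lan{\langL}{\iota}\twoheadrightarrow\Min(\langL)\rightarrowtail\Ran{\langL}{\iota}$, which is exactly the displayed diagram.

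I do not expect any genuine obstacle; the only step that is not purely mechanical is identifying $\catAutoL$ with the category of $(\catIMon,\Set,1,2)$-automata accepting $\langL$, and this is immediate from the definitions. If desired, one could additionally make the two Kan extensions concrete via the pointwise formula — since $\catIMon$ has no arrows into $\objectCenter$ out of $\objectRight$, the colimit computing $\Lan{\langL}{\iota}(\objectCenter)$ collapses to a copower of $1$ indexed by the arrows $\wmorph{w}$, that is, the free biaction on $A^*$, and dually $\Ran{\langL}{\iota}(\objectCenter)$ is a power of $2$ — but this refinement is not needed for the statement.
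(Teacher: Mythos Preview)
Your proposal is correct and matches the paper's own treatment: the paper states that the lemma ``immediately follows from Corollary~\ref{cor:exist-Kan-suff-cond}'', which is exactly the instantiation with $\catC=\Set$ that you carry out. Your additional remarks (identifying $\catAutoL$ with the category of $(\catIMon,\Set,1,2)$-automata and the optional pointwise computation of the Kan extensions) are consistent with the concrete descriptions the paper gives right after the lemma.
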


Using the colimit computation of the left Kan extension
$\Lan{\langL}{\iota}$, we obtain the concrete description of the
initial automaton $\autAinitL$ accepting $\langL$. We have that
\[
  \autAinitL(\objectCenter)=\coprod\limits_{u\in A^*} 1 \simeq A^*
\]
and for all $w,u,v\in A^*$ we have
\begin{itemize}
\item $\autAinitL(\wmorph{w})\colon 1\to A^*$,
  $0\in 1\mapsto w\in A^*$;
\item $\autAinitL(\uvmorph{u}{v})\colon A^*\to A^*$, $w\mapsto uwv$;
\item $\autAinitL(\uvfinmorph{u}{v})\colon A^*\to 2$,
  $w\mapsto \langL(\wfinmorph{uwv})(0)$.
\end{itemize}

Dually, using the limit computation of the right Kan extension
$\Ran{\langL}{\iota}$, we obtain the concrete description of the final
automaton $\autAfinalL$ accepting $\langL$. We have that
\[
  \autAfinalL(\objectCenter)=\prod\limits_{u,v\in A^*} 2 \simeq
  2^{A^*\times A^*}
\]
and for all $w,u,v\in A^*$ we have
\begin{itemize}
\item $\autAfinalL(\wmorph{w})\colon 1\to 2^{A^*\times A^*}$,
  $0\in 1\mapsto \{(u,v)\in A^*\times A^*\mid
  \langL(\wfinmorph{uwv})(0)=1 \}$;
\item
  $\autAfinalL(\uvmorph{u}{v})\colon 2^{A^*\times A^*}\to 2^{A^*\times
    A^*}$, $B\mapsto \{(u',v')\in A^*\times A^*\mid (uu',v'v)\in B\}$;
\item $\autAfinalL(\uvfinmorph{u}{v})\colon 2^{A^*\times A^*}\to 2$,
  $B\mapsto
  \begin{cases}
    1, & (u,v)\in B\\
    0, & \text{ otherwise }
  \end{cases}
  $
\end{itemize}

We thus obtain a diagram similar to that for word automata below
Lemma~\ref{lem:the-minimization-wheel}.

\begin{equation}
  \label{eq:synt-monoid-set}
  \begin{tikzcd}[column sep={2cm,between origins},row
    sep={1.7cm,between origins}]
    & \coprod\limits_{u\in A^*} 1
    \arrow[rd,bend left, "\langL(u-v)?"]\arrow[d,two heads] &
    \\
    1
    \arrow[bend right,swap,"\langL(-w-)"]{rd} \arrow[ru,,bend left,
    "w"] \arrow[r,"i"] & \Min(\langL) \arrow[r,"f"] \arrow[d, tail] &
    2
    \\
    & \prod\limits_{(u,v)\in A^*\times A^*}2 \arrow[bend
    right]{ru}[swap]{(u,v)?} &
  \end{tikzcd}
\end{equation}

The unique natural transformation $\alpha$ from the initial automaton
$\autAinitL$ to the final automaton $\autAfinalL$ is determined by the
function $\alpha_{\objectCenter}\colon A^*\to 2^{A^*\times A^*}$
defined by
\[
  w\in A^*\mapsto\{(u,v)\in A^*\times A^*\mid uwv\in L\}
\]
The epi-mono factorization of this maps yields precisely the quotient
of $A^*$ by the syntactic congruence
\[
  w\sim_L w' \text{ if and only if } \forall u,v\in A^*\ uwv\in
  L\Leftrightarrow uw'v\in L\,,
\]
that is, the carrier set of the syntactic monoid for the language $L$.

\begin{thm}
  \AP The minimal automaton $\Min(\langL)$ corresponds to the
  syntactic monoid $\intro\Syn(L)$ of the language $L$.
\end{thm}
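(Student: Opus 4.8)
The plan is to make explicit the state set and the biaction carried by $\Min(\langL)$, to recognise that this biaction is the two-sided regular representation of a quotient \emph{monoid} of $A^*$, and to check that this quotient monoid — together with its projection and accepting subset — is exactly the syntactic monoid $A^*/{\sim_L}$. First I would compute the factorization: by Lemma~\ref{lem:lifting-fact-syst} the $(\Epi,\Mono)$-factorization of the unique morphism $\alpha\colon\autAinitL\to\autAfinalL$ in $\catAutoL$ is taken componentwise in $\Set$. On $\objectLeft$ and $\objectRight$ the component of $\alpha$ is the identity on $1$, respectively $2$, so the factorization is trivial there, and the only non-trivial component is $\alpha_{\objectCenter}\colon A^*\to 2^{A^*\times A^*}$, $w\mapsto\{(u,v)\mid uwv\in L\}$, whose surjection--injection factorization in $\Set$ has as middle object the image of $\alpha_{\objectCenter}$, which is in bijection with $A^*/{\sim_L}$, where $w\sim_L w'$ iff $\forall u,v\in A^*\ (uwv\in L\Leftrightarrow uw'v\in L)$. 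Hence $\Min(\langL)(\objectCenter)\cong A^*/{\sim_L}$; the map $\phi$ of the associated biaction recognizer (in the sense of Lemma~\ref{lem:biact-functors}) is the quotient $w\mapsto[w]$, and — using $\autAinitL(\uvfinmorph{\varepsilon}{\varepsilon})(w)=\langL(\wfinmorph{w})(0)$ — the accepting subset is $\{[w]\mid w\in L\}$, which is well defined since $\sim_L$ refines membership in $L$.

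Next I would transport the biaction structure. Since $\Min(\langL)$ is a quotient of $\autAinitL$ and $\autAinitL(\uvmorph{u}{v})(w)=uwv$, the left and right $A^*$-actions extracted by the recipe in the proof of Lemma~\ref{lem:biact-functors} are $u\cdot[w]=[uw]$ and $[w]\cdot v=[wv]$. The crucial step — and the heart of the theorem — is that $\sim_L$ is in fact a \emph{monoid} congruence: if $w\sim_L w'$ then $xwy\sim_L xw'y$ for all $x,y\in A^*$, by instantiating the defining quantifier of $\sim_L$ at the contexts $(ux,yv)$. Consequently concatenation descends to a well-defined associative unital multiplication on $M:=A^*/{\sim_L}$ with $[w][w']=[ww']$ and unit $[\varepsilon]$, for which the two actions above are precisely left and right multiplication in $M$, and $\phi\colon A^*\to M$ is a monoid homomorphism (surjective, so by Lemma~\ref{lem:biactions-2-monoids} a surjective monoid recognizer).

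Finally I would conclude by matching this data with the syntactic monoid. By definition $\Syn(L)$ is the monoid $A^*/{\sim_L}$ equipped with the canonical projection and the accepting set $\phi(L)$ — equivalently, the minimal monoid recognizer of $L$. Since the two previous steps identify $\Min(\langL)$, viewed as a biaction recognizer and endowed with the monoid structure just described, with exactly this tuple, the minimal automaton $\Min(\langL)$ corresponds to $\Syn(L)$, as claimed. I expect the only genuine subtlety to be the conceptual point of the second step — that a minimal biaction recognizer automatically carries a compatible monoid structure; the verification that $\sim_L$ is a two-sided congruence, and that the factorization yields precisely this congruence, is then routine.
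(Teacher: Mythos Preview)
Your argument is correct but follows a different route from the paper's. You proceed by explicit computation: you identify the factorization of $\alpha_{\objectCenter}$ with the quotient $A^*/{\sim_L}$, verify by hand that $\sim_L$ is a two-sided monoid congruence, and conclude by matching the resulting data with the standard definition of the syntactic monoid as $A^*/{\sim_L}$. The paper, by contrast, argues abstractly: it observes that $\Min(\langL)$, being an $\Epi$-quotient of $\autAinitL$, corresponds to a \emph{surjective} biaction recognizer, invokes Lemma~\ref{lem:biactions-2-monoids} to conclude that any such recognizer is automatically a surjective monoid recognizer, and then shows that this monoid divides every other monoid recognizer of $L$ by passing through Lemma~\ref{lemma:minimal} and the equivalence of Lemma~\ref{lem:biact-functors}. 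Your approach is more elementary and self-contained, and makes the identification with $A^*/{\sim_L}$ completely explicit; the paper's approach avoids any direct computation with $\sim_L$ and instead derives the universal (divisibility) property of the syntactic monoid as a byproduct of the categorical minimality of $\Min(\langL)$, which is arguably the more conceptually satisfying payoff of the functorial framework.
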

\begin{proof}
  The minimal "$(\catIMon,\Set,1,2)$-automaton" accepting $\langL$
  corresponds via the equivalence of Lemma~\ref{lem:biact-functors} to
  a "surjective $A^*$-biaction recognizer". Thus, using
  Lemma~\ref{lem:biactions-2-monoids}, we obtain that $\Min(\langL)$
  corresponds to a "surjective monoid recognizer" for $L$, $\Syn(L)$,
  with carrier set $\Min(\langL)(\objectCenter)$. Consider any other
  "monoid recognizer" $(\phi\colon A^*\to M, P\subseteq M)$ for
  $L$. By the first part of Lemma~\ref{lem:biactions-2-monoids}, any
  "monoid recognizer" can be seen as an "$A^*$-biaction recognizer",
  and thus as an "$(\catIMon,\Set,1,2)$-automaton" $\biactAutA$. By
  Lemma~\ref{lemma:minimal}, we know that $\Min(\langL)$ is isomorphic
  to $\Obs(\Reach(\biactAutA))$, that is, to a quotient of a
  sub-automaton of $\biactAutA$. One can easily check that
  $\Reach(\biactAutA)$ corresponds to a "surjective $A^*$-biaction
  recognizer", and thus to a "surjective monoid recognizer" for
  $L$. If follows that $M$ is the quotient of a submonoid of
  $\Syn(L)$.
\end{proof}

\section{Conclusion}
\label{sec:conclusion}
In this paper we propose a view of automata as functors and we showed
how to recast well understood classical constructions in this setting,
and in particular minimization of subsequential transducers.  We argue
that this perspective gives a unified view of language recognition and
syntactic objects.

In a similar vein to the developments of
Section~\ref{sec:syntactic-monoids}, we can obtain the syntactic
algebras recognizing languages for any algebraic theory over the
category $\Set$. The category $\catIMon$ is specific to the algebraic
theory of monoids: one can notice that the hom-sets
$\catIMon(\objectCenter,\objectRight)$, respectively
$\catIMon(\objectCenter,\objectRight)$ are isomorphic to ``contexts
(or terms) with one hole'' also known as linear unary polynomials in
universal algebra. In order to obtain a similar treatment for
syntactic algebras for an arbitrary algebraic theory, one should
change the input category and replace $\catIMon$ by a similar
category, but where the morphisms are either terms or linear unary
polynomials. A simpler input category could be obtained by considering
the notion of ``unary presentation'' developed
in~\cite{UrbatACM:Eilenberg17}.

We can go beyond regular languages and obtain in this fashion the
``syntactic space with an internal monoid'' of a possibly non-regular
language~\cite{GehrkePR16}. To this end one would just have to compute
the product and the coproduct in~\eqref{eq:synt-monoid-set} in the
category of Stone spaces.

We hope we can extend the framework to work with tree automata in
monoidal categories. We discussed mostly NFA determinization, but we
can obtain a variation of the generalized powerset
construction~\cite{SilvaEtAl:genPow} in this framework.

\bibliographystyle{plain} \bibliography{refs}

\begin{thebibliography}{10}

\bibitem{AdamekMMS:Well-pointed-coalg}
Jir{\'{\i}} Ad{\'{a}}mek, Stefan Milius, Lawrence~S. Moss, and Lurdes Sousa.
\newblock Well-pointed coalgebras.
\newblock {\em Logical Methods in Computer Science}, 9(3), 2013.

\bibitem{AdamekTrnkova89}
Ji{\v r}{\'\i} Ad{\'a}mek and V{\v e}ra Trnkov{\' a}.
\newblock {\em Automata and Algebras in Categories}.
\newblock 37. Springer Netherlands, New York, 1989.

\bibitem{AdamekBHKMS:2012}
Ji\v{r}\'{\i} Ad\'{a}mek, Filippo Bonchi, Mathias H\"{u}lsbusch, Barbara
  K\"{o}nig, Stefan Milius, and Alexandra Silva.
\newblock A coalgebraic perspective on minimization and determinization.
\newblock In {\em Proceedings of the 15th International Conference on
  Foundations of Software Science and Computational Structures}, FOSSACS'12,
  pages 58--73, Berlin, Heidelberg, 2012. Springer-Verlag.

\bibitem{ArbibManes75}
Michael~A. Arbib and Ernest~G. Manes.
\newblock Adjoint machines, state-behavior machines, and duality.
\newblock {\em Journal of Pure and Applied Algebra}, 6(3):313 -- 344, 1975.

\bibitem{Bainbridge74}
Edwin~S. Bainbridge.
\newblock Adressed machines and duality.
\newblock In {\em Category Theory Applied to Computation and Control},
  volume~25 of {\em Lecture Notes in Computer Science}, pages 93--98. Springer,
  1974.

\bibitem{Bezhanishvili2012}
Nick Bezhanishvili, Clemens Kupke, and Prakash Panangaden.
\newblock {\em Minimization via Duality}, pages 191--205.
\newblock Springer Berlin Heidelberg, Berlin, Heidelberg, 2012.

\bibitem{BonchiBHPRS14}
Filippo Bonchi, Marcello~M. Bonsangue, Helle~Hvid Hansen, Prakash Panangaden,
  Jan J. M.~M. Rutten, and Alexandra Silva.
\newblock Algebra-coalgebra duality in {B}rzozowski's minimization algorithm.
\newblock {\em {ACM} Trans. Comput. Log.}, 15(1):3:1--3:29, 2014.

\bibitem{BonchiBRS12}
Filippo Bonchi, Marcello~M. Bonsangue, Jan J. M.~M. Rutten, and Alexandra
  Silva.
\newblock {B}rzozowski's algorithm (co)algebraically.
\newblock In {\em Logic and Program Semantics}, volume 7230 of {\em Lecture
  Notes in Computer Science}, pages 12--23. Springer, 2012.

\bibitem{cassidy_hebert_kelly_1985}
C.~Cassidy, Michel Hébert, and Max~G. Kelly.
\newblock Reflective subcategories, localizations and factorization systems.
\newblock {\em Journal of the Australian Mathematical Society. Series A. Pure
  Mathematics and Statistics}, 38(3):287–329, 1985.

\bibitem{Choffrut79}
Christian Choffrut.
\newblock A generalization of {G}insburg and {R}ose's characterization of
  {G-S-M} mappings.
\newblock In {\em {ICALP}}, volume~71 of {\em Lecture Notes in Computer
  Science}, pages 88--103. Springer, 1979.

\bibitem{Choffrut2003}
Christian Choffrut.
\newblock Minimizing subsequential transducers: a survey.
\newblock {\em Theoretical Computer Science}, 292(1):131 -- 143, 2003.

\bibitem{DBLP:conf/mfcs/ColcombetP17}
Thomas Colcombet and Daniela Petri\c{s}an.
\newblock Automata in the category of glued vector spaces.
\newblock In {\em {MFCS}}, volume~83 of {\em LIPIcs}, pages 52:1--52:14.
  Schloss Dagstuhl - Leibniz-Zentrum fuer Informatik, 2017.

\bibitem{ColcombetPetrisan:CALCO2017}
Thomas Colcombet and Daniela Petri\c{s}an.
\newblock Automata minimization: a functorial approach.
\newblock In {\em {CALCO}}, volume~72 of {\em LIPIcs}, pages 8:1--8:16. Schloss
  Dagstuhl - Leibniz-Zentrum fuer Informatik, 2017.

\bibitem{GehrkePR16}
Mai Gehrke, Daniela Petri\c{s}an, and Luca Reggio.
\newblock The {S}ch{\"{u}}tzenberger product for syntactic spaces.
\newblock In {\em {ICALP}}, volume~55 of {\em LIPIcs}, pages 112:1--112:14.
  Schloss Dagstuhl - Leibniz-Zentrum fuer Informatik, 2016.

\bibitem{goguen1972}
Joseph~A. Goguen.
\newblock Minimal realization of machines in closed categories.
\newblock {\em Bull. Amer. Math. Soc.}, 78(5):777--783, 09 1972.

\bibitem{Hansen10}
Helle~Hvid Hansen.
\newblock Subsequential transducers: a coalgebraic perspective.
\newblock {\em Inf. Comput.}, 208(12):1368--1397, 2010.

\bibitem{Jacobs97atutorial}
Bart Jacobs and Jan Rutten.
\newblock A tutorial on (co)algebras and (co)induction.
\newblock {\em EATCS Bulletin}, 62:62--222, 1997.

\bibitem{Kalman}
Rudolf~E. Kalman, Peter~L. Falb, and Michael~A. Arbib.
\newblock {\em Topics in Mathematical System Theory}.
\newblock McGraw Hill Book Company, New York, Toronto, Sydney, 1969.

\bibitem{KerstanKW14}
Henning Kerstan, Barbara K{\"{o}}nig, and Bram Westerbaan.
\newblock Lifting adjunctions to coalgebras to (re)discover automata
  constructions.
\newblock In {\em {CMCS}}, volume 8446 of {\em Lecture Notes in Computer
  Science}, pages 168--188. Springer, 2014.

\bibitem{maclane}
Saunders MacLane.
\newblock {\em Categories for the Working Mathematician}, volume~5 of {\em
  Graduate Texts in Mathematics}.
\newblock Springer-Verlag, New York, 1971.

\bibitem{Rosenthal95}
Kimmo~.I. Rosenthal.
\newblock Quantaloids, enriched categories and automata theory.
\newblock {\em Appl Categor Struct}, 3(279), 1995.

\bibitem{RUTTEN20003}
Jan~J.M.M. Rutten.
\newblock Universal coalgebra: a theory of systems.
\newblock {\em Theoretical Computer Science}, 249(1):3 -- 80, 2000.
\newblock Modern Algebra.

\bibitem{SilvaEtAl:genPow}
Alexandra Silva, Filippo Bonchi, Marcello~M. Bonsangue, and Jan J. M.~M.
  Rutten.
\newblock Generalizing determinization from automata to coalgebras.
\newblock {\em Logical Methods in Computer Science}, 9(1), 2013.

\bibitem{UrbatACM:Eilenberg17}
Henning Urbat, Jir{\'{\i}} Ad{\'{a}}mek, Liang{-}Ting Chen, and Stefan Milius.
\newblock Eilenberg theorems for free.
\newblock In {\em {MFCS}}, volume~83 of {\em LIPIcs}, pages 43:1--43:15.
  Schloss Dagstuhl - Leibniz-Zentrum fuer Informatik, 2017.

\end{thebibliography}

\end{document}